\theoremstyle{plain}
\newtheorem{observation}[theorem]{Observation}
\newtheorem*{theorem*}{Theorem}
\newcommand{\setof}[2]{\left\{#1 : #2 \right\}}
\newcommand{\Setof}[2]{\bigl\{#1 : #2 \bigr\}}
\newcommand{\ol}[1]{\overline{#1}}
\DeclareMathOperator{\vc}{{\rm vc}}
\DeclareMathOperator{\mw}{{\rm mw}}
\DeclareMathOperator{\tc}{{\rm tc}}
\DeclareMathOperator{\cvdn}{{\rm cvdn}}
\DeclareMathOperator{\nd}{{\rm nd}}
\DeclareMathOperator{\pw}{{\rm pw}}
\DeclareMathOperator{\tw}{{\rm tw}}
\DeclareMathOperator{\cw}{{\rm cw}}
\DeclareMathOperator{\poly}{{\rm poly}}
\DeclareMathOperator{\defic}{\mathrm{def}}
\newcommand{\N}{\mathbb{N}}
\newcommand{\BigO}{\mathcal{O}}
\newcommand{\FPT}{{\sf FPT}\xspace}
\newcommand{\NP}{{\sf NP}\xspace}
\newcommand{\W}[1]{{\sf W[#1]}\xspace}
\newcommand{\Wh}[1]{{\sf W[#1]\mbox{-}\nobreak\hspace{0pt}hard}\xspace}
\newcommand{\XP}{{\sf XP}\xspace}
\newcommand{\up}{{\rm pos}}
\newcommand{\down}{{\rm neg}}
\newcommand{\guard}{{\rm guard}}
\newlength{\RoundedBoxWidth}
\newsavebox{\GrayRoundedBox}
\newenvironment{GrayBox}[1]%
   {\setlength{\RoundedBoxWidth}{.93\columnwidth}
    \def\boxheading{#1}
    \begin{lrbox}{\GrayRoundedBox}
       \begin{minipage}{\RoundedBoxWidth}}%
   {   \end{minipage}
    \end{lrbox}
    \begin{center}
    \begin{tikzpicture}%
       \node(Text)[draw=black!20,fill=white,rounded corners,inner sep=2ex,text width=\RoundedBoxWidth]
             {\usebox{\GrayRoundedBox}};
        \coordinate(x) at (current bounding box.north west);
        \node [draw=white,rectangle,inner sep=3pt,anchor=north west,fill=white]
        at ($(x)+(6pt,.75em)$) {\boxheading};
    \end{tikzpicture}
    \end{center}}
\newenvironment{defproblemx}[1]{\noindent\ignorespaces%
                                \FrameSep=6pt%
                                \parindent=0pt%
%                 \vspace*{-1em}
                \begin{GrayBox}{#1}%
                \begin{tabular*}{\columnwidth}{!{\extracolsep{\fill}}@{\hspace{.1em}} >{\itshape} p{1.5cm} p{0.82\columnwidth} @{}}%
            }{
                \end{tabular*}%
                \end{GrayBox}%
                \ignorespacesafterend
%                 \vspace*{-1em}
            }
\newcommand{\prob}[3]{%
  \begin{defproblemx}{#1}
    Input: & #2 \\
    Task: & #3
  \end{defproblemx}
}
\newcommand{\TSS}{\textsc{Target Set Selection}\xspace}
\newcommand{\MAJTSS}{\textsc{Majority Target Set Selection}\xspace}
\newcommand{\UNITSS}{\textsc{Uniform Target Set Selection}\xspace}
\newcommand{\indeg}{\ensuremath{\deg^+}}
\newcommand{\outdeg}{\ensuremath{\deg^-}}
\title{Target Set Selection in Dense Graph Classes}
\author{Pavel Dvořák}{Computer Science Institute, Charles University, Prague, Czech Republic}{koblich@iuuk.mff.cuni.cz}{}{The research leading to these results has received funding from the European Research Council under the European Union's Seventh Framework Programme (FP/2007-2013) / ERC Grant Agreement n.~616787.}%mandatory, please use full name; only 1 author per \author macro; first two parameters are mandatory, other parameters can be empty.
\author{Dušan Knop}{Department of Theoretical Computer Science, Faculty of Information Technology, \\ Czech Technical University in Prague, Prague, Czech Republic}{dusan.knop@fit.cvutcz}{}{Research partly supported by the OP VVV MEYS funded project \\CZ.02.1.01/0.0/0.0/16\_019/0000765 ``Research Center for Informatics''.}
\author{Tomáš Toufar}{Computer Science Institute, Charles University, Prague, Czech Republic}{toufi@iuuk.mff.cuni.cz}{}{}
\authorrunning{P. Dvořák, D. Knop, and T. Toufar}%mandatory. First: Use abbreviated first/middle names. Second (only in severe cases): Use first author plus 'et. al.'
\keywords{parameterized complexity, target set selection, dense graphs}%mandatory
\begin{document}
\maketitle
\begin{abstract}
  In this paper we study the \TSS problem from a parameterized complexity perspective.
  Here for a given graph and a threshold for each vertex the task is to find a set of vertices (called a target set) which activates the whole graph during the following iterative process.
  A vertex outside the active set becomes active if the number of so far activated vertices in its neighborhood is at least its threshold.

  We give two parameterized algorithms for a special case where each vertex has the threshold set to the half of its neighbors (the so-called \MAJTSS problem) for parameterizations by the neighborhood diversity and the twin cover number of the input graph.

  We complement these results from the negative side.
  We give a hardness proof for the \MAJTSS problem when parameterized by (a restriction of) the modular-width -- a natural generalization of both previous structural parameters.
  We also show the \TSS problem parameterized by the neighborhood diversity or by the twin cover number is \Wh{1} when there is no restriction on the thresholds.
\end{abstract}

\sloppy

\section{Introduction}
We study the \TSS problem (also called \textsc{Dynamic Monopolies}), using a notation according to Kempe et al.~\cite{KempeKT03}, from parameterized complexity perspective.
Let $G=(V,E)$ be a graph, $S\subseteq V$, and $f\colon V\to\N$ be a \emph{threshold function}.
The \emph{activation process} arising from the set $S_0 = S$ is an iterative process with resulting sets $S_0,S_1,\ldots$ such that for $i\ge 0$
\[
S_{i+1} = S_i \cup \setof{v\in V}{|N(v)\cap S_i|\ge f(v)},
\]
where by $N(v)$ we denote the set of vertices adjacent to $v$.
Note that after at most $n = |V|$ rounds the activation process has to stabilize -- that is, $S_n = S_{n+i}$ for all $i > 0$.
We say that the set $S$ is a \emph{target set} if $S_n = V$ (for the activation process $S = S_0,\dots,S_n$).

\prob{\TSS}
{A graph $G = (V, E)$, $f\colon V\to\N$, and a positive integer $b\in\N$.}
{Find a target set $S\subseteq V$ of size at most $b$ or report that there is no such set.}

We call the input integer $b$ the \emph{budget}.
The problem interpretation and computational complexity clearly may vary depending on the input function $f$.
There are three important settings studied -- namely constant, majority, and a general function.
If the threshold function~$f$ is the majority (i.e., $f(u) = \lceil \deg(u) / 2 \rceil$ for every vertex $u \in V$), we denote the problem as \MAJTSS.

\subparagraph{Motivation}
The \TSS problem was introduced by Domingos and Richardson~\cite{DomingosR01} in order to study influence of direct marketing on a social network.
It is noted therein that it captures e.g. \emph{viral marketing}~\cite{RichardsonD02}.
The \TSS problem is important also from the graph theoretic viewpoint, since it generalizes many well known \NP-hard problems on graphs.
These problems include
\begin{itemize}
  \item {\sc Vertex Cover}~\cite{Chen09} -- set $f(v) = \deg(v)$ for all $v\in V$.
  \item {\sc Irreversible $k$-Conversion Set}~\cite{DreyerR09}, {\sc $k$-Neighborhood Bootstrap Percolation}~\cite{BaloghBM10} -- the \TSS problem with all thresholds fixed to a constant value~$k$.
\end{itemize}

\subsection{Previous Results}
The \TSS problem received attention of researchers in theoretical computer science in the past years.
A general upper bound on the number of selected vertices under majority constraints is $|V|/2$~\cite{AckermanBW10}.
The \TSS problem admits an \FPT algorithm when parameterized by the vertex cover number~\cite{NichterleinNUW13}.
A $t^{\BigO(w)}\poly(n)$ algorithm is known, where $w$ is the tree-width of the input graph and $t$ is an upper-bound on the threshold function~\cite{BenZwiHLN11}, that is, $f(v)\le t$ for every vertex $v$.
This is essentially optimal, as the \TSS problem parameterized by the path-width is \Wh{1} for majority~\cite{ChopinNNW14} and general functions~\cite{BenZwiHLN11}.
The \TSS problem is solvable in linear time on trees~\cite{Chen09} and more generally on block-cactus graphs~\cite{ChiangHLWY13}.
The optimization variant of the \TSS problem is hard to approximate~\cite{Chen09} within a polylogarithmic factor.
For more and less recent results we refer the reader to a survey by Peleg~\cite{Peleg02}.
Cicalese et al.~\cite{Cicalese14,Cicalese15}, considered versions of the problem in which the number of rounds of the activation process is bounded.
For graphs of bounded clique-width, given parameters $a, b, \ell$, they gave polynomial-time algorithms (\XP algorithms) to determine whether there exists a target set of size $b$, such that at least $a$ vertices are activated in at most $\ell$ rounds.
Recently, Hartmann~\cite{Hartmann18} gave a single-exponential \FPT algorithm for \TSS parameterized by clique-width when all thresholds are bounded by a constant.

\subsection{Our Results}
In this work we generalize some results obtained by Nichterlein et al.~\cite{NichterleinNUW13}.
Chopin et al.~\cite{ChopinNNW14} essentially proved that in sparse graph classes (such as graphs with the bounded tree-width) parameterized complexity of the \MAJTSS problem is the same as for the \TSS problem\footnote{There is an \FPT algorithm for both problems parameterized by the vertex cover number, but the problems parameterized by the tree-width are \Wh{1}}.
For these graph classes, it is not hard to see that e.g. if the threshold for vertex $v$ is set above the majority (i.e., $f(v) > \lceil \deg(v)/2 \rceil$), then we may add $2\bigl(f(v) - \lceil \deg(v)/2 \rceil\bigr)$ vertices neighboring $v$ only and the parameter stays unchanged whereas the threshold of $v$ dropped to majority.
However, this is not true in general for dense graph classes.
We demonstrate this phenomenon for the parameterization by the neighborhood diversity.
We show an \FPT algorithm for a function which generalizes both constant and majority threshold functions.
We call this function uniform (and the corresponding problem \UNITSS), see the next section for a proper definition.
Roughly speaking, all vertices belonging to a same part of a graph decomposition must possess the same value of the threshold function.
In slight contrast to previous results, we derive an \FPT algorithm that, instead of the maximal threshold value $t$, depends on the size of the image of the threshold function for graphs having bounded neighborhood diversity.

\begin{theorem}\label{thm:MAJTSSisFPTwrtND}
There is an \FPT algorithm for the \UNITSS problem parameterized by the neighborhood diversity of the input graph.
\end{theorem}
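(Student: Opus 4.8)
The plan is to exploit the strong symmetry that neighborhood diversity imposes on the activation process and to reduce the search for an optimal target set to a bounded number of integer linear programs, each with only $\nd(G)$ variables, which can then be solved in \FPT time.

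Let $w=\nd(G)$ and let $C_1,\dots,C_w$ be the partition of $V$ into neighborhood classes, where $j\sim i$ denotes that $C_j$ (with $j\neq i$) is completely joined to $C_i$; each $C_i$ is either a clique or an independent set. The first observation is that all vertices of a class $C_i$ have the same degree, so under the majority rule they share a common threshold $f_i=\lceil\deg_i/2\rceil$, and moreover they are interchangeable. Hence a target set is, up to symmetry, described by a vector $(s_1,\dots,s_w)$ with $0\le s_i\le n_i:=|C_i|$, and its size is $\sum_i s_i$. The whole activation process can be tracked by the counts $a_i$ of active vertices per class, since for an inactive $v\in C_i$ the number of active neighbours equals $\sum_{j\sim i} a_j$ plus $a_i$ when $C_i$ is a clique, a quantity that is identical for all inactive vertices of $C_i$. (It is exactly this uniformity of thresholds within a class, guaranteed by the majority rule, that the general-threshold hardness result must destroy.)

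This yields the key structural fact: in any round either all currently inactive vertices of $C_i$ cross their threshold simultaneously, or none of them do. Thus each class jumps directly from its initial level $s_i$ to full activation $n_i$ in a single round, and the process stabilises after at most $w$ rounds. Consequently a successful run is captured by an \emph{activation schedule}: an ordered partition $R_0,R_1,\dots,R_m$ of the classes, where $R_0$ collects the classes taken fully into the target set ($s_i=n_i$) and $R_r$ ($r\ge 1$) collects the classes that become fully active in round $r$. For a class $i\in R_r$ with $r\ge 1$, activation in round $r$ means precisely that, with all classes of $R_0\cup\dots\cup R_{r-1}$ fully active and every other class contributing only its target level $s_j$, the threshold is met, i.e.
\[
 \sum_{j\in R_0\cup\dots\cup R_{r-1},\,j\sim i} n_j \;+\; \sum_{j\in R_r\cup\dots\cup R_m,\,j\sim i} s_j \;+\; [\,C_i\text{ clique}\,]\,s_i \;\ge\; f_i ,
\]
where both sums range over classes $j\neq i$ joined to $C_i$ and the within-class contribution of a clique is handled by the bracketed term.

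The algorithm then enumerates all schedules -- their number is the ordered Bell number in $w$, hence a function of $w$ alone -- and for each schedule solves the integer program with variables $s_1,\dots,s_w$, the box constraints $0\le s_i\le n_i$, the equalities $s_i=n_i$ for $i\in R_0$, one linear inequality of the above form for every class in $R_1,\dots,R_m$, and objective $\min\sum_i s_i$. Since all $n_j$ and $f_i$ are constants of the instance, every constraint is linear, so this is an ILP in $w$ variables, solvable in \FPT time by Lenstra's algorithm; we answer \emph{yes} iff some schedule yields optimum at most $k$. Correctness is the heart of the argument and is proved by a monotonicity/induction argument in both directions: from a successful run one reads off a feasible schedule by grouping classes according to the round in which they activate, and conversely, given a feasible schedule, an induction on $r$ shows that the real activation process has all of $R_0\cup\dots\cup R_r$ active by the end of round $r$, because the genuine active set only dominates the counts assumed in the inequalities. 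The main obstacle is precisely to pin down this equivalence -- in particular to treat classes activating in the \emph{same} round correctly, since they must contribute only their target level to one another, which is why the schedule is taken to be an ordered \emph{partition} rather than a permutation.
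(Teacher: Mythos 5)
Your proposal is correct and takes essentially the same route as the paper: both rest on the all-or-nothing activation of each neighborhood class, enumerate the possible activation orders (a number depending on $\nd(G)$ alone), and solve one integer linear program in $\nd(G)$ variables per order via Lenstra's algorithm. Your version is if anything slightly more careful than the paper's—you use ordered partitions rather than total orders to capture simultaneous activation, and you make the within-class contribution of clique types explicit, which the paper's ILP leaves implicit (its constraint also contains the typo $x_C$ in place of $x_D$).
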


However, the problem is hard if the threshold function is not restricted.
We prove {\sf W[1]}-hardness of the problem and also a lower bound of algorithm running time based on ETH.
The Exponential-Time Hypothesis (ETH) of Impagliazzo and Paturi~\cite{ImpagliazzoP01} asserts that there is no~\( 2^{o(n)} \) algorithm solving the \textsc{Satisfiability} problem, where $n$ is the number of variables.

\begin{theorem}\label{thm:TSSisWwrtND}
The \TSS problem is\/ \Wh{1} parameterized by the neighborhood diversity of the input graph.
Moreover, unless ETH fails, there is no algorithm of running time $f(k)n^{o(k/\log k)}$ solving \TSS, where $k$ is the neighborhood diversity of the input graph.
\end{theorem}

The complexity of the \MAJTSS problem is not resolved for parameterization by the cluster vertex deletion number~\cite{ChopinNNW14} (the number of vertices whose removal from the graph results in a collection of disjoint cliques).
We have a positive result for a slightly stronger parameterization - the twin cover number.
We assume that the edge set between the deletion set and the rest of the graph is somewhat homogeneous, that is, two adjacent vertices in the cluster graph (i.e., a disjoint union of cliques) are true twins.

\begin{theorem}\label{thm:MAJTSSisFPTwrtTC}
There is an \FPT algorithm for the \UNITSS problem parameterized by the twin cover number of the input graph.
\end{theorem}

\begin{theorem}\label{thm:TSSisWwrtTC}
The \TSS problem is\/ \Wh{1} parameterized by the twin cover number of the input graph.
Moreover, unless ETH fails, there is no algorithm of running time $f(k)n^{o(k/\log k)}$ solving \TSS, where $k$ is the twin cover number of the input graph.
\end{theorem}

Previous result~\cite{ChopinNNW14} implies that the parameterized complexity of the \TSS and the \MAJTSS problems is the~same in~graphs with bounded clique-width.
Of course, much more is known -- the proof therein shows that the \MAJTSS problem is \Wh{1} on graphs of the bounded tree-depth (even though only the tree-width is claimed).
We show that this is also the case for parameterization by the (restricted) modular-width which is a parameter generalizing both the neighborhood diversity and the twin cover number.

\begin{theorem}\label{thm:MAJTSSisWwrtMW}
The \MAJTSS problem is\/ \Wh{1} parameterized by the modular-width of the input graph.
Moreover, unless ETH fails, there is no algorithm of running time $f(k)n^{o(k/\log k)}$ solving \MAJTSS, where $k$ is the modular-width of the input graph.
\end{theorem}

\begin{figure}[bt]
\centering
  \newcommand{\distX}{1.9cm}
\newcommand{\distY}{.65cm}

\usetikzlibrary{positioning, calc, fit}

\begin{tikzpicture}[node  distance=\distY, font=\small]
  \tikzstyle{result}=[minimum width=\distX, fill, minimum height=\distY]
  \tikzstyle{FPTResult}=[result, color=green!40, text=black, draw=black]
  \tikzstyle{WhardResult}=[result, color=red!80, text=black, draw=black]
  \tikzstyle{pNPhardResult}=[result, color=black, text=white]
  \tikzstyle{unknown}=[result, color=white, text=black, draw=black]
  \tikzstyle{parameter}=[text width=\distX + .3cm, text centered]
  \tikzstyle{edge}=[ultra thick]
  \tikzstyle{missingEdge}=[ultra thick, dashed]
%  \node[FPTResult] (XX) {Th 1};
%  \node[WhardResult, right of=XX] (XX) {Th 3};
%  \node[pNPhardResult, right of=XX] (XX) {Th 3};

  \newcommand{\trojice}[9]{
    \node[parameter] at #9 (ParLabel) {#1};
    \node[#2] at ($(ParLabel.north)!0.5!(ParLabel.south) - (0,0.8cm)$) (ResConst) {#3};
    \node[#4, below of=ResConst] (ResMAJ) {#5};
    \node[#6, below of=ResMAJ] (ResGeneral) {#7};

    \node[fit=(ParLabel)(ResConst)(ResMAJ)(ResGeneral), draw] (#8) {};
  }

  \trojice{Vertex Cover}{FPTResult}{\cite{ChopinNNW14}}{FPTResult}{\cite{ChopinNNW14}}{FPTResult}{\cite{ChopinNNW14}}{VC}{(0,0)}

  \trojice{Neighborhood Diversity}{FPTResult}{Thm \ref{thm:MAJTSSisFPTwrtND}}{FPTResult}{Thm \ref{thm:MAJTSSisFPTwrtND}}{WhardResult}{Thm~\ref{thm:TSSisWwrtND}}{ND}{(3.5,-1.7)}

  \trojice{Twin Cover}{FPTResult}{Thm \ref{thm:MAJTSSisFPTwrtTC}}{FPTResult}{Thm \ref{thm:MAJTSSisFPTwrtTC}}{WhardResult}{Thm \ref{thm:TSSisWwrtTC}}{TC}{(3.5,1.7)}

  \trojice{Modular Width}{FPTResult}{\cite{Hartmann18}}{WhardResult}{Thm~\ref{thm:MAJTSSisWwrtMW}}{WhardResult}{Thm~\ref{thm:MAJTSSisWwrtMW}}{MW}{(7,0)}

 % \trojice{Shrub Depth}{FPTResult}{\cite{Hartmann18}}{WhardResult}{\cite{ChopinNNW14}, Thm~\ref{thm:MAJTSSisPNPwrtSD}}{WhardResult}{\cite{ChopinNNW14}, Thm~\ref{thm:MAJTSSisPNPwrtSD}}{SD}{(10.5,2)}

  \draw[edge] (VC) -- (ND);
  \draw[edge] (VC) -- (TC);
  \draw[edge] (TC) -- (MW);
  \draw[edge] (ND) -- (MW);

  %\draw[missingEdge] (MW) -- (SD);
\end{tikzpicture}
  \caption{A map of considered parameterizations and specializations of the \TSS problem. The three boxes represent from the top the constant threshold \TSS, \MAJTSS, and \TSS problems. Green (light gray) boxes represent \FPT algorithms, red (dark gray) boxes represent \W{1}-hardness result.}
\end{figure}

\section{Preliminaries}\label{sec:StructuralGraphParameters}
In this section we give formal definitions of several graph parameters used in this work.
To get better acquainted with these parameters and their relations, we provide a Hasse diagram of the considered parameters in Figure~\ref{fig:parameterMap}.

%\begin{definition}[Vertex cover]\label{def:vc}
For a graph $G = (V,E)$ the set $U\subseteq V$ is called a \emph{vertex cover} of $G$ if for every edge $e\in E$ it holds that $e\cap U\neq\emptyset.$
The \emph{vertex cover number} of a graph, denoted as $\vc(G)$, is the least integer $k$ for which there exists a vertex cover of size $k$.

\begin{figure}[bt]
  \begin{minipage}{0.28\textwidth}
    \usetikzlibrary{arrows,positioning,backgrounds,fit,shapes,calc,scopes}

\begin{tikzpicture}[]
  % arrow styles
  \tikzstyle{arrowBasic} = [->, >=stealth, very thick]
  \tikzstyle{linDep} = [arrowBasic]
  \tikzstyle{expDep} = [arrowBasic, gray!70]

  % definition of nodes
  \node (cw) {$\cw$};

  %\node[below left of=cw] (sd) {sd};
  \node[below of=cw] (mw) {$\mw$};
  \node[right of=mw] (tw) {$\tw$};

  \node[below of=mw] (rmw) {rmw};
  \node[left of=rmw] (cvd) {$\cvdn$};

  \node[below of=rmw] (nd) {$\nd$};
  \node[left of=nd] (tc) {$\tc$};
  \node[right of=nd] (pw) {$\pw$};

  \node[below of=nd] (vc) {$\vc$};

  % definition of arrwos
  \draw[expDep] (vc) -- (nd);
  \draw[linDep] (vc) -- (tc);
  \draw[linDep] (vc) -- (pw);

  \draw[linDep] (pw) -- (tw);
  \draw[expDep] (tc) -- (rmw);
  \draw[linDep] (nd) -- (rmw);
  \draw[linDep] (tc) -- (cvd);
  %\draw[expDep] (cvd) -- (sd);
  \draw[linDep] (rmw) -- (mw);
  \draw[expDep] (tw) -- (cw);
  %\draw[linDep] (sd) -- (cw);
  %\draw[expDep] (rmw) -- (sd);
  \draw[expDep] (cvd) -- (cw);
  \draw[linDep] (mw) -- (cw);

  % backgrounds
  \begin{scope}[on background layer]
    \node[fill=yellow!20, fit=(vc)(pw)(tw), rounded corners] {};
  \end{scope}

  \begin{scope}[on background layer]
    \node[fill=orange!50, fit=(nd)(cvd)(tc)(cw), rounded corners] {};
  \end{scope}
\end{tikzpicture}
  \end{minipage}\hfill
  \begin{minipage}{0.7\textwidth}
    \caption{A map of the considered parameters.
	A black arrow stands for a linear upper bound, while a gray arrow stands for an exponential upper bound.
	That is, if a graph $G$ has $\vc(G)\le k$, then \mbox{$\nd(G)\le 2^k + k$}.
   } \label{fig:parameterMap}
  \end{minipage}
\end{figure}

As the vertex cover number is (usually) too restrictive, many authors focused on defining other (i.e., weaker) structural parameters.
Three most well-known parameters of this kind are the path-width, the tree-width (introduced by Robertson and Seymour~\cite{RS86:GMII}), and the clique-width (introduced by Courcelle et al.~\cite{CMR98}).
Classes of graphs with the bounded tree-width (respectively the path-width) are contained in the so-called sparse graph classes.

There are (more recently introduced) structural graph parameters which also generalize the vertex cover number but, in contrast with {e.g.}\ the tree-width, these parameters focus on dense graphs.
First, up to our knowledge, of these parameters is the \emph{neighborhood diversity} defined by Lampis~\cite{Lampis12}.

\subparagraph{Neighborhood Diversity}
We say that two distinct vertices $u,v$ are of the same \emph{neighborhood type} if they share their respective neighborhoods, that is, when ${N(u)\setminus\{v\} = N(v)\setminus\{u\}}.$

\begin{definition}[Neighborhood diversity~\cite{Lampis12}]\label{def:nd}
A graph $G = (V,E)$ has the \emph{neighborhood diversity} at most $w$ ($\nd(G)\le w$) if there exists a decomposition $\mathcal{D}_{\nd} = (C_i)^w_{i=1}$ of\/ $V = C_1 \overset{\cdot}{\cup} \cdots \overset{\cdot}{\cup} C_w$ (we call the sets $C_i$ \emph{types}) such that all vertices in a type have the same neighborhood type (i.e., are twins).
\end{definition}

Note that every type induces either a clique or an independent set in $G$ and two types are either joined by a complete bipartite graph or no edge between vertices of the two types is present in $G.$
Thus, we use a notion of a \emph{type graph}, that is, a graph~$T_G$ representing the graph $G$ and its neighborhood diversity decomposition in the following way.
The vertices of the type graph $T_G$ are the types $C_1,\dots,C_w$ and two such vertices are joined by an edge if all the vertices of corresponding types are adjacent.
It is possible to compute the neighborhood diversity of a graph in linear time~\cite{Lampis12}.

\subparagraph{Twin Cover}
If two vertices $u,v$ have the same neighborhood type and $e = \{u,v\}$ is an edge of the graph, we say that $e$ is a \emph{twin edge}.

\begin{definition}[Twin cover number~\cite{Ganian11}]
\label{def:tc}
A set of vertices $T\subseteq V$ is a \emph{twin cover} of a graph $G=(V,E)$ if for every edge $e\in E$ either $T \cap e \neq \emptyset$ or $e$ is a twin edge.
We say that $G$ has the \emph{twin cover number} $t$ (${\tc(G) = t}$) if the size of a minimum twin cover of $G$ is $t.$
\end{definition}

Note that after removing $T$ from the graph $G$, the resulting graph is a disjoint union of cliques -- we call them \emph{twin cliques}.
Moreover, for every vertex $v$ in $T$ and a twin clique $C$ it holds that $v$ is either adjacent to every vertex in $C$ or to none of them.
A \emph{twin cover decomposition} $\mathcal{D}_{\tc} = (C_i)^\nu_{i = 1}$ of a graph $G$ is a partition of $V(G)$ such that each $C_i$ is either a vertex of the twin cover or a twin clique.

Note that the twin cover number can be upper-bounded by the vertex cover number.
The structure of graphs with bounded twin cover is very similar to the structure of graphs with bounded vertex cover number.
Thus, there is a hope that many of known algorithms for graphs with bounded vertex cover number can be easily turned into algorithms for graphs with bounded twin cover number~\cite{Ganian11}.

\subparagraph{Uniform Threshold Function}
As it is possible to compute the neighborhood diversity (or the twin cover) decomposition in polynomial time (or \FPT-time, respectively), we may assume that the decomposition is given in the input.
Given a decomposition $\mathcal{D}$ ($\mathcal{D}_{\nd}$ or $\mathcal{D}_{\tc}$) a threshold function $f\colon V(G) \to \N$ is \emph{uniform with respect to $\mathcal{D}$} if $f(u) = f(v)$ for every $u,v\in C$ and every $C\in\mathcal{D}$.
Observe that this notion generalizes the previously studied model~\cite{ChopinNNW14} in which the threshold function is required to satisfy $f(u) = f(v)$ whenever $|N(u)| = |N(v)|$, since this indeed holds if $u,v \in C$ and $C \in \mathcal{D}$.
It is not hard to see that the uniform function generalizes both the constant and the majority functions for the twin cover number and the neighborhood diversity.

Moreover, if $f(v)$ is bounded by a constant $c$ for all $v\in V(G)$, then there exists $\mathcal{D_{\nd}}$ with $| \mathcal{D}_{\nd} | \le c\cdot\nd(G)$ such that $f$ is uniform with respect to $\mathcal{D_{\nd}}$.
We stress here that this construction is not legal for the twin cover decompositions.
\UNITSS is a variant of \TSS, where the input instance $\left( G, f, b, \mathcal{D} \right)$ is restricted in such a way that the function $f$ is uniform with respect to $\mathcal{D}$.

\subparagraph{Modular-width}
Both the neighborhood diversity and the twin cover number are generalized by the modular-width.
Here we deal with graphs created by an algebraic expression that uses the following four operations:
\begin{enumerate}
  \item Create an isolated vertex.
  \item The \emph{disjoint union} of two graphs, that is from graphs ${G = (V,E)}, {H = (W, F)}$ create a graph ${(V\cup W, E\cup F)}$.
  \item The \emph{complete join} of two graphs,
that is from graphs $G = (V,E), H = (W, F)$ create a graph with vertex set $V\cup W$ and edge set ${E\cup F\cup \Setof{\{v,w\}}{v\in V, w\in W}}$. Note that the edge set of the resulting graph can be also written as ${E\cup F\cup (V\times W)}.$
  \item The \emph{substitution operation} with respect to a template graph $T$ with vertex set ${\{v_1,v_2,\dots,v_k\}}$ and graphs ${G_1,G_2,\dots,G_k}$ created by an algebraic expression; here $G_i = (V_i, E_i)$ for $i =1,2,\ldots,k$. The substitution operation, denoted by ${T(G_1,G_2, \dots, G_k)},$ results in the graph on vertex set ${V = V_1\cup V_2\cup\dots\cup V_k}$ and edge set
\[
  E = E_1\cup E_2\cup\dots\cup E_k\cup \bigcup_{\{v_i,v_j\}\in E(T)} \Setof{\{u,v\}}{ u\in V_i, v\in V_j} \,.
\]
\end{enumerate}

\begin{definition}[Modular-width~\cite{GLO13}]
\label{def:mw}
Let $A$ be an algebraic expression that uses only operations {\rm 1--4} above.
The \emph{width} of the expression $A$ is the maximum number of operands used by any occurrence of operation $4$ in $A$.
The \emph{modular-width} of a graph $G$, denoted $\mw(G)$, is the least positive integer $k$ such that $G$ can be obtained from such an algebraic expression of width at most $k.$
\end{definition}
An algebraic expression of width $\mw(G)$ can be computed in linear time~\cite{TCHP08}.

\subparagraph{Restricted Modular-width}
We would like to introduce here a restriction of the modular-width that still generalizes both the neighborhood diversity and the twin cover number.
The algebraic expression used to define a graph $G$ may contain the substitution operation at most once and if it contains the substitution operation it has to be the last operation in the expression.
However, there is no limitation for the use of operations {\rm 1--3}.

\section{Positive Results}
In this section we give proofs of Theorem~\ref{thm:MAJTSSisFPTwrtND} and~\ref{thm:MAJTSSisFPTwrtTC}.
In the first part we discuss the crucial property of dense structural parameters we study here -- the uniformity of neighborhoods.
This, opposed to e.g. the cluster vertex deletion number, allows us to design parameterized algorithms.
In this section by a decomposition $\mathcal{D}$ we mean a neighborhood diversity decomposition or a twin cover decomposition.

\begin{lemma}\label{lem:MAJtrueActivation}
Let $G = (V,E)$ be a graph, $\mathcal{D}$ be a decomposition of $G$, $S\subseteq V$ be a target set, and $f$ be a uniform threshold function with respect to $\mathcal{D}$.
Let $C \in \mathcal{D}$ and $S_0 = S, S_1, \ldots$ be an activation process arising from $S$.
There exists $j_C \in \N_0$ such that:
\begin{enumerate}
  \item For all $i < j_C$ holds that $S_i \cap C = S_0 \cap C$.
  \item For all $i \geq j_C$ holds that $S_i \cap C = C$.
\end{enumerate}
%Moreover, there exists $j$ with $j \in \N_0$ such that for $C$ the first item applies in rounds $0,\ldots, j$ and the second in rounds $j+1,\ldots$.
\end{lemma}
\begin{proof}
For $i=0$ clearly holds Part 1.
Moreover, if $S_0 \cap C = C$, then $j_C = 0$ and the lemma holds.

Now suppose there is $u \in V(C) \setminus S_0$ which is activated in the round $j$.
All vertices in $V(C) \setminus S_0$ have the same threshold as $v$ and the same neighborhood.
Thus, all vertices in $V(C) \setminus S_0$ is activated in the round $j$ as well and we can define $j_C$ as $j$.
\end{proof}

Let $C \in \mathcal{D}$.
For a threshold function $f$ which is constant on $C$ we define $f'(C)$ as $f(v)$ for arbitrary vertex $v$ in $C$.
We say that $C$ is \emph{activated in round $j_C$}, given by Lemma~\ref{lem:MAJtrueActivation}.
We denote $a^S_i(v)$ the number $|S_{i-1} \cap N(v)|$, i.e., the number of active neighbors of $v$ in the round $i-1$ in the activation process arising from the set $S$.
Thus, a vertex $v$ is activated in the first round $i$ when $a^S_i(v) \geq f(v)$ holds but $a^S_{i-1}(v) < f(v)$.
We can suppose that for every vertex $v \in V(G)$ holds that $f(v) \leq \deg(v)$, otherwise there is clearly no target set for $G$.

\subsection{Uniformity and Twin Cover}
In this subsection we present an algorithm for \UNITSS parameterized by the twin cover number.

\subparagraph{Trivial Bounds on the Minimum Target Set}
Let $G = (V,E)$ be a graph with a twin cover $T$ of size $t$ and let $C_1, C_2, \ldots, C_q$ be the twin cliques of $G$.
For a twin clique $C$ by $N(C)$ we denote the common twin cover neighborhood, that is, $N(v)\cap T$ for any $v\in C$.
We show there is a small number of possibilities how the optimal target set can look like.
Let $b_{C} = \max\bigl(f'(C) - |N(C)|, 0\bigr)$ for a twin clique $C$.
Observe that we need to select at least $b_C$ vertices of the twin clique $C$ to any target set.

The first preprocessing we do here is with twin cliques~$C$ having~$N(C) = \emptyset$.
Clearly, since \TSS is solvable in polynomial time on a clique~$C$, we can remove~$C$ from~$G$ and reduce the budget~$b$ accordingly.
Thus, from now we assume $N(C) \neq \emptyset$ for all twin cliques~$C$.

\begin{observation}
\label{obs:OptimalSetTC}
If the minimum target set of $G$ has size $s$, then $B \le s \le B  + t$ for $B = \sum_{i = 1}^q b_{C_i}$.
\end{observation}
\begin{proof}
Let $S$ be a target set for $G$ of size $s$.
Suppose there is a twin clique $C$ such that $\bigl|S \cap V(C)\bigr| = p < b_C$.
It means that $b_C > 0$.
Let $v \in C \setminus S$.
Note that $p < |V(C)|$ as $f'(C) \leq |C| + |N(C)|$.
Thus, the vertex $v$ exists.
For the vertex $v$ it holds that $a^S_i(v) \le p + |N(C)|$ for every round $i$ of the activation process.
Thus, the vertex $v$ is never activated because $p + |N(C)| < b_C + |N(C)| = f'(C)$ and $S$ is not a target set.
On the other hand, if we put $b_C$ vertices from each twin clique $C$ into a set $S'$, then the set $S' \cup T$ is a target set because every vertex not in $S'$ is activated in the first round.
\end{proof}

\subparagraph{Structure of the Solution}
Let $(G,f,b,\mathcal{D}_{\tc})$ be an instance of \UNITSS with $\tc(G) = t$.
By Observation~\ref{obs:OptimalSetTC}, if $b < \sum b_C$, then we automatically reject.
On the other hand, if $b \geq t + \sum b_C$, then we automatically accept.
Let $w = b - \sum b_C$.
Thus, to find a target set of size $b$ we need to select $w$ \emph{excess vertices} from the twin cliques and the twin cover.
We will show there are at most $g(t)$ interesting choices for these $w$ excess vertices for some computable function $g$ and those choices can be efficiently generated.
Since we can check if a given set $S \subseteq V(G)$ is a target set in polynomial time, we conclude there is an {\FPT} algorithm for \UNITSS.

We start with an easy preprocessing.
Let $C$ be a twin clique with $b_C > 0$.
We select $b_C$ vertices $V' \subseteq C$ and remove them from the graph $G$.
We also decrease the threshold value by $b_C$ of every vertex which was adjacent to $V'$ (recall that vertices in $V'$ have the same neighborhood type, thus any vertex adjacent to some vertex in $V'$ is adjacent to all vertices in $V'$).
Formally, we get an equivalent instance $(G_1,f_1, b - b_C, \mathcal{D}'_{\tc})$, where $G_1$ is $G$ without vertices $V'$, $\mathcal{D}'_{\tc}$ is $\mathcal{D}_{\tc}$ restricted to $V(G_1)$ and
\[
 f_1(v) = \begin{cases}
       f(v) & v \not\in N_G(V') \\
       f(v) - b_C & v \in N_G(V').
       \end{cases}
\]
It is easy to see that the instances $(G,f,b,\mathcal{D}_{\tc})$ and $(G_1,f_1,b-b_C,\mathcal{D}'_{\tc})$ are equivalent, since any target set for $G$ needs at least $b_C$ vertices in the twin clique $C$ due to Observation~\ref{obs:OptimalSetTC}.
Note that the function $f_1$ is uniform with respect to $\mathcal{D}'_{\tc}$; which follows from the same fact as for the function~$f$.
We repeat this process for all twin cliques.
Furthermore, if there is a vertex~$v$ with~$f_1(v) \le 0$, we delete it and decrease~$f_1(w)$ by~1 for each~$w \in N(v)$.
From now on, we suppose that the instance $(G,f',b,\mathcal{D}_{\tc})$ is already preprocessed.
Thus, for every twin clique $C$ it holds that $b_C = 0$ and $f'(C) \leq |N(C)| \leq t$.

We say that a twin clique $C$ is of a \emph{type} $(Q,r)$ for $Q\subseteq T$ and $1 \le r \leq t$ if $Q = N(C)$ and $f'(C) = r$.
%Two twin cliques $C$ and $D$ are of the same type if $N(C) = N(D)$ and $f'(C) = f'(D)$.
Note that there are at most $t\cdot 2^t$ distinct types of the twin cliques.

We start to create a possible target set $S$ of size $b$.
We add $w_1$ (for some $w_1 \leq w$) vertices from the twin cover $T$ to $S$ (there are at most $2^t$ such choices).
Now we need to select $w_2 = w - w_1$ excess vertices from twin cliques to $S$.

The number of the twin cliques of one type may be large.
Thus, for the twin cliques we need some more clever way than try all possibilities.
The intuition is that if we want to select some excess vertices from a clique of type $(Q,r)$ it is a ``better'' choice to select the vertices from large cliques of type $(Q,r)$.
We assign to each type $(Q,r)$ a number $w_{(Q,r)}$ how many excess vertices would be in twin cliques of type $(Q,r)$.
We prove that it suffices to distribute $w_{(Q,r)}$ excess vertices among the $w_{(Q,r)}$ largest twin cliques of the type $(Q,r)$.

\begin{definition}
\label{def:leaky}
 Let $C_1,\dots,C_p$ be twin cliques of type $(Q,r)$ ordered by the size in a descending order, i.e., for all $1 \le i < p$ it holds that $|C_i| \ge |C_{i+1}|$.
 We say that a target set has a \emph{hole} $(C_i, C_j)$ for $j > i$ if $|S \cap C_i| = 0$ and $|S \cap C_{j}| \ge 1$.
 A target set is \emph{$(Q,r)$-leaky} if it has a hole and it is \emph{$(Q,r)$-compact} otherwise.
\end{definition}
Our goal is to prove that if there is a target set $S$ which is $(Q,r)$-leaky, then there is also a target set $R$ which is $(Q,r)$-compact and $|R| = |S|$.
After that we prove that there is only small number of compact targets, thus we can try all of them.

\begin{lemma}
\label{lem:leaky}
 Suppose $S$ is a target set for a graph $G$ with a threshold function~$f$ and $S$ is $(Q,r)$-leaky for some twin clique type $(Q,r)$.
 Then, there is a target set $\tilde{S}$ such that:
 \begin{enumerate}
  \item It holds that $|\tilde{S}| = |S|$.
  \item The sets $\tilde{S}$ and $S$ differ only at the twin cliques of the type $(Q,r)$.
  \item The set $\tilde{S}$ is $(Q,r)$-compact.
 \end{enumerate}
\end{lemma}

As the proof is a bit technical, we postpone it after we finish the algorithm using the above lemma.

\subparagraph{Finishing the Algorithm}
By Lemma~\ref{lem:leaky} we know that if there is a $(Q,r)$-leaky target set $S$, then there is a $(Q,r)$-compact target set $\tilde{S}$ of the same size.
Moreover, sets $S$ and $\tilde{S}$ differ only at the twin cliques of the type $(Q,r)$.
Thus, if we repeat the procedure for all types, we get a target set without any hole.
To summarize how to distribute $w$ excess vertices:
\begin{enumerate}
\item Pick $w_1$ vertices from the twin cover $T$, in total $2^t$ choices.
\item Distribute $w_2 = w - w_1$ excess vertices among $t\cdot2^t$ types of the twin cliques, in total $(t\cdot2^t)^t = 2^{\BigO(t^2)}$ choices.
\item Distribute $w_{(Q,r)}$ excess vertices among the $w_{(Q,r)}$ largest cliques of type $(Q, r)$, in total $t^t$ choices.
\end{enumerate}
By this we create $2^{\BigO(t^2)}$ candidates for a target set.
For each candidate we test whether it is a target set for $G$ or not.
If any candidate set is a target set, then we find a target set of size $b$.
If no candidate set is a target set, then by argumentation above we know the graph $G$ has no target set of size (at most) $b$.
This finishes the proof of Theorem~\ref{thm:MAJTSSisFPTwrtTC}.

\subsubsection*{Proof of Lemma~\ref{lem:leaky}}

\subparagraph{Delayed Activation Process}
For a better analyzing the activation process we introduce a little bit more general notion.
Besides the threshold function we also have a delay function $d \colon V(G) \to \N$.
Let $S \subseteq V(G)$.
The \emph{delayed activation process} arising from the set~$S$ is $S = S_0, S_1, \dots$, where
\[
  S_{i + 1} = S_i \cup \setof{v \in V}{|N(v) \cap S_i| \geq f(v) \text{ and } i + 1 \geq d(v)}.
\]
Thus, a vertex $v$ cannot be activated before the round $d(v)$.
We say the set $S$ is a \emph{delayed target set} if there is some $j$ such that $S_j = V(G)$.
Note that a target set is a special case of a delayed target set for a delay function $d(v) = 0$ for all vertices $v \in V(G)$.

\begin{observation}
\label{obs:delay}
 Let $S$ be a delayed target set for a graph $G$, a threshold function~$f$, and a delay function~$d$.
 Then, $S$ is a target set for the graph $G$ and the threshold function $f$.
\end{observation}
\begin{proof}
 The delayed condition of vertex activation is more restrictive than the condition for the standard activation process.
 Suppose a vertex $v$ is activated in the round~$i$ during a delayed activation process.
 Then, the vertex $v$ is certainly activated during the round $i$ of the standard activation process at the latest.
\end{proof}

Now we are ready to prove Lemma~\ref{lem:leaky}.
The idea of the proof is to create the target set $\tilde{S}$ from $S$ by removing vertices from $C_j$ and adding the same number of vertices from $C_i$ for any hole $(C_i, C_j)$.
\begin{proof}[Proof of Lemma~\ref{lem:leaky}]
 Let $T$ be a twin cover of the graph~$G$.
 Let $C_1,\dots,C_p$ be order of all twin cliques of the type $(Q,r)$ as in Definition~\ref{def:leaky} and let $(C_i,C_j)$ be a hole in the target set $S$.
 Let ${\cal S} = (S = S_0, S_1,\dots)$ be an activation process arising from $S$ and the clique $C_j$ is activated in the round $j'$ and the clique $C_i$ in the round $i'$ of the activation process ${\cal S}$.

 First, we show that $j' \leq i'$.
 Since $C_i$ is activated in the round $i'$ and $S_{i'-1} \cap C_i = \emptyset$, there is $f'(C_i)$ active vertices in $Q \subseteq T$ in the round $i' - 1$.
 Since $f'(C_j)=f'(C_i) \leq |S_{i'} \cap Q|$, the clique $C_j$ is activated at most in the round $i'$ (i.e., $C_j \subseteq S_{i'}$).
 Therefore, $j' \leq i'$.

 \begin{figure}[bt]
   \begin{minipage}{0.42\textwidth}
     \begin{tikzpicture}
  \usetikzlibrary{calc,arrows}
  \newcommand{\topheight}{1cm}
  \newcommand{\midheight}{1.5cm}
  \newcommand{\botheight}{1cm}
  \newcommand{\arrowshift}{.4cm}

  \tikzstyle{toppart}=[minimum height=\topheight, minimum width=1cm, draw, inner sep=0pt, fill=black!20]
  \tikzstyle{midpart}=[minimum height=\midheight, minimum width=1cm, draw, inner sep=0pt]
  \tikzstyle{botpart}=[minimum height=\botheight, minimum width=1cm, draw, inner sep=0pt, fill=green!60]
  \tikzstyle{pomocne}=[dotted, thick]

\node[toppart, label={180:$0$}, label={0:$i'$}, label={90:$C_i$}] (Ltop) {$X$};
\node[midpart, label={180:$j'$}, label={0:$i'$}] at ($(Ltop.south) - (0, \midheight / 2)$) (Lmid) {$Y$};
\node[botpart, label={180:$i'$}, label={0:$i'$}] at ($(Lmid.south) - (0, \botheight / 2) + (0, .1pt)$) (Lbot) {};

\node[toppart, label={180:$i'$}, label={0:$0$}, label={90:$C_j$}] (Rtop) at ($(Ltop) + (3cm, 0)$) {$S$};
\node[midpart, label={180:$i'$}, label={0:$j'$}] (Rmid) at ($(Rtop.south) - (0, \midheight / 2)$) (Rmid) {};

\draw[<-, ultra thick,>=stealth] ($(Ltop.west)!.7!(Ltop.east) + (0,\arrowshift)$) to[out=45] ($(Rtop.west)!.2!(Rtop.east)+ (0,\arrowshift)$);

\draw[pomocne] (Ltop.north east) -- (Rtop.north west);
\draw[pomocne] (Lmid.north east) -- (Rmid.north west);
\draw[pomocne] (Lbot.north east) -- (Rmid.south west);
\end{tikzpicture}
   \end{minipage}\hfill
   \begin{minipage}{0.56\textwidth}
     \caption{An illustration of how we create the set $R$ from the set $S$.
     There are rounds when the parts of cliques are activated during the activation process ${\cal S}$ or ${\cal R}$ on the right or left, respectively.}
     \label{fig:TCTwoCliques}
   \end{minipage}
 \end{figure}

 We create a delayed target set $R$ from~$S$ by removing vertices from $C_j$ and adding the same number of vertices from $C_i$; see Figure~\ref{fig:TCTwoCliques}.
 Formally,
 \[
   R = \bigl(S \setminus C_j\bigr) \cup X \,,
 \]
where $X \subseteq C_i$ and $|X| = |S \cap C_j|$.
Let $Y \subseteq C_i \setminus X$ such that $|Y| = |C_j \setminus S|$.
We set the delay function $d$ as follows
\[
 d(v) =
 \begin{cases}
    j' & v \in Y, \\
    i' & v \in \bigl(C_i \setminus (X \cup Y)\bigr) \cup C_j, \\
    0 & \text{otherwise}.
 \end{cases}
\]
Let ${\cal R} = (R = R_0,R_1,\dots)$ be an activation process arising from the set $R$.
The meaning of the delay~$d$ is that parts of the clique $C_i$ (the sets $X$ and $Y$) will behave in the activation process ${\cal R}$ in the same manner as the clique $C_j$ in the activation process~$\mathcal{S}$.
%We will prove the following claim.
\begin{claimproof}
\label{clm:NewTS}
~\begin{enumerate}
 \item For every $k$ it holds that $S_k \cap T \subseteq R_k \cap T$ and $S_k \cap C \subseteq R_k \cap C$ for every twin clique $C$ different from $C_i$ and $C_j$.
 \item For every vertex $v$ in the cliques $C_i$ and $C_j$ holds that $v \in R_{d(v)}$.
\end{enumerate}
\end{claimproof}
Immediate corollary of Claim~\ref{clm:NewTS} is that $R$ is a delayed target set such that $|R| = |S|$.
By Observation~\ref{obs:delay}, we get that $R$ is also a target set.
The set $R$ contains at least one less hole than the set $S$.
Therefore, if we repeat this procedure we eventually get the sought $(Q,r)$-compact target set $\tilde{S}$.
\end{proof}

\begin{proof}[Proof of Claim~\ref{clm:NewTS}]
We prove Claim~\ref{clm:NewTS} by induction on the number of round $k$.
It is clear it holds for $k = 0$ because of the construction of the set $R_0 = R$.
Now suppose it holds for $k \geq 0$ and we will prove it for $k + 1$.
Recall that for every vertex $v$ not in the cliques $C_i$ and $C_j$ holds that $d(v) = 0$.
Let $v \in (S_{k + 1} \setminus S_k)  \cap T$.
Let ${\cal C}_v$ be the set of twin cliques $C$ such that $v \in N(C)$ and $T_v = T \cap N(v)$, i.e., the neighbors of $v$ in $T$.
Then,
\[
 a^S_{k+1}(v) = \sum_{C \in {\cal C}_v} |S_{k} \cap C| + |S_k \cap T_v| \geq f(v).
\]
Suppose $v \not \in Q$.
Then, by induction hypothesis (Part 1 in the claim), it holds that
\begin{align*}
  f(v)
  \leq
  a^S_{k+1}(v)
  =
  \sum_{C \in {\cal C}_v} |S_{k} \cap C| + |S_k \cap T_v|
  \leq
  \sum_{C \in {\cal C}_v} |R_{k} \cap C| + |R_k \cap T_v|
  =
  a^R_{k+1}(v) \,.
\end{align*}
Therefore, $v \in R_{k + 1} \cap T$.

Now suppose $v \in Q$.
We distinguish three cases: $k + 1 \leq j'$, $j' < k + 1 \leq i'$, and $i' < k + 1$.
Suppose $k + 1 \leq j'$.
Thus, the vertex $v$ is activated at most in the same round as the clique $C_j$ in the activation process ${\cal S}$.
Note that in the activation process ${\cal R}$ the only active neighbors of the vertex $v$ in the clique $C_i$ and $C_j$ are those vertices in the set $X$ (which has the same size as $S_0 \cap C_j$).
Then,
\begin{align*}
 f(v)
 \leq
 a^S_{k+1}(v)
 &=
 \sum_{C \in {\cal C}_v} |S_{k} \cap C| + |S_k \cap T_v| \\
 &=
 \sum_{C \in {\cal C}_v \setminus \{C_i, C_j\}} |S_{k} \cap C| + |S_0 \cap C_j| + |S_k \cap T_v| \\
 &\leq
 \sum_{C \in {\cal C}_v \setminus \{C_i, C_j\}} |R_{k} \cap C| + |X| + |R_k \cap T_v|
 =
 a^R_{k+1}(v) \,.
\end{align*}
Thus, $v \in R_{k + 1}$.
The other cases are similar, now suppose that $j' < k + 1 \leq i'$, i.e., the vertex $v$ is activated in the activation process ${\cal S}$ after the clique $C_j$ but before the clique $C_i$.
We consider this case only if $j' < i'$.
In the activation process ${\cal R}$ the vertices in $X$ and $Y$ are the only active vertices in the cliques $C_i$ and $C_j$ (recall that $|X| + |Y| = |C_j|$).
\begin{align*}
 f(v)
 \leq
 a^S_{k+1}(v)
 &=
 \sum_{C \in {\cal C}_v} |S_{k} \cap C| + |S_k \cap T_v|
 =
 \sum_{C \in {\cal C}_v \setminus \{C_i, C_j\}} |S_{k} \cap C| + |C_j| + |S_k \cap T_v| \\
 &\leq
 \sum_{C \in {\cal C}_v \setminus \{C_i, C_j\}} |R_{k} \cap C| + |X| + |Y| + |R_k \cap T_v|
 =
 a^R_{k+1}(v)
\end{align*}
The last case $i' < k + 1$ is when the vertex $v$ is activated after the activation of both cliques $C_i$ and $C_j$ in the both processes.
\begin{align*}
 f(v)
 \leq a^S_{k+1}(v) &= \sum_{C \in {\cal C}_v} |S_{k} \cap C| + |S_k \cap T_v| \\
 &= \sum_{C \in {\cal C}_v \setminus \{C_i, C_j\}} |S_{k} \cap C| + |C_j| + |C_i| + |S_k \cap T_v| \\
 &\leq \sum_{C \in {\cal C}_v \setminus \{C_i, C_j\}} |R_{k} \cap C| + |C_j| + |C_i| + |R_k \cap T_v|
 = a^R_{k+1}(v).
\end{align*}
In all three cases the vertex $v$ has more than $f(v)$ active neighbors in the round $k + 1$ of the process ${\cal R}$.
Therefore, $v \in R_{k + 1}$.

Now let $v \in (S_{k + 1} \setminus S_k) \cap C$ for some twin clique $C \neq C_i,C_j$.
Let $(Q',r')$ be a type of $C$.
Thus by induction hypothesis,
\[
 f(v) \leq a^S_{k+1}(v) = |S_0 \cap C| + |S_k \cap Q'| \leq |R_0 \cap C| + |R_k \cap Q'| = a^R_{k+1}(v).
\]
Therefore, $v \in R_{k + 1}$.

It remains to prove the claim about the cliques $C_i$ and $C_j$.
It is clear that for a vertex $v \in X$ holds that $v \in R_0$.

Let $v \in C_j$, thus $d(v) = i'$.
In the activation process ${\cal S}$ the clique $C_i$ is activated in the round $i'$, thus $f'(C_i) \leq |S_{i' - 1} \cap Q|$.
The cliques $C_i$ and $C_j$ have the same type, thus $f(v) = f'(C_i)$.
Therefore by induction hypothesis,
\[
  f(v) \leq |S_{i' - 1} \cap Q| \leq |R_{i' - 1} \cap Q| = a_{i'}^R(v)
\]
and the clique $C_j$ is activated in the round $i'$ of the activation process ${\cal R}$ because of the delay.

The proof for a vertex $u \in C_i \setminus R_0$ is similar.
Again we use that $f(u) = f'(C_j)$.
Let $u \in Y$, i.e., $d(u) = j'$.
If $C_j \setminus S = \emptyset$, then the set $Y$ is empty as well and there is nothing to prove for this case.
The clique $C_j$ was activated in the round $j'$ of the activation process ${\cal S}$.
Thus,
\[
  f(u) = f'(C_j) \leq |S_0 \cap C_j| + |S_{j' - 1} \cap Q| \leq |R_0 \cap C_i| + |R_{j' - 1} \cap Q| = a_{j'}^R(u).
\]
The vertex $u$ is activated in the round $j'$ of the activation process ${\cal R}$.
Let $u' \in C_i \setminus (X \cup Y)$, i.e., $d(u') = i'$.
%For the vertex $u'$ holds that $f'(C_i) = f(u')$.
The clique $C_i$ (included the vertex $u'$) was activated in the round $i'$ of the activation process ${\cal S}$.
Thus,
\[
  f(u') \leq a^S_{i'} = |S_{i' - 1} \cap Q| \leq |R_{i'-1} \cap Q| = a^R_{i'}(u').
\]
Thus, the vertex $u$ is activated in the round $d(u') = i'$, which finishes the proof of Claim~\ref{clm:NewTS}.
\end{proof}

\subsection{Neighborhood Diversity}
In this section we prove the \UNITSS problem admits an \FPT algorithm on graphs of the bounded neighborhood diversity.
We again use Lemma~\ref{lem:MAJtrueActivation}.
Note that in each round of the activation process at least one type has to be activated.
This implies that there are at most $\nd(G)$ rounds of the activation process.
We use this fact to model the whole activation process as an integer linear program which is then solved using Lenstra's celebrated result.

\begin{theorem}[\cite{lenstra83,FT87}]
Let $p$ be a number of integral variables in a mixed integer linear program and let $L$ be a number of bits needed to encode the program.
Then it is possible to find an optimal solution in time $\BigO(p^{2.5p}\poly(L))$ and a space polynomial in $L.$
\end{theorem}

There has to be an order in which the types are activated in order to activate the whole graph.
Since there are $t = \nd(G)$ types, we can try all such orderings.
Let us fix an order $\prec$ on types.
To construct the ILP we further need to know which types are fully activated at the beginning, since for these types we will not need to check there are enough neighbors activated in their round.
Denote by $c_0$ the number of such types.
Once the order $\prec$ is fixed the set of fully activated types at the beginning is determined by $c_0$.
Since $c_0$ can attain values $0,\ldots,t$ we can try all $t+1$ possibilities.
Now, with both $\prec$ and $c_0$ fixed, denote the set of the types activated in the beginning by $\mathcal{D}_0$.

Observe further that, as the vertices in a type share all neighbors, the only thing that matters is the number of activated vertices in each type and not the actual vertices activated.
Thus, we have variables $x_C$ which correspond to the number of vertices in type $C$ selected into a target set $S$.

Let $C$ be a type and $n_C$ be the number of vertices in $C$.
Since we know when $C$ is activated, we know how many active vertices are in $C$ in each round.
In $C$ there are $x_C$ vertices before the activation of $C$ and $n_C$ after the activation.
To formulate the integer linear program we denote the set of types by $\mathcal{D}$ and we write $D \in N(C)$ if the two corresponding vertices in the type graph $T_G$ are connected by an edge.

\begin{equation*}
\begin{aligned}
& \text{minimize}
& & \sum_{C\in \mathcal{D}} x_C & \\
& \text{subject to}
& & f'(C) \le \sum_{D\prec C, D \in N(C)}n_D + \sum_{D\succ C, D\in N(C)} x_D & \forall C\in \mathcal{D} \setminus \mathcal{D}_0 \mbox{ $C$ indep.}\\
& & & f'(C) \le x_C + \sum_{D\prec C, D \in N(C)}n_D + \sum_{D\succ C, D\in N(C)} x_D & \forall C\in \mathcal{D} \setminus \mathcal{D}_0 \mbox{ $C$ clique} \\
%& & & x_C = n_C & \forall C \in \mathcal{D}_0 \\
& \text{where}
& & 0 \leq x_C < n_C & \forall C \in \mathcal{D} \setminus \mathcal{D}_0 \\
& & & x_C = n_C & \forall C \in \mathcal{D}_0 \\
\end{aligned}
\end{equation*}
%The expression $[\mbox{$C$ is a clique}]$ has value $1$ if $C$ is a clique and has value zero otherwise.

As there are at most $t!$ orders of the set $\mathcal{D}$ and $t+1$ choices of~$c_0$, the \UNITSS problem can be solved in time ${(t+1)t! t^{\BigO(t)} \poly(n) = t^{\BigO(t)}\poly(n)}$.
Thus, we have proven Theorem~\ref{thm:MAJTSSisFPTwrtND}.

\section{Hardness}
In this section we prove hardness results for \TSS.
We prove the following problems are \Wh{1}.
\begin{enumerate}
 \item \TSS parameterized by the neighborhood diversity.
 \item \MAJTSS parameterized by the modular-width.
 \item \TSS parameterized by the twin cover number.
\end{enumerate}
Moreover unless ETH fails, there is no algorithm of running time $f(k)n^{o(k/\log k)}$ for the problems above, where $k$ is the appropriate parameter.
We use an \FPT reduction from the following problem.

\prob{\textsc{Colored Subgraph Isomorphism} (CSI)}
{A $k$-partite graph $G=(V_1 \dot\cup \cdots \dot\cup V_k,E)$, a graph $H$ such that $V(H) = \{1,\dots,k\}$.}
{Is $H$ isomorphic to a (colored) subgraph of $G$? I.e., is there an injective mapping \mbox{$\phi\colon V(H) \to V(G)$} such that $\{\phi(u),\phi(v)\} \in E(G)$ for each $\{u,v\} \in E(H)$ and $\phi(i) \in V_i$ for each $i \in V(H)$?}
If $(G,H)$ is a \emph{yes}-instance of CSI, then we say that \emph{$G$ contains a colored copy of $H$}.

\begin{theorem}[{Marx~\cite[Corollary 6.1]{Marx10}}]
  The CSI problem is \Wh{1} for the parameter tree width of~$H$.
  Moreover, if there is a recursively enumerable class $\mathcal{H}$ of graphs with unbounded treewidth, an algorithm $\mathbb{A}$, and an arbitrary function $f$ such that~$\mathbb{A}$ correctly decides every instance of CSI with the graph $H$ in $\mathcal{H}$ in time $f(k)\cdot|V(G)|^{o(\textup{tw}(H)/\log \textup{tw}(H))}$, then ETH fails.
\end{theorem}

It is known that there are infinitely many 3-regular graphs such that each such graph $H$ has treewidth $\Theta(k)$ (cf.~\cite[Proposition~1, Theorem~5]{GroheM09}).
Using the class of 3-regular graphs as $\mathcal{H}$ in the above theorem, we derive the following corollary.
\begin{corollary}\label{cor:psi_hard}
If there is an algorithm~$\mathbb{A}$ and a function $f$ such that~$\mathbb{A}$ correctly decides every instance of \textsc{Colored Subgraph Isomorphism} with the graph $H$ being 3-regular in time $f(k)\cdot|V(G)| ^{o(k/\log k)}$, then ETH fails.
\end{corollary}
Thus, from now we suppose the graph $H$ is 3-regular.
Let $(G,H)$ be an instance of CSI.
We will present three \FPT reductions to \TSS such that the sizes of the resulting graphs~$G'$ will be polynomial in $|V(G)|$ and the appropriate parameters will be bounded by $\BigO(k)$.
We refer to the sets $V_c$ as to \emph{color classes} of $G$ and to a set $E_{cd}$ as to edges between the color classes $V_c$ and $V_d$.
Note that we can suppose all color classes have the same size and $E_{cd}$ is not empty if and only if $\{c,d\} \in E(H)$.
In the CSI problem we need to select exactly one vertex from each color class $V_c$ and exactly one edge from each (nonempty) set~$E_{cd}$.
Moreover, we have to make certain that if ${\{u,v\} \in E_{cd}}$ is a selected edge, then ${u \in V_c}$ and ${v \in V_d}$ are the selected vertices.
This selection of vertices and edges and the checking of their incidence needs to be captured in the reductions to \TSS.

\subsection{Neighborhood Diversity}
For an easier notation during the reduction, we denote the size of an arbitrary color class $V_c$ by $n + 1$.
Similarly, we denote $m_{cd} + 1 = |E_{cd}|$.
Thus, we have $V_c = \{v^c_0,\dots,v^c_n\}$ and $E_{cd} = \{e^{cd}_0,\dots,e^{cd}_{m_{cd}}\}$.
We describe a reduction from the graph $G$ to an instance $(G', f, b)$ of \TSS such that $\nd(G')$ is $\BigO(k)$.

\subsubsection{Overview}
As the proof is quite long and technical we overview main ideas contained in the proof here.
We encode a vertex $v^c_i$ in a color class $V_c$ of the graph $G$ by two numbers $i$ and $n-i$.
The selection gadget for $V_c$ has two parts: {\up} and {\down}.
We select the $v^c_i \in V_c$ for the copy of $H$ if and only if we select to a target set $i$ vertices in the {\up} part of the gadget for $V_c$ and $n - i$ vertices in the {\down} part.
We proceed with encoding of edges similarly, however, edges are encoded by multiples of sufficiently large number~$q$.
I.e., we select the $\ell$-th edge in $E_{cd}$ for the copy of $H$ if and only if we select $q\ell$ in the {\up} part of the gadget for $E_{cd}$ and $q(m_{cd} - \ell)$ in the {\down} part.
We will create three types of gadgets: for selection vertices, for selection edges, and gadgets which check that the selected vertices are incident to the selected edges.

The incidence check is done by the following idea.
The incidence gadget for $V_c$ and $E_{cd}$ contains 2 vertices $u^{cd}(\ell)$ and $w^{cd}(\ell)$ for each edge $e^{cd}_\ell \in E_{cd}$.
The vertices $u^{cd}(\ell)$ are connected to the {\up} parts of the selection gadgets for $V_c$ and $E_{cd}$ and the vertices $w^{cd}(\ell)$ are connected to the {\down} parts.
Let $v^c_i$ be the vertex of $e^{cd}_\ell$ which is in $V^c$.
The threshold $f\bigl(u^{cd}(\ell)\bigr)$ is set to $i + q\ell$ and $f\bigl(w^{cd}(\ell)\bigr) = n - i + q(m_{cd} - \ell)$.

Now suppose we selected $v^c_j \in V_c$ and the edge $e^{cd}_\ell$ for the copy of $H$ and the vertices $u = u^{cd}(\ell)$ and $w = w^{cd}(\ell)$ were activated during the activation process in $G'$.
The vertex $u$ is connected  to $j$ selected neighbors in the {\up} part of the vertex selection gadget and to $q\ell$ selected neighbors in the {\up} part of the edge selection gadget.
Since $f(u) = i + q\ell$, it must holds that $j \geq i$.
Similarly, the vertex $w$ is connected to $n - j + q(m_{cd} - \ell)$ selected vertices in the {\down} parts of the selection gadgets and $f(w) = n - i + q(m_{cd} - \ell)$.
Thus, we have $j \leq i$.
Therefore, if we check that the vertices $u$ and $w$ are activated during the activation process we can verify that $i = j$ and the vertex $v^c_j$ is actually incident to the edge $e^{cd}_\ell$.
To make this incidence check actually work, we need that
\begin{align*}
 f\bigl(u^{cd}(\ell)\bigr) < f\bigl(u^{cd}(\ell')\bigr)  \Leftrightarrow \ell < \ell'
 \quad\text{and}\quad
 f\bigl(w^{cd}(\ell)\bigr) < f\bigl(w^{cd}(\ell')\bigr)  \Leftrightarrow \ell' < \ell.
\end{align*}
This can be done by setting $q$ to $2n$,  which is sufficiently large.
Now, we proceed with the formal description of the reduction.
We point out that all types of $G'$ in the neighborhood diversity decomposition are independent sets.

\subsubsection{Selection Gadget $L(s)$}
First, we describe gadgets of the graph $G'$ for selecting vertices and edges of the graph $G$.
For an overview of the gadget please refer to Figure~\ref{fig:NDSelectionGadget}.
The gadget $L(s)$ is formed by two types {$L$-\down} and {$L$-\up} of equal size $s$ (the number $s$ will be determined later); we refer to these two types as the \emph{selection part}.
For a vertex $v$ in the selection part we set the value $f(v)$ to the degree of $v$.
It means that if some vertex $v$ from the selection part is not selected into the target set, then all neighbors of  $v$ have to be active before the vertex $v$ can be activated by the activation process.
The selection gadget $L$ is connected to the rest of the graph using only vertices from the selection part.
The last part of the gadget $L$ is formed by the type {$L$-\guard} of  $s + 1$ vertices connected to both types in the selection parts.
For each vertex $v$ in {$L$-\guard} type we set $f(v) = s$.

\begin{figure}[bt]
\begin{minipage}{.45\textwidth}
    \usetikzlibrary{positioning,fit,calc}

\begin{tikzpicture}[node distance=2.5cm,font=\footnotesize]
  \tikzstyle{bag} = [circle, draw, minimum width=1.3cm]
  \tikzstyle{edge} = [black, ultra thick]
  \tikzstyle{fitting} = [draw, dashed, rounded corners, black!40, thick, inner sep=15pt]

  \node[bag, label={270:${\rm deg}$}, label={90:$L$-{\rm pos}}] (VaUP) {$s$};
  \node[bag, label={270:${\rm deg}$}, label={90:$L$-{\rm neg}}, below of=VaUP] (VaDOWN) {$s$};
  \node[bag, label={270:$s$}, label={90:$L$-{\rm guard}}] (VaCTRL) at ($(VaUP)!.5!(VaDOWN) - (2,0)$) {$s + 1$};

  \draw[edge] (VaUP) -- (VaCTRL) -- (VaDOWN);

  \node[right of=VaUP] (dummyUP) {};
  \node[right of=VaDOWN] (dummyDOWN) {};

  \draw[edge] (VaUP) -- (dummyUP);
  \draw[edge] (VaDOWN) -- (dummyDOWN);

  \node[fitting, fit={(VaUP) (VaDOWN)}, label={[yshift=-2pt]90:Selection part}, xshift=10pt] {};
\end{tikzpicture}
\end{minipage}
\begin{minipage}{.54\textwidth}
  \caption{An overview of the selection gadget $L(s)$. Numbers in circles denote numbers of vertices in each type and numbers under circles denote thresholds of vertices in each type.}
  \label{fig:NDSelectionGadget}
\end{minipage}
\end{figure}

\begin{lemma}\label{lem:SelectionGadgetProperties}
Suppose there is a selection gadget $L(s)$ in the graph $G'$.
We claim that exactly $s$ vertices of the gadget $L$ are needed to be selected in the target set $S$ to activate the vertices in the {$L$-\guard} type.
Moreover, these $s$ vertices have to be selected from the selection part of $L$.
\end{lemma}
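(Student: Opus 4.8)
The plan is to establish the two directions separately: that $s$ vertices taken from the selection part suffice to activate all of $L$-\guard, and that no target set can activate $L$-\guard while keeping fewer than $s$ of its vertices inside $V(L)$, with any minimum-size choice forced into the selection part. Together these give that exactly $s$ vertices of $L$ are needed and that they must be selection-part vertices.

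First I would treat sufficiency. Put any $s$ vertices of the selection part into $S$ (for instance all of $L$-\up, or any mix of $L$-\up and $L$-\down of total size $s$). Every vertex of $L$-\guard is adjacent to all $2s$ selection-part vertices and has threshold $s$, so after the first round it sees exactly $s$ active neighbours and is activated. Hence $s$ selection-part vertices activate the whole guard type, which yields the upper bound.

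The core of the argument is the lower bound, and the decisive structural feature is that every selection-part vertex $v$ has $f(v)=\deg(v)$: such a $v$ can be activated by the process, rather than by being placed in $S$, only after \emph{all} of its neighbours — in particular all $t$ guards — are already active. This creates a circular dependency, since a guard in turn needs $s$ active selection-part neighbours. To exploit it I would take a target set $S$ that activates all of $L$-\guard and consider the first round $i$ in which some guard \emph{not} contained in $S$ becomes active. Such a round exists unless $S$ contains all $t$ guards, a case I dispose of at once because $t>s$ already forces $|S\cap V(L)|>s$. For the guard activated in round $i$, each of its $\ge s$ active selection-part neighbours in $S_{i-1}$ must lie in $S$: were such a neighbour instead activated by the process at some round $\le i-1$, then all $t$ guards — including the one first activated in round $i$ — would already have been active at a round preceding $i$, contradicting the minimality of $i$. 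Consequently $S$ contains at least $s$ selection-part vertices, so $|S\cap V(L)|\ge s$.

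Combining the bounds shows exactly $s$ vertices of $L$ are needed. For the moreover part I would observe that if $|S\cap V(L)|=s$ then, since $t>s$, not all guards lie in $S$, so the previous paragraph applies and forces at least $s$ of the $s$ selected vertices into the selection part, leaving no room for a guard. The main obstacle I anticipate is precisely excluding that spending part of the budget on guard vertices could push the total below $s$; the first-activated-unselected-guard argument is exactly what closes this gap, and I would phrase it so that it covers mixed target sets containing both guard and selection-part vertices, not merely the clean all-selection-part case.
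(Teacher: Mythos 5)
Your proof is correct and takes essentially the same approach as the paper's: both hinge on the circular dependency between a non-selected guard vertex (threshold $s$, neighbours only in the selection part) and non-selected selection-part vertices (threshold equal to their degree, hence activated only after all guards), forcing at least $s$ selection-part vertices into $S$. Your ``first round in which a non-selected guard activates'' formulation is merely a more explicit organization of the paper's direct contradiction argument, and your sufficiency and moreover parts match the paper's as well.
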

\begin{proof}
Let $S' = V(L) \cap S$, i.e., vertices of the target set $S$ in the gadget $L$.
First, suppose $|S'| < s$ or $|S'| = s$ and some vertex $u$ of {$L$-\guard} is in $S'$.
Since there are $s + 1$ vertices in $L$-\guard, there is a vertex $v$ in the {$L$-\guard} type such that $v \notin S'$.
Let $V^p$ be vertices of the selection part of $L$.
The vertex $v$ has neighbors only in $V^p$ and the threshold of $v$ is $s$.
Note that $|V^p \cap S'| < s$.
Thus, at least one vertex $w \in V^p \setminus S'$ needs to be activated during the activation process before the vertex $v$ is activated.
However, $f(w) = \deg(w)$ for $w \in V^p$.
Therefore, the vertex $w$ has to be activated after the vertex $v$ is activated.
That is a contradiction and $|V^p \cap S'| \geq s$ must hold.
When $S'$ contains $s$ vertices from the selection part of $L$, then it is easy to see that the all vertices in {$L$-\guard} type are activated in the first round of the activation process.
\end{proof}

\subparagraph{Numeration of Vertices and Edges}
%Let ${V_c = \{v_0,\dots,v_n\}}$.
%By Lemma~\ref{lem:SelectionGadgetProperties}, we can encode selection of vertices and edges of the graph $G$ to the colored copy of $H$.
For every color class $V_c$ we create a selection gadget $L_c = L(n)$.
We select a vertex $v^c_i \in V_c$ to the colored copy of $H$ if $i$ vertices in the {$L_c$-\up} type and $n-i$ vertices in the {$L_c$-\down} type of the gadget $L_c$ are selected into the target set.

The selection of edges is similar, however, a bit more complicated.
%Let $q \in \N$ and $E_{cd} = \{e_0,\dots,e_{m_{cd}}\}$.
For every set $E_{cd}$ we create a selection gadget $L_{cd}$ of kind $L(qm_{cd})$.
We select an edge $e^{cd}_j \in E_{cd}$ to the colored copy of $H$ if $qj$ vertices in the {$L_{cd}$-\up} type of the gadget $L_{cd}$ are selected into the target set (and $q(m_{cd}-j)$ vertices in the {$L_{cd}$-\down} are selected into the target set).
Suppose $s$ vertices in the {$L_{cd}$-\up} type are selected into the target set.
If $s$ is not divisible by $q$, then it is an invalid selection.
We introduce a new gadget such that $s$ has to be divisible by $q$.

\subsubsection{Multiple Gadget $M(q,s)$}
A multiple gadget $M(q,s)$ consists of a selection gadget $L(qs)$ and three other types: {$M$-\up}, {$M$-\down} of $s$ vertices, and {$M$-\guard} of $qs$ vertices.
The type {$M$-\up} is connected to the type {$L$-\up} and the type {$M$-\down} is connected to the type {$L$-\down}.
The type {$M$-\guard} is connected to the types {$M$-\up} and $M$-\down.
Still, the rest of graph $G'$ is connected only to the types {$L$-\up} and {$L$-\down}.
Let $\{u_1,\dots,u_s\}$ and $\{w_1,\dots,w_s\}$ be vertices in {$M$-\up} type and {$M$-\down} type, respectively.
We set thresholds $f(u_i) = f(w_i) = qi$.
For each vertex $v$ in {$M$-\guard} we set $f(v) = s$.
For an example of multiple gadget see Figure~\ref{fig:NDMultipleGadget}.

\begin{figure}[bt]
\begin{minipage}{.8\textwidth}
    \usetikzlibrary{positioning,fit,calc}

\begin{tikzpicture}[node distance=2.5cm,font=\footnotesize]
  \tikzstyle{bag} = [circle, draw, minimum width=1.3cm]
  \tikzstyle{edge} = [black, ultra thick]
  \tikzstyle{fitting} = [draw, dashed, rounded corners, black!40, thick, inner sep=15pt]

  \node[bag, label={90:$L$-pos}, label={270:${\rm deg}$}] (EabUP) {$qs$};
  \node[bag, label={90:$L$-guard}, label={270:$qs$}, below right of=EabUP] (EabCTRL) {$qs + 1$};
  \node[bag, label={90:$L$-neg}, label={270:${\rm deg}$}, below left of=EabCTRL] (EabDOWN) {$qs$};

  \node[bag, label={90:$M$-pos}, label={270:$q,\dots,qs$}, above right of=EabCTRL] (EabUPtimesN) {$s$};
  \node[bag, label={90:$M$-neg}, label={270:$q,\dots,qs$}, below right of=EabCTRL] (EabDOWNtimesN) {$s$};
  \node[bag, label={90:$M$-guard}, label={270:$s$}, below right of=EabUPtimesN] (EabCTRLtimesN) {$qs$};

  \node[fitting, fit={(EabDOWN) (EabUP) (EabCTRL)}, xshift=-10pt, label={[yshift=-18pt, xshift=-3pt]70:$L(qs)$}] (Eab) {};

  \node[fitting, fit={(Eab)(EabUPtimesN)(EabDOWNtimesN)(EabCTRLtimesN)}, label={[yshift=-18pt, xshift=40pt]70:$M(q,s)$}, inner sep=5pt] {};

  \node[left of=EabUP] (dummyUP) {};
  \node[left of=EabDOWN] (dummyDOWN) {};

  \draw[edge] (EabUP) -- (EabCTRL) -- (EabDOWN) -- (EabDOWNtimesN) -- (EabCTRLtimesN) -- (EabUPtimesN) -- (EabUP);

  \draw[edge] (EabUP) -- (dummyUP);
  \draw[edge] (EabDOWN) -- (dummyDOWN);
\end{tikzpicture}
\end{minipage}
\begin{minipage}{.18\textwidth}
  \caption{An overview of the multiple gadget $M(q,s)$.}
  \label{fig:NDMultipleGadget}
\end{minipage}
\end{figure}

\begin{lemma}
\label{lem:MultipleGadget}
Suppose there is a multiple gadget $M(q,s)$ in the graph $G'$.
Let $L$ be a selection gadget in $M$.
We claim that exactly $qs$ vertices of the gadget $L$ are needed to be selected in the target set $S$ to activate the types {$L$-\guard}, {$M$-\up}, {$M$-\down} and {$M$-\guard}.
Moreover, these $qs$ vertices have to be selected from the selection part of $L$ and the numbers of vertices selected in {$L$-\up} and {$L$-\down} types are divisible by $q$.
\end{lemma}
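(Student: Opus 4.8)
The plan is to peel the statement into two layers: the inner selection gadget $L$, and the extra machinery (types {$M$-\up}, {$M$-\down}, {$M$-\guard}) attached on top of it.

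First I would dispatch the first two assertions with Lemma~\ref{lem:SelectionGadgetProperties} applied to the inner gadget $L(qs,t)$. Since the type {$L$-\guard} has threshold $qs$ and is adjacent only to the selection part, while every unselected selection-part vertex has threshold equal to its degree, exactly $qs$ vertices of $L$ must lie in $S$ and all of them must lie in the selection part (placing a vertex into {$L$-\guard} is never more efficient). The proof of Lemma~\ref{lem:SelectionGadgetProperties} never uses that $L$ is isolated, so the extra edges from the selection part to {$M$-\up}, {$M$-\down} and to the rest of $G'$ do not interfere. This reduces the claim to the following: if $x$ denotes the number of selected vertices in {$L$-\up} (so that $qs-x$ are selected in {$L$-\down}), then the types {$M$-\up}, {$M$-\down}, {$M$-\guard} all become active if and only if $q \mid x$; and $q \mid x$ clearly forces $q \mid (qs-x)$ as well.

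Next I would trace the activation process for the first few rounds. Because every selection-part vertex has threshold equal to its degree, no unselected vertex of $L$ can switch on before its entire neighborhood -- in particular all of {$M$-\up} (respectively {$M$-\down}) -- is active. Hence in the opening rounds a vertex $u_i$ of {$M$-\up} sees only the $x$ selected vertices of {$L$-\up} among its active neighbors (the type {$M$-\guard} being still silent). As $f(u_i)=qi$, exactly $\lfloor x/q\rfloor$ vertices of {$M$-\up} fire, and symmetrically exactly $\lfloor (qs-x)/q\rfloor$ vertices of {$M$-\down}. A vertex of {$M$-\guard} has threshold $s$ and is adjacent only to these, so the number of its active neighbors becomes
\[
\Big\lfloor \tfrac{x}{q}\Big\rfloor + \Big\lfloor \tfrac{qs-x}{q}\Big\rfloor ,
\]
which equals $s$ precisely when $q \mid x$ and equals $s-1$ otherwise.

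The crux -- and the step I expect to be the main obstacle -- is to convert this counting gap into a genuine deadlock when $q\nmid x$. Setting $a=\lfloor x/q\rfloor$, the vertex $u_{a+1}$ of {$M$-\up} has threshold $q(a+1)>x$, so it cannot fire while {$L$-\up} is frozen at its $x$ selected vertices and {$M$-\guard} is inactive; {$L$-\up} itself stays frozen because activating any of its unselected vertices would require all of {$M$-\up} to be active first; and since the only edges leaving the gadget reach {$L$-\up} and {$L$-\down}, whose active counts never increase, nothing in the rest of $G'$ can break this cycle. Consequently, when $q\nmid x$ the type {$M$-\guard} (and all but $a$ vertices of {$M$-\up}) never activates, contradicting the hypothesis; thus $q\mid x$ is necessary. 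For the converse, when $q\mid x$ the displayed sum equals $s$, so {$M$-\guard} fires one round later; afterwards every $u_i$ sees all $t'\ge qs\ge qi$ guard vertices and activates, and symmetrically for {$M$-\down}. Hence $qs$ vertices placed in the selection part with a $q$-divisible split activate all four types, which completes the proof of Lemma~\ref{lem:MultipleGadget}.
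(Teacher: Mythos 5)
Your proposal is correct and follows essentially the same route as the paper: invoke Lemma~\ref{lem:SelectionGadgetProperties} for the inner gadget $L(qs,t)$, count the first-round activations in {$M$-\up} and {$M$-\down} (your $\lfloor x/q\rfloor + \lfloor (qs-x)/q\rfloor$ is the paper's $a + (s-a-1)$ versus $a + (s-a)$ split), observe the guard sees $s-1$ versus $s$ active neighbors, argue the resulting deadlock when $q \nmid x$, and use $t' \ge qs$ for the converse. The only difference is cosmetic: you are slightly more explicit that the selection-gadget lemma and the deadlock survive the extra edges to {$M$-\up}, {$M$-\down} and the rest of $G'$, a point the paper leaves implicit.
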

\begin{proof}
By Lemma~\ref{lem:SelectionGadgetProperties}, we know that $qs$ selected vertices in the types {$L$-\up} and {$L$-\down} are needed to activate the {$L$-\guard} type.
Suppose there are $z$ vertices in the {$L$-\up} type selected into a target set and $z$ is not divisible by $q$.
It follows there are $qs - z$ selected vertices in {$L$-\down}.
Thus, $z = qa + r, 0 < r < q$ and $qs - z = q(s-a) - r$.
Let $\{u_1,\dots,u_s\}$ and $\{w_1,\dots,w_s\}$ be vertices in the {$M$-\up} type and in the {$M$-\down} type, respectively.
Recall that $f(u_i) = f(w_i) = qi$.
Thus, vertices $u_1,\dots,u_a$ and $w_1,\dots,w_{s-a-1}$ are activated in the first round of the activation process.

We claim that no other vertices in the gadget $M$ would be activated during the activation process.
Vertices in the {$M$-\guard} type have only $s-1$ activated vertices among their neighbors and have the threshold $s$.
Vertices in {$L$-\up} and {$L$-\down} have thresholds equal to their degrees.
Thus, they have to be activated after all vertices in {$M$-\up} and {$M$-\down} are activated.
Vertices $u_{a+1},\dots,u_s$ in the {$M$-\up} type and $w_{s-a},\dots,w_s$ in the {$M$-\down} type cannot be activated unless some of their neighbors are activated.

Now suppose that $r = 0$, i.e., $z = qa$ and $qs - z = q(s -a)$.
Vertices $u_1,\dots,u_a$ and $w_1,\dots,w_{s-a}$ are activated in the first round.
All vertices in the {$M$-\guard} type are activated in the second round because they have $s$ activated vertices among their neighbors.
Recall that the maximum threshold in the {$M$-\up} and the {$M$-\down} type is $qs$.
Since there are $qs$ vertices in $M$-\guard, every vertex in the types {$M$-\up} and {$M$-\down} has at least $qs$ activated vertices among its neighbors.
Therefore, the rest of vertices in the types {$M$-\up} and {$M$-\down} are activated in the third round.
\end{proof}

\subsubsection{Incidence Gadget}
So far we described how we encode in graph $G'$ selecting vertices and edges to the colored copy of $H$.
It remains to describe how we encode the correct selection, i.e., if $v \in V_c$ and $e \in E_{cd}$ are selected vertex and edge to the colored copy of $H$, then $v \in e$.
We create a selection gadget $L_{c}(n)$ for each color class $V_c$.
We set the number $q$ to $2n$ and create a multiple gadget $M_{cd}$ of kind $M(2n,m_{cd})$ (with selection gadget $L_{cd}$) for each set $E_{cd}$.
We join gadgets $L_c$ and $M_{cd}$ through an incidence gadget $I_{c:cd}$.
See Figure~\ref{fig:NDTSSHardness}, for better understanding how the incidence gadget is connected to the selection and multiple gadgets.
The incidence gadget $I_{c:cd}$ has three types {$I_{c:cd}$-\up} and {$I_{c:cd}$-\down} of $m_{cd} + 1$ vertices each and {$I_{c:cd}$-\guard} of $n + 2nm_{cd}$ vertices.
We connect the {$I_{c:cd}$-\guard} type to the types {$I_{c:cd}$-\up} and {$I_{c:cd}$-\down}.
Furthermore, we connect the type {$I_{c:cd}$-\up} to the types {$L_c$-\up} and {$L_{cd}$-\up}.
Similarly, we connect the type {$I_{c:cd}$-\down} to the types {$L_c$-\down} and {$L_{cd}$-\down}.

We set thresholds of all vertices in the {$I_{c:cd}$-\guard} type to $m_{cd} + 2$.
Recall there are $m_{cd} + 1$ edges in the set $E_{cd}$.
Thus, we can associate edges in $E_{cd}$ with vertices in {$I_{c:cd}$-\up} ({$I_{c:cd}$-\down} respectively) one-to-one.
I.e., $V(I_{c:cd}\text{-\up}) = \setof{u_\ell}{e_\ell \in E_{cd}}$ and $V(I_{c:cd}\text{-\down}) = \setof{w_\ell}{e_\ell \in E_{cd}}$.
Let $v^c_i \in V_c$ and $e^{cd}_j \in E_{cd}$ with $v^c_i \in e^{cd}_j$.
Recall that selecting $v^c_i$ and $e^{cd}_j$ into the colored copy of $H$ is encoded as selecting $i$ vertices in {$L_c$-\up} type and $2nj$ vertices in {$L_{cd}$-\up} type into the target set.
We set threshold of $u_{j}$ to $i + 2nj$ and threshold of $w_{j}$ to the ``opposite'' value $n - i + 2n(m_{cd} -j)$.

\begin{figure}[bt]
  \begin{center}
    \usetikzlibrary{positioning,fit,calc}

\begin{tikzpicture}[node distance=2.2cm, font=\scriptsize]
  \tikzstyle{bag} = [circle, draw, minimum width=1cm, inner sep=1pt]
  \tikzstyle{emptyBag} = [circle, minimum width=1cm, inner sep=1pt]
  \tikzstyle{edge} = [black, ultra thick]
  \tikzstyle{fitting} = [draw, dashed, rounded corners, black!40, thick, inner sep=8pt]
  \tikzstyle{fittingTesny} = [draw, dashed, rounded corners, black!40, thick, inner sep=4pt]
  \tikzstyle{threshold} = [orange]

  \node[bag, label={270:$L_c\mbox{-pos}$}, label={[threshold]180:$\deg$}] (VaUP) {$n$};
  \node[bag, label={[xshift=-2pt]270:$L_c\mbox{-guard}$}, label={[threshold]0:$n$}, below left of=VaUP] (VaCTRL) {$n + 1$};
  \node[bag, label={270:$L_c\mbox{-neg}$}, label={[threshold]180:$\deg$}, below right of=VaCTRL] (VaDOWN) {$n$};
  \node[emptyBag] at ($(VaDOWN.south)$) (VaE) {};
  \node [fittingTesny, fit={(VaUP) (VaDOWN) (VaCTRL) (VaE)}, label={[yshift=-18pt,xshift=3pt]110:$L_c$}, xshift=-2pt] {};

  \draw[edge] (VaUP) -- (VaCTRL) -- (VaDOWN);

  \node[bag, label={[xshift=-4pt]270:$I_{c:cd}\mbox{-pos}$}, label={[threshold, yshift=-5pt]135:inc}] at ($(VaUP) + (1.65cm,0)$) (IaabUP) {$m_{cd}+1$};
  \node[bag, label={270:$I_{c:cd}\mbox{-guard}$}, label={[threshold]180:$m_{cd}+2$}, below right of=IaabUP] (IaabCTRL) {$n + 2nm_{cd}$};
  \node[bag, label={[xshift=6pt]270:$I_{c:cd}\mbox{-neg}$}, label={[threshold, yshift=-5pt]135:inc}, below left of=IaabCTRL] (IaabDOWN) {$m_{cd}+1$};
  \node[emptyBag] at ($(IaabUP)!.5!(IaabDOWN) - (.4cm, 0)$) (IaabE) {};
  \node[emptyBag] at ($(IaabDOWN.south)$) (IaabE2) {};
  \node [fittingTesny, fit={(IaabUP) (IaabDOWN) (IaabCTRL) (IaabE) (IaabE2)}, label={[xshift=37pt,yshift=-18pt]90:$I_{c:cd}$}, xshift=3pt] {};

  \node[bag, label={270:$L_{cd}\mbox{-pos}$}, right= of IaabUP, label={[threshold, yshift=6pt]0:$\deg$}] (EabUP) {$2nm_{cd}$};
  \node[bag, label={270:$L_{cd}\mbox{-guard}$}, below right of=EabUP, label={[threshold]180:$2nm_{cd}$}] (EabCTRL) {$2nm_{cd} + 1$};
  \node[bag, label={270:$L_{cd}\mbox{-neg}$}, below left of=EabCTRL, label={[threshold, yshift=6pt]0:$\deg$}] (EabDOWN) {$2nm_{cd}$};
%  \node[fitting, fit={(EabUP) (EabDOWN) (EabCTRL)}, label={[xshift=38pt,yshift=-18pt]90:$L_{cd}$}, yshift=-5pt, xshift=2pt] (EabSEL) {};

  \draw[edge] (IaabUP) -- (IaabCTRL) -- (IaabDOWN);

  \node[bag, label={[xshift=-8pt]270:$M_{cd}\mbox{-pos}$}, right=of EabUP, label={[threshold]0:$[m_{cd}]\left\{2n\right\}$}] (EabUPtimesN) {$m_{cd}$};
  \node[bag, label={[xshift=5pt]270:$M_{cd}\mbox{-guard}$}, below right of=EabUPtimesN, label={[threshold]180:$m_{cd}$}] (EabCTRLtimesN) {$2nm_{cd}$};
  \node[bag, label={[xshift=-8pt]270:$M_{cd}\mbox{-neg}$}, below left of=EabCTRLtimesN, label={[threshold]0:$[m_{cd}]\left\{2n\right\}$}] (EabDOWNtimesN) {$m_{cd}$};

  \node[fitting, fit={(EabDOWNtimesN) (EabUPtimesN) (EabCTRLtimesN) (EabUP)}, xshift=6pt, yshift=-6pt, label={[yshift=-12pt, xshift=2pt]45:$M_{cd}$}] (Eab) {};

% \node[fitting, fit={(EabDOWN) (EabUP) (EabCTRL) }, xshift=-1pt, label={[yshift=-18pt, xshift=-45pt]110:$M_{cd}$}] {};

  \draw[edge] (EabUP) -- (EabCTRL) -- (EabDOWN) -- (EabDOWNtimesN) -- (EabCTRLtimesN) -- (EabUPtimesN) -- (EabUP);

  \draw[edge] (VaUP) -- (IaabUP) -- (EabUP);

  \draw[edge] (VaDOWN) -- (IaabDOWN) -- (EabDOWN);
\end{tikzpicture}
  \end{center}
  %\includestandalone[scale=.75]{tikz/NDTSSgadget}
  \caption{An~overview of the reduction.
  The number inside a type is the number of vertices of the type.
  The threshold of vertices in a type is displayed next to the type in orange (light-gray).
  }
  \label{fig:NDTSSHardness}
\end{figure}

Since we set the coefficient $q$ to $2n$, for each edge $e_j \in E_{cd}$ and each vertex $v_i \in V_c$ the sum $i + 2nj$ is unique.
Thus, every vertex in {$I_{c:cd}$-\up} ({$I_{c:cd}$-\down}) has a unique threshold.
We will use this number to check the incidence.

%Note further that if we have a total order on edges $\preceq$ using the order on numbers $m_e$ then fulfilling edge threshold for an edge $e$ implies that the threshold is fulfilled for all edges $e'$ with ${e'\preceq e}$.

We described how from the graph $G$ with $k$ color classes (instance of CSI) we create the graph $G'$ with the threshold function $f$ (input for \TSS).
%For every color class $V_c$ we create a selection gadget $L_c$.
%For every edge set $E_{cd}$ we create a multiple gadget $M_{cd}$.
We join the gadgets $L_c$ and $M_{cd}$ by an incidence gadget $I_{c:cd}$ (gadgets $L_d$ and $M_{cd}$ are joint by a gadget $I_{d:cd}$).
Let $\bar{m} = \sum_{\{c,d\} \in E(H)}m_{cd}$.
To finish the construction of the instance of \TSS, we set the budget $b$ to $kn + 2n\bar{m}$.

\subsubsection{Finishing the Reduction}
It is easy to see the following observations follow directly from the construction of $G'$.

\begin{observation}
\label{obs:HardnessNDSize}
The graph $G'$ has polynomial size in the size of the graph $G$ and can be constructed in time polynomial in~$|V(G)|$.
\end{observation}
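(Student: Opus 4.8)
The plan is to carry out a direct counting argument: enumerate the gadgets that make up $G'$, bound the number of vertices contributed by each, sum over all gadgets, and finally observe that the edge count is controlled once the vertex count is. Recall the parameters fixed in the construction: $q = n^2$, each color class yields a selection gadget $L_a(n,n+1)$, each edge set yields a multiple gadget $M_{ab}(n^2, n^2 m, m, m+1)$, and each join of a color class with an incident edge set yields an incident gadget $I_{a:ab}$.

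First I would count the gadgets. There are exactly $k$ selection gadgets (one per color class), exactly $\binom{k}{2}$ multiple gadgets (one per edge set), and exactly $2\binom{k}{2} = k(k-1)$ incident gadgets (two per edge set, joining $V_a$ and $V_b$ to $E_{ab}$). Since $k \le |V(G)|$, the number of gadgets is already polynomial in $|G|$.

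Next I would bound the size of each gadget. A selection gadget $L(s,t)$ consists of two types of size $s$ in the selection part together with one {$L$-\guard} type of size $t$, hence $2s+t$ vertices; thus $L_a(n,n+1)$ has $3n+1 = \BigO(n)$ vertices. A multiple gadget $M(q,t',s,t)$ contains an inner selection gadget $L(qs,t)$ together with types {$M$-\up}, {$M$-\down} of size $s$ and {$M$-\guard} of size $t'$, hence $2qs + t + 2s + t'$ vertices; for $M_{ab}(n^2,n^2 m, m, m+1)$ this is $\BigO(n^2 m)$ vertices. An incident gadget $I_{a:ab}$ has two types of size $m+1$ and a guard of size $n + n^2 m$, hence $\BigO(n^2 m)$ vertices. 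Summing, the total number of vertices of $G'$ is $\BigO(k \cdot n) + \BigO(k^2 \cdot n^2 m) + \BigO(k^2 \cdot n^2 m) = \BigO(k^2 n^2 m)$. As $n \le |V(G)|$, $m \le |E(G)|$, and $k \le |V(G)|$, this is polynomial in $|G|$.

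Finally I would handle the edges. Every adjacency of $G'$ lies between two distinct types, and by construction each type is an independent set joined to another type either completely or not at all; thus the number of edges between any two types is at most the product of their sizes, which is polynomial. Since each gadget contributes only a constant number of types and there are polynomially many gadgets, the total number of type pairs is polynomial and each contributes a polynomial number of edges, so the edge count is polynomial as well. There is no genuine obstacle here beyond bookkeeping; the only place to stay alert is the multiplicative effect of the coefficient $q = n^2$, which inflates the type sizes of $M_{ab}$ and its inner selection gadget — but since $q$ is itself polynomial in $n$, every individual gadget, and hence all of $G'$, still has polynomially many vertices and edges.
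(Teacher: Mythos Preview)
Your argument is correct and more detailed than the paper's own treatment: the paper does not give a proof of this observation at all, merely remarking that it is ``easy to see by construction of $G'$'' and leaving the verification to the reader. Your explicit counting of gadgets, per-gadget vertex bounds, and the observation that the complete-bipartite structure between types controls the edge count are exactly the kind of bookkeeping the paper expects the reader to supply.
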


\begin{observation}
\label{obs:HardnessNDParameter}
The neighborhood diversity of the graph $G'$ is $\BigO(k)$.
\end{observation}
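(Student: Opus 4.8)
The plan is to count the number of distinct neighborhood types in $G'$ by examining the gadgets introduced in the construction. The key observation is that $G'$ is assembled from a bounded number of gadget \emph{kinds}, each contributing a constant number of types, and the number of gadget \emph{instances} is polynomial in $k$ but \emph{not} in $n$ or $m$. First I would tally the instances: there is one selection gadget $L_a$ for each of the $k$ color classes, one multiple gadget $M_{ab}$ for each of the $\binom{k}{2}$ edge sets $E_{ab}$, and one incident gadget $I_{a:ab}$ for each incidence between a color class and an edge set. Since each $M_{ab}$ is joined to exactly the two color classes $V_a$ and $V_b$, there are exactly $2\binom{k}{2} = k(k-1)$ incident gadgets. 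Thus the total number of gadget instances is $k + \binom{k}{2} + k(k-1) = \BigO(k^2)$.

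Next I would verify that each gadget contributes only a constant number of neighborhood types, independent of $n$ and $m$. A selection gadget $L_a$ has three types ($L_a$-\up, $L_a$-\down, $L_a$-\guard); a multiple gadget $M_{ab}$ contains its own selection gadget (three types) plus three additional types ($M$-\up, $M$-\down, $M$-\guard), so six types; an incident gadget $I_{a:ab}$ has three types ($I_{a:ab}$-\up, $I_{a:ab}$-\down, $I_{a:ab}$-\guard). Hence every gadget instance contributes at most a constant $c \le 6$ types, and the total number of types is at most $c \cdot \BigO(k^2) = \BigO(k^2)$.

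The one point requiring genuine care — and the step I expect to be the main obstacle — is confirming that vertices placed in the \emph{same} named type really do form a single neighborhood type, i.e.\ that they share neighborhoods as required by Definition~\ref{def:nd}. The subtlety is that within, say, the $I_{a:ab}$-\up type the individual vertices $u_\ell$ are assigned \emph{different} thresholds $i + n^2 j$; differing thresholds do not affect the neighborhood type (which depends only on adjacency), but one must check that the \emph{adjacencies} are uniform. This holds by construction: every vertex of $I_{a:ab}$-\up is connected to all of $I_{a:ab}$-\guard, all of $L_a$-\up, and all of $L_{ab}$-\up, with no edges among themselves, so they are pairwise non-adjacent twins forming an independent set. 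The analogous uniformity holds for each named type, and since each type is an independent set that is either completely joined or completely non-adjacent to every other type, the partition into named types is a valid neighborhood diversity decomposition. This simultaneously establishes the third promised property that all types of $G'$ are independent sets, and the bound $\nd(G') = \BigO(k^2)$ follows.
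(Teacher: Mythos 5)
Your proof is correct and follows exactly the route the paper intends: the paper states this observation without an explicit proof (``It is easy to see\dots by constructions of $G'$''), and your argument simply spells out that implicit counting --- $k$ selection gadgets, $\binom{k}{2}$ multiple gadgets, and $2\binom{k}{2}$ incident gadgets, each contributing a constant number of types whose sizes (not counts) absorb the dependence on $n$ and $m$. Your added check that each named type is genuinely a single neighborhood type (adjacency is uniform, thresholds are irrelevant to $\nd$, and every type is an independent set) is exactly the verification the paper takes for granted, so nothing further is needed.
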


\begin{theorem}
\label{thm:HardnessND1}
If the graph $G$ contains a colored copy of $H$, then $G'$ with the threshold function~$f$ contains a target set of size $b$.
\end{theorem}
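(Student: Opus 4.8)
The plan is to prove the forward direction of the reduction's correctness: given a $k$-clique in $G$, construct an explicit target set $S$ of size exactly $k'$ in $G'$ and verify the activation process succeeds. First I would fix a multicolored clique in $G$, which selects one vertex $v_{i_a} \in V_a$ from each color class and one edge $e_{j_{ab}} \in E_{ab}$ from each edge set, with the incidence guarantee that $v_{i_a}, v_{i_b} \in e_{j_{ab}}$ whenever $e_{j_{ab}}$ is chosen. Following the numeration scheme, I would place $i_a$ vertices in each \emph{$L_a$-\up} type and the complementary $n - i_a$ in \emph{$L_a$-\down}, and $n^2 j_{ab}$ vertices in each \emph{$L_{ab}$-\up} type with $n^2(m - j_{ab})$ in \emph{$L_{ab}$-\down}. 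I would then check the budget: summing over the $k$ color classes gives $\sum_a n = kn$ and over the $\binom{k}{2}$ edge sets gives $\sum_{ab} n^2 m = \binom{k}{2}n^2m$, matching $k'$ exactly, with no vertices selected in any guard or incident type.

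Next I would verify activation proceeds correctly. The selection-part vertices have threshold equal to their degree, so the chosen $s$ vertices in each gadget immediately satisfy Lemmas~\ref{lem:SelectionGadgetProperties} and~\ref{lem:MultipleGadget}: the guard types of every $L_a$ and every $M_{ab}$ activate, and since the selection in each $M_{ab}$ is divisible by $q = n^2$, the \emph{$M$-\up}, \emph{$M$-\down}, and \emph{$M$-\guard} types all activate. The heart of the argument is the incident gadget $I_{a:ab}$. Here I would count, for the vertex $u_\ell$ in \emph{$I_{a:ab}$-\up} associated with edge $e_\ell$, how many of its neighbors in \emph{$L_a$-\up} and \emph{$L_{ab}$-\up} are active: this number equals $i_a + n^2 j_{ab}$. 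Because $u_\ell$ has threshold $\ell + n^2\ell$ — more precisely the threshold encoding $i + n^2 j$ for the incidence it represents — the uniqueness of the sum $i + n^2 j$ (guaranteed by $q = n^2$) ensures that the threshold of the \emph{selected} incidence's vertex is met precisely when the chosen vertex is indeed incident to the chosen edge.

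The subtle counting step is the guard type \emph{$I_{a:ab}$-\guard}, which has threshold $m+2$. I would argue that in the \up\ part, every $u_\ell$ with $\ell \le j_{ab}$ gets activated (its threshold is at most the active count $i_a + n^2 j_{ab}$), giving $j_{ab}+1$ activations, while symmetrically in the \down\ part every $w_\ell$ with $\ell \ge j_{ab}$ activates, giving $m - j_{ab} + 1$ activations; the selected index $j_{ab}$ is counted in \emph{both} parts exactly when incidence holds, so the guard sees $(j_{ab}+1) + (m - j_{ab}+1) = m+2$ active neighbors and fires. Once the guard type fires, its full complement of vertices feeds back to activate every remaining vertex in \emph{$I_{a:ab}$-\up} and \emph{$I_{a:ab}$-\down}, and then the degree-thresholded selection parts of all gadgets complete, activating $G'$ entirely. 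The main obstacle will be carefully bookkeeping the ``off by one at the selected index'' phenomenon that yields exactly $m+2$ rather than $m+1$ active neighbors at the guard, and confirming that a \emph{non-incident} choice would yield only $m+1$, since this threshold gap is the entire mechanism enforcing correctness; I would handle it by explicitly partitioning the incident-gadget vertices by their index relative to $j_{ab}$ and tracking the double-count at the boundary.
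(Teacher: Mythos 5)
Your proposal is correct and follows essentially the same route as the paper's proof: the same explicit target set (selection parts only, $i_a$/$n-i_a$ and $n^2 j_{ab}$/$n^2(m-j_{ab})$ splits summing to $k'$), the same round-by-round analysis via Lemmas~\ref{lem:SelectionGadgetProperties} and~\ref{lem:MultipleGadget}, and the same key double-count at the incident gadget giving exactly $(j_{ab}+1)+(m-j_{ab}+1)=m+2$ active neighbors for the guard, after which the guard feeds back and the degree-threshold vertices finish last. The only blemish is the momentary misstatement that $u_\ell$ has threshold ``$\ell + n^2\ell$'' (it is $i'+n^2\ell$ for the endpoint $v_{i'}$ of $e_\ell$), which you immediately correct, so nothing in the argument is affected.
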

\begin{proof}
Let $K$ be a colored copy of $H$ in the graph $G$.
We construct a set $S \subseteq V(G')$.
Let $v^c_i \in V(K) \cap V_c$.
We add $i$ vertices in the {$L_c$-\up} type and $n - i$ in the {$L_c$-\down} type into the set $S$.
Let $e^{cd}_j \in E(K) \cap E_{cd}$.
For the set $E_{cd}$ we have a multiple gadget $M_{cd}$ and there is a selection gadget $L_{cd}$ inside $M_{cd}$.
We add $2nj$ vertices in the {$L_{cd}$-\up} type and $2n(m_{cd}-j)$ vertices in the {$L_{cd}$-\down} into the set $S$.
We have $n$ vertices in $S$ for every color class $V_c$ and $2nm_{cd}$ vertices in $S$ for every edge set $E_{cd}$.
Thus,
\(
  |S| = kn + 2n\bar{m} = b
\).

We claim the set $S$ is a target set.
We analyze the selection gadget $L_c$, the multiple gadget $M_{cd}$ (with the $L_{cd}$ selection gadget), and the incidence gadget $I_{c:cd}$.
All vertices in the types {$L_c$-\guard} and {$L_{cd}$-\guard} are activated in the first round (see the proof of Lemma~\ref{lem:SelectionGadgetProperties}).
All vertices in the types {$M_{cd}$-\down}, {$M_{cd}$-\up} and {$M_{cd}$-\guard} are activated during the first three rounds -- for details see the proof of Lemma~\ref{lem:MultipleGadget}.

Recall $V(I_{c:cd}\text{-\up}) = \setof{u_\ell}{e_\ell \in E_{cd}}$ and $V(I_{c:cd}\text{-\down}) = \setof{w_\ell}{e_\ell \in E_{cd}}$.
The threshold of $u_\ell \in V(I_{c:cd}\text{-\up})$ is $2n\ell + \ell'$ for some $\ell' \in \{0,\dots,n\}$.
There are $2nj + i$ vertices activated in the types {$L_{cd}$-\up} and {$L_c$-\up}.
Vertices $u_0,\dots,u_{j-1}$ are activated in the first round because their thresholds are strictly smaller than $2nj$.
The threshold of $u_{j}$ is $2nj + i$ because this vertex corresponds to the incidence $v_i \in e_j$.
Thus, the vertex $u_j$ is activated in the first round as well.
Vertices $u_{j+1},\dots,u_{m_{cd}}$ have thresholds bigger than $2n(j + 1)$ and cannot be activated in the first round.
By the same analysis we get that vertices $w_{j},\dots,w_{m_{cd}}$ in the {$I_{c:cd}$-\down} type are activated in the first round.

In the first round there are $m_{cd}+2$ activated vertices in the types {$I_{c:cd}$-\up} and {$I_{c:cd}$-\down} altogether.
All vertices in the {$I_{c:cd}$-\guard} type are activated in the second round because they have threshold $m_{cd} + 2$.
The maximum threshold in the {$I_{c:cd}$-\up} ({$I_{c:cd}$-\down}) type is $n + 2nm_{cd}$.
Thus, the rest of vertices in the types {$I_{c:cd}$-\up} and {$I_{c:cd}$-\down} are activated in the third round because they have $n + 2nm_{cd}$ active neighbors in the {$I_{c:cd}$-\guard} type.

All vertices outside the types {$L_{c}$-\up}, {$L_{c}$-\down}, {$L_{cd}$-\up}, and {$L_{cd}$-\down} are activated during the first three rounds.
Let $U$ be a set of vertices which are not activated during the first three rounds.
Note that $U$ is an independent set and for every $u \in U$ holds that $f(u) = \deg(u)$.
Therefore, vertices in $U$ are activated in the fourth round.
\end{proof}

\begin{theorem}
\label{thm:HardnessND2}
If the graph $G'$ with the threshold function $f$ contains a target set of size $b$, then $G$ contains a colored copy of $H$.
\end{theorem}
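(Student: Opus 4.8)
The plan is to start from an arbitrary target set $S$ of $G'$ with $|S|\le k'$ (i.e.\ a set whose activation process activates all of $G'$) and to show that it must encode a globally consistent choice of one vertex per color class and one edge per color-class pair, that is, a $k$-clique. First I would establish \emph{budget tightness}. Applying Lemma~\ref{lem:SelectionGadgetProperties} to each selection gadget $L_a$ forces at least $n$ vertices of $S$ into its selection part, and applying Lemma~\ref{lem:MultipleGadget} to each multiple gadget $M_{ab}$ forces at least $qs=n^2m$ vertices of $S$ into the selection part of $L_{ab}$; since there are $k$ color classes and $\binom{k}{2}$ edge sets, these lower bounds already sum to $kn+\binom{k}{2}n^2m=k'$. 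Hence $|S|=k'$, every vertex of $S$ lies in a selection part, and each such part receives \emph{exactly} its required count, with no vertex of $S$ in any guard type, any $M$-type, or any incident gadget. Lemma~\ref{lem:MultipleGadget} further guarantees that the number of selected vertices in {$L_{ab}$-\up} is divisible by $q=n^2$, so it equals $n^2j_{ab}$ for a well-defined index $j_{ab}\in\{0,\dots,m\}$, designating the selected edge $e_{j_{ab}}\in E_{ab}$; likewise the number of selected vertices in {$L_a$-\up}, call it $i_a\in\{0,\dots,n\}$, designates the selected vertex $v_{i_a}\in V_a$.

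The core step is to show that each incident gadget $I_{a:ab}$ forces $v_{i_a}$ to be the $V_a$-endpoint of $e_{j_{ab}}$. Write $p_\ell$ for the index of the $V_a$-endpoint of $e_\ell$, so that by construction the threshold of $u_\ell\in V(I_{a:ab}\text{-\up})$ is $p_\ell+n^2\ell$ and that of $w_\ell$ is $(n-p_\ell)+n^2(m-\ell)$. The key observation is that the active counts in the selection parts are \emph{frozen} at $i_a$ in {$L_a$-\up} and $n^2j_{ab}$ in {$L_{ab}$-\up} throughout the early rounds: a non-selected selection-part vertex has threshold equal to its degree, so it can activate only once \emph{all} of its neighbours (including the guard vertices attached to it) are active, which cannot happen before the very end of the process. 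Consequently, as long as the $I_{a:ab}$-guard is inactive, the only active neighbours of $u_\ell$ are the $i_a+n^2j_{ab}$ active selection-part vertices. A direct comparison using $q=n^2>n$ shows that $u_\ell$ activates for every $\ell<j_{ab}$ and for no $\ell>j_{ab}$, while $u_{j_{ab}}$ activates iff $i_a\ge p_{j_{ab}}$; symmetrically $w_\ell$ activates for every $\ell>j_{ab}$ and $w_{j_{ab}}$ activates iff $i_a\le p_{j_{ab}}$. Thus exactly $m$ of the up/down vertices activate as a baseline, and the two boundary vertices $u_{j_{ab}},w_{j_{ab}}$ both activate if and only if $i_a=p_{j_{ab}}$, giving $m+2$ active up/down vertices precisely in that case and $m+1$ otherwise.

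It remains to convert this into the incidence constraint through a \emph{deadlock} argument. The guard type {$I_{a:ab}$-\guard} has threshold $m+2$ and its only neighbours are {$I_{a:ab}$-\up} and {$I_{a:ab}$-\down}. If only $m+1$ up/down vertices are active, then the guard cannot activate; but then no further up/down vertex can activate either, since their only remaining neighbour type is the inactive guard while the selection counts are frozen. Hence the process stalls with the guard never activated, contradicting that $S$ activates all of $G'$. Therefore $i_a=p_{j_{ab}}$, i.e.\ the $V_a$-endpoint of the selected edge $e_{j_{ab}}$ is exactly $v_{i_a}$; applying the same argument to the mirror gadget $I_{b:ab}$ forces the $V_b$-endpoint of $e_{j_{ab}}$ to be the selected vertex of $V_b$. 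Doing this for every pair $\{a,b\}$ shows that the $k$ selected vertices $\{v_{i_a}\}_{a\in[k]}$ are pairwise adjacent in $G$, the witness of adjacency between classes $V_a$ and $V_b$ being the selected edge $e_{j_{ab}}$, so they form a $k$-clique. I expect the main obstacle to be making the deadlock argument fully rigorous: one must argue carefully that the selection-part counters stay frozen during all the relevant rounds, so that an inconsistent incidence count cannot be ``repaired'' in a later round, and that the only propagation route into a guard is through its up/down types, so that $m+1$ activations genuinely block the process forever.
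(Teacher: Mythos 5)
Your proposal is correct and follows essentially the same route as the paper's proof: budget tightness via Lemma~\ref{lem:SelectionGadgetProperties} and Lemma~\ref{lem:MultipleGadget}, decoding the selected vertex/edge from the exact counts in the \up{} types, the threshold comparison in $I_{a:ab}$ showing that $m+2$ up/down vertices activate precisely when the incidence holds (versus $m+1$ otherwise), and the guard deadlock yielding the contradiction. The ``frozen counters'' step you flag as the main obstacle is handled at the same level of rigor in the paper (it simply asserts that no new vertex appears in the neighborhood of the \up/\down{} types), so your proof is on par with, and structurally identical to, the published one.
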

\begin{proof}
Let $S$ be a target set of the graph $G$ of size $b$.
There are $k$ selection gadgets $L(n)$ in $G'$.
By Lemma~\ref{lem:SelectionGadgetProperties}, the set $S$ has to contain at least $n$ vertices in the selection part of every gadget $L(n)$.
There is also a selection gadget $L(2nm_{cd})$ in multiple gadgets $M_{cd}$ for each $\{c,d\} \in E(H)$.
By Lemma~\ref{lem:MultipleGadget}, the set $S$ has to contain at least $2nm_{cd}$ vertices in the selection part of every gadget $L(2nm_{cd})$ in $M_{cd}$.
Since $|S| = b = kn + 2n\bar{m}$, there is no other vertex in $S$.

Now, for every $V_c$ and $E_{cd}$ we select a vertex (or an edge, respectively).
We select a vertex $v^c_i \in V_c$ if $|V(L_c\text{-\up}) \cap S| = i$.
We select an edge $e^{cd}_j \in E_{cd}$ if $|V(L_{cd})\text{-\up} \cap S| = 2nj$.
By Lemma~\ref{lem:SelectionGadgetProperties} and~\ref{lem:MultipleGadget}, we know that this selection is well-defined.
We claim that if $v = v^c_i \in V_c$ is the selected vertex and $e = e^{cd}_j \in E_{cd}$ is the selected edge, then $v \in e$.

For a contradiction suppose $v \notin e$.
We analyze the incidence gadget $I_{c:cd}$.
Let $V(I_{c:cd}\text-\up) = \{u_0,\dots,u_{m_{cd}}\}$ and $V(I_{c:cd}\text-\down) = \{w_0,\dots,w_{m_{cd}}\}$.
Vertices in the type {$I_{c:cd}$-\up} have $i + 2nj$ active neighbors.
Vertices in the type {$I_{c:cd}$-\down} have $n - i + 2n(m_{cd} - j)$ active neighbors.
As we say in the proof of the previous theorem, %Theorem~\ref{thm:HardnessND1},
vertices $u_0,\dots,u_{j - 1}$ and $w_{j+1},\dots,w_{m_{cd}}$ are activated in the first round and vertices $u_{j+1},\dots,u_{m_{cd}}$ and $w_{0},\dots,w_{j-1}$ are not activated.

It remains to analyze the vertices $u_j$ and $w_{j}$.
Suppose $u_j$ is activated in the first round.
Thus, $f(u_j) = i' + 2nj < i + 2nj$.
Note that $i' < i$ because we suppose $v \notin e$.
For the threshold of $w_{j}$ holds
\[
  f(w_{j}) = n - i' + 2n(m_{cd}-j) > n - i + 2n(m_{cd}-j).
\]
Since the vertex $w_{j}$ has $n - i + 2n(m_{cd}-j)$ active neighbors (i.e., $|N(w_j) \cap S_0| = n - i + 2n(m_{cd}-j)$), the vertex $w_{j}$ cannot be activated in the first round.
Thus, at least one of the vertices $u_j, w_{j}$ is not activated in the first round.

Any vertex of the type {$I_{c:cd}$-\guard} cannot be activated in the second round because they have threshold $m_{cd} + 2$ and they have at most $m_{cd} + 1$ active neighbors in the first round.
Vertices of the type {$I_{c:cd}$-\guard} have to be activated after some other vertices in the types {$I_{c:cd}$-\up} or {$I_{c:cd}$-\down} are activated.
Let $W$ be the neighbors of {$I_{c:cd}$-\up} or {$I_{c:cd}$-\down} that are not from {$I_{c:cd}$-\guard}.
By the construction, the vertices of $W$ are only in the selection parts of the  gadgets.
Thus, for every $w \in W$ holds that $f(w) = \deg(w)$.
There are no vertices in the neighborhood of {$I_{c:cd}$-\up} and {$I_{c:cd}$-\down} which would be activated during the activation process and the vertices of the type {$I_{c:cd}$-\guard} stay inactive during the whole process.
Therefore, $S$ is not a target set, which is a contradiction.
\end{proof}

Theorem~\ref{thm:TSSisWwrtND} is a corollary of Theorem~\ref{thm:HardnessND1} and~\ref{thm:HardnessND2} and Observation~\ref{obs:HardnessNDSize} and~\ref{obs:HardnessNDParameter}.

\subsection{Modular-width}
%\subparagraph{Overview of Proof of Theorem~\ref{thm:MAJTSSisWwrtMW}}

In fact this can be seen as a~clever twist of the~ideas contained in the~proof of Theorem~\ref{thm:TSSisWwrtND}.
There are some nodes of the~neighborhood diversity decomposition already operating in the~majority mode -- e.g. guard vertices -- these we keep untouched.
For vertices with threshold set to their degree one has to ``double'' the~number of vertices in the~neighborhood. % and make sure that no newly added vertex is activated before all with threshold $\deg$.
Finally, one has to deal with types having different thresholds for each of its vertices, which is quite technical.
Here we exploit the~property of the~previous proof -- that these vertices naturally come in pairs and that it is possible to replace each of these vertices by a~collection of cliques.
This ensures that even if the~neighborhood is the~same some vertices get activated and some not.

As already noted, we reduce the instance $(G,H)$ of CSI into an instance~$(G',b)$ of \MAJTSS.
The graph~$G'$ consists of several different types of gadgets.
Throughout the~description of gadgets, we use the~following terminology.
The~vertices that are adjacent to vertices outside of the gadget (that introduces them) are referred to as \emph{interface vertices}, the~other vertices of the gadget are \emph{internal vertices}.
The~vertices outside of the gadget that are adjacent to its interface part are \emph{neighboring vertices of the~gadget}.
As we try to adapt the previous reduction to work in a ``majority'' regime, we need to know how many neighboring vertices gadgets have, to make gadgets work properly.

\subsubsection{Selection Gadget $L(s,r)$}
This gadget (together with an appropriate budget value) ensures that every target set encodes a~number from $0$ to $s$.
The~parameter $r$ denotes the~number of neighboring vertices, that is, we assume that each interface vertex has exactly~$r$ neighbors not in the gadget.
The~gadget consists of the following five types (see Figure~\ref{fig:selGadgetMajTSS} for an overview of the construction):
\begin{itemize}
	\item Two independent sets each of size $s$ called $L$-pos and $L$-neg (selection part of the gadget).
	\item One independent set of size $3s$ called $L$-guard.
	\item One independent set of size $r+3s$ called $L$-doubling.
	\item One independent set of size $2s$ called $L$-end.
\end{itemize}
The~$L$-guard type is connected to $L$-pos and $L$-neg, $L$-doubling is connected to $L$-pos, $L$-neg, and $L$-end.
The~$L$-pos and the $L$-neg types form the~interface part of the~gadget.
\begin{figure}[bt]
	\usetikzlibrary{positioning,fit,calc,decorations.pathreplacing}

\begin{tikzpicture}[node distance=2.3cm]
  \tikzstyle{bag} = [circle, draw, minimum width=1.3cm]
  \tikzstyle{edge} = [black, ultra thick]
  \tikzstyle{dummy} = [circle, dotted, draw, minimum width=1.3cm]
  \tikzstyle{fitting} = [draw, dashed, rounded corners, black!40, thick, inner sep=9pt]

  \node[bag, label={90:$L$-{\rm pos}}] (VaUP) {$s$};
  \node[bag, label={90:$L$-{\rm neg}}, below of=VaUP] (VaDOWN) {$s$};
  \node[bag, label={[yshift=-2pt]90:$L$-{\rm guard}}] at ($(VaUP)!.5!(VaDOWN) - (2.5,0)$) (VaCTRL) {$3s$};

  \draw[edge] (VaUP) -- (VaCTRL) -- (VaDOWN);

  %% Dummy nodes in the neighborhood
  \node[dummy, right of=VaUP] (dummyUP) {$r$};
  \node[dummy, right of=VaDOWN] (dummyDOWN) {$r$};

  \draw[edge] (VaUP) -- (dummyUP);
  \draw[edge] (VaDOWN) -- (dummyDOWN);

  \node[bag, left of=VaCTRL, label={90:$L$-{\rm doubling}}] (VaLARGE) {$3s + r$};
  \node[bag, left of=VaLARGE, label={90:$L$-{\rm end}}] (VaFinisher) {$2s$};

  \draw[edge] (VaDOWN) [out=180,in=315] to (VaLARGE);
  \draw[edge] (VaLARGE) [out=45,in=180] to (VaUP);
  \draw[edge] (VaLARGE) -- (VaFinisher);

  \node[fitting, fit={(VaUP) (VaDOWN)}, label={[yshift=-3pt]90:Selection part},yshift=6pt] {};

  \draw [decorate,decoration={brace,amplitude=10pt}] ($(VaCTRL) - (0,3) + (1,0)$) -- ($(VaFinisher) - (.6,3)$) node [black,midway,yshift=-15pt] {internal};
  \draw[dotted] ($(VaCTRL) - (0,3) + (1,0)$) -- ($(VaCTRL) + (0,3) + (1,0)$);

  \draw [decorate,decoration={brace,amplitude=10pt}] ($(VaCTRL) - (0,3) + (4,0)$) -- ($(VaCTRL) - (0,3) + (1,0)$) node [black,midway,yshift=-15pt] {interface};
  \draw[dotted] ($(VaCTRL) - (0,3) + (4,0)$) -- ($(VaCTRL) + (0,3) + (4,0)$);

  \draw [decorate,decoration={brace,amplitude=10pt}] ($(VaCTRL) - (0,3) + (6,0)$) -- ($(VaCTRL) - (0,3) + (4,0)$) node [black,midway,yshift=-15pt] {neighbors};
\end{tikzpicture}
	\caption{An~overview of the~selection gadget $L(s,r)$ for \MAJTSS}
	\label{fig:selGadgetMajTSS}
\end{figure}
The next definition gives us a useful tool for showing a lower bound on a number of vertices of a gadget that need to be in every target set.

\begin{definition}
	\label{def:deficit}
	Let $J$ be a gadget in $G'$ and let $\ol{V_{J}}$ be the set $V(G') \setminus V(J)$.
  The \emph{deficit $\defic_J(v)$ of a vertex $v \in V(J)$ with respect to $J$} is defined as
	\[
    \defic_J(v) = \begin{cases}
  		0 & \mbox{if } |N(v) \cap \ol{V_{J}}| \geq \deg(v) / 2 \\
  		\lceil \deg(v)/2 \rceil - |N(v) \cap \ol{V_J}| & \mbox{otherwise}.
  	\end{cases}
  \]
	We usually say just \emph{the deficit of a vertex~$v$} when the gadget~$J$ is clear from the context.
\end{definition}

In other words, deficit determines how many neighbors of~$v$ in $J$ we need to activate, provided that every vertex outside of $J$ is already activated.
Note that for an internal vertex, its deficit is exactly its threshold.
For an interface vertex, its deficit is its threshold decreased by a number of adjacent vertices outside $J$.

The following lemma gives us a lower bound on how many elements of $J$ need to be in any target set.
\begin{lemma}
	\label{lem:deficitLowerBound}
	If all vertices in a gadget $J$ have deficit at least $d$ and $S \subseteq V(G')$ is a set with $|S \cap V(J)| < d$, then $S$ is not a target set.
\end{lemma}
\begin{proof}
	We show a stronger statement that $S' = S \cup \ol{V_J}$ is not a target set.
  We see that every vertex $v \in V(J)$ has at most $|S \cap V(J)| + |N(v) \cap \ol{V_J}|$ neighbors in $S'$; which is at most $d - 1 + |N(v) \cap \ol{V_J}|$.
  However, from the definition of deficit and the assumption that deficit is at least $d$ we have $\lceil \deg(v)/2 \rceil \geq |N(v) \cap \ol{V_J}| + d$.
  Thus, no vertex in~$J$ has enough active neighbors and the activation process for~$S'$ terminates immediately.
\end{proof}
The following is then a simple consequence.
\begin{lemma}
	\label{lem:MWSelGadgetMin}
	Let $L = L(s,r)$ be a~selection gadget in the graph $G'$.
  If $S \subseteq V(G')$ contains less than $s$ vertices in $V(L)$, then $S$ is not a~target set.
\end{lemma}
\begin{proof}
	We examine the~values of deficits for vertices in gadget $L$.
  The~degrees of vertices in $L$-end, $L$-doubling, and $L$-guard are $3s+r$, $4s$, and $2s$, respectively, and therefore their deficits are $(3s+r)/2$, $2s$, and $s$, respectively.
  The~vertices in $L$-pos and $L$-neg have degree $6s+2r$ and so their threshold is $3s+r$.
  As they have $r$ neighbors in $\ol{V_L}$, their deficit is $3s$.
	We see that every vertex has the deficit at least $s$.
  The lemma follows from Lemma~\ref{lem:deficitLowerBound}.
\end{proof}
\begin{lemma}
	\label{lem:MWSelGadgetTSStructure}
	Let $L=L(s,r)$ be a~selection gadget in~$G'$ and suppose ${S \subseteq V(G')}$ is a~target set that contains exactly $s$ vertices from $V(L)$.
  Then, the~following holds.
	\begin{enumerate}
		\item The~vertices of $V(L) \cap S$ are in the~selection part of the~gadget $L$.\label{itm:SelGadgetTSIsInIface}
		\item Suppose that $v$ is a vertex in the selection part of $L$ but not in $S$. The~vertex $v$ cannot be activated before all neighboring vertices of $L$ adjacent to $v$ are activated.\label{itm:SelGadgetNeighborstFirst}
	\end{enumerate}
\end{lemma}
\begin{proof}
	The~only vertices with deficit $s$ are in the~$L$-guard part, others have their deficit strictly higher.
  Suppose $S$ contains less than $s$ vertices from the neighborhood of $L$-guard (i.e., from the selection part of $L$ -- $L$-pos or $L$-neg).
  To activate $L$-guard some other vertices in the selection part have to be activated.
  Suppose further that the activation process manages to activate the rest of the graph, i.e., all~$r$ neighbors of the gadget $L$ are activated.
  Even in this case, the vertices in the selection part have at most $s + r$ active neighbors and their threshold is $3s + r$.
  Thus, $L$-doubling has to be activated.
  However, vertices in $L$-doubling have threshold $2s$ and they have at most $s$ active neighbors.
  To activate $L$-doubling, the $L$-end type has to be activated.
  Again, vertices $L$-end have threshold strictly higher than $s$ and they have at most $s$ active neighbors.
  Therefore, the activation process would stop and the only way how to proceed is that~$S$ contains $s$ vertices in the~selection part of~$L$ ($L$-pos and $L$-neg).

	For the~second part, we can assume $v \in V(\mbox{$L$-pos}) \setminus S$, since the~gadget is symmetric.
  From Part~\ref{itm:SelGadgetTSIsInIface} we already know that the~set $S$ contains only vertices from the selection part.
  Thus, $L$-guard is activated in the first round.
  To activate $v$, other $r$ vertices has to be activated.
  Since $S$ does not intersect $L$-doubling and $L$-end, those types can be activated after $L$-pos and $L$-neg.
  Thus, the neighboring vertices of $L$ connected to $v$ has to be activated before $v$.
\end{proof}

\subsubsection{Threshold Decrease Gadget $D(d,r)$}
The~purpose of a threshold decrease gadget $D(d,r)$ is to ensure there are $d$ active vertices in its interface part.
This can be also viewed as effectively decreasing the~threshold of neighboring vertices by $d/2$, since if it works as described, then it adds~$d$ neighbors to the neighborhood of the gadget all of which are activated.
The~parameter $r$ again determines the~number of neighboring vertices.

The~gadget consists of the~following three types (see Figure~\ref{fig:thrDecGadgetMajTSS} for an overview):
\begin{itemize}
	\item The~independent set of size $d$ called \emph{$D$-interface}.
	\item The~independent set of size $2d + r + 1$ called \emph{$D$-middle}.
	\item The~independent set of size $d$ called \emph{$D$-end}.
\end{itemize}
The type~$D$-middle is connected to $D$-interface and $D$-end.
\begin{figure}[bt]
	\begin{center}
	\usetikzlibrary{positioning,fit,calc}

\begin{tikzpicture}[node distance=2.5cm]
  \tikzstyle{bag} = [circle, draw, minimum width=1.3cm]
  \tikzstyle{edge} = [black, ultra thick]
  \tikzstyle{dummy} = [circle, dotted, draw, minimum width=1.3cm]
  \tikzstyle{fitting} = [draw, dashed, rounded corners, black!40, thick, inner sep=18pt]

  \node[bag, label={90:$D$-{\rm interface}}] (DDI) {$d$};
  \node[bag, left of=DDI, label={90:$D$-{\rm middle}}] (DD) {$2d+r+1$};
  \node[bag, left of=DD, label={90:$D$-{\rm end}}] (DDD) {$d$};

  \node[dummy, right of=DDI] (DDO) {$r$};

  \draw[edge] (DDD) -- (DD) -- (DDI) -- (DDO);

  \node[fitting, fit=(DDI), label={90:interface}] {};
\end{tikzpicture}
	\end{center}
	\caption{An~overview of the threshold decrease gadget for \MAJTSS}
	\label{fig:thrDecGadgetMajTSS}
\end{figure}

\begin{lemma}
	\label{lem:MWThrDecGadgetMin}
	Let $D = D(d,r)$ be a~threshold decrease gadget in the~graph $G'$. If $S \subseteq V(G)$ contains less than $d$ vertices in $V(D)$ then $S$ is not a~target set.
  Moreover, $S \cap V(D)$ is a subset of vertices in $D$-interface and $D$-end.
\end{lemma}
\begin{proof}
	Let $S_D$ denote $S \cap V(D)$.
	Again we can assume that vertices in $\overline{V_D}$ are in $S$.
	The~deficit of vertices in the $D$-end part is ${(2d+r+1)/2 > d}$, the~deficit of vertices in the $D$-middle part is $d$, and the~deficit of vertices in $D$-interface part is $(2d+2r+1)/2 > d$.
  Therefore, by Lemma~\ref{lem:deficitLowerBound} the set~$S$ is not a target set.
  The moreover part of the lemma follows immediately, since the only part with deficit exactly~$d$ is the $D$-middle part and therefore~$S$ must contain~$d$ vertices in its neighborhood.
\end{proof}

Note that we can choose how we distribute $d$ vertices in $S$ between $D$-end and $D$-interface.
The next lemma states that for determining the existence of a target set, it is enough to consider sets $S$ that have $d$ vertices in $D$-interface.

\begin{lemma}
	\label{lem:canonicalizationOfTS}
	Let $D = D(d,r)$ be a threshold decrease gadget in the~graph $G'$.
  If $S$ is a target set, then the set $S' = \big(S \setminus V(D)\big) \cup V(\mbox{$D$-{\rm interface}})$ is a target set and $|S'| \leq |S|$.
\end{lemma}
\begin{proof}
We first verify that $|S'| \leq |S|$.
The sets $S$ and $S'$ coincide on $V(G') \setminus V(D)$.
From Lemma~\ref{lem:MWThrDecGadgetMin} we have that $|S \cap V(D)| \geq d$.
From the definition of $S'$ we see that $|S' \cap V(D)| = d$.
Putting these three observations together, we obtain that $|S'| \leq |S|$.

Now observe that if we start the activation process with the set $S'$ the vertex set of the gadget~$D$ activates after two rounds.
In particular, we have $S \subseteq S'_2$.
Consequently, the activation process starting from the set $S'$ must activate all the vertices activated in the activation process starting from the set $S$.
Since $S$ was a target set, so is~$S'$.
\end{proof}

We set the budget $b$ in such a~way that it exactly covers all selection gadget and all threshold decrease gadgets.
This means that if any element of $S$ is outside of the~selection part of the selection gadgets or outside of the $D$-interface or the $D$-end part of the threshold decrease gadgets, then $S$ is not a~target set. By Lemma~\ref{lem:canonicalizationOfTS}, we know that we can ignore $S$ intersecting any $D$-end part. This motivates the~following definition.

\begin{definition}
	\label{def:hopefulSet}
	We say that a~set $S \subseteq V(G')$ is \emph{hopeful} if all its elements lie in the~selection parts of selection gadgets and in the~interface parts of threshold decrease gadgets.
\end{definition}

When the~budget is set as mentioned, it follows from Lemma~\ref{lem:MWSelGadgetMin}, Lemma~\ref{lem:MWThrDecGadgetMin}, and Lemma~\ref{lem:canonicalizationOfTS} that if we want to decide whether there exists a~target set, it is sufficient to consider only hopeful sets.

\begin{definition}
	Let $S \subseteq V(G')$ be a hopeful set and $L = L(s,r)$ be a selection gadget in $G'$. If there are $i$ elements of $S$ in $L$-pos and $s-i$ elements of $S$ in $L$-neg, we say that $S$ \emph{represents the number $i$ in $L$}.
\end{definition}

\subsubsection{Check Gadget $C(Z,s)$}
The parameter $s$ is a positive integer and $Z \subseteq \{0,\dots,s\}$.
The~gadget has two interface parts $C$-pos and $C$-neg that will be connected to the pos and neg parts of selection gadgets.
The~gadget is constructed in such a~way that it is activated if and only if the~sum of the~numbers represented by connected selection gadgets is in $Z$.

We start with few auxiliary definitions.
An~\emph{$i$-clique group} is the~graph ${(2(s-i)+1)\times K_{2i+1}}$ (i.e., disjoint union of $2(s-i)+1$ cliques, each of size $2i+1$).
The~\emph{weight} of an~$i$-clique group is the~number of vertices of the~group and is denoted by $w_s(i) = (2(s-i)+1) \cdot (2i+1)$.

The~graph $C_Z$ is defined as a~disjoint union of $z$-clique groups for each $z\in Z$.
The~graph $\ol{C}_Z$ is defined as a~disjoint union of $(s-z)$-clique groups for each $z \in Z$.
The $z$-clique group in $C_Z$ and the $(s-z)$-clique group in $\ol{C}_Z$ are referred to as \emph{complementary clique groups}.
Note that the clique group and its complementary clique group have the same weight.

The~\emph{weight of the set $Z$} is defined as
\[
  w_s(Z) = \sum_{z \in Z} w_s(z) \,.
\]

The~gadget $C(Z,s)$ consists of the~following parts (see Figure~\ref{fig:MWCheckGadget} for overview):
\begin{itemize}
	\item Interface parts $C$-pos and $C$-neg that are formed by $C_Z$ and $\ol{C}_Z$, respectively.
	\item Independent set $C$-guard of size $s$ that is connected to the parts $C$-pos and $C$-neg.
	\item A single vertex connected to $C$-guard.
	\item A threshold decrease gadget $D_C = D\bigl(2s, 2w_s(Z)\bigr)$ with its interface part connected to $C$-pos and $C$-neg.
\end{itemize}
\begin{figure}[bt]
	\usetikzlibrary{positioning,fit,calc,shapes}

\begin{tikzpicture}[node distance=2.2cm,scale=0.80,font=\footnotesize]
  \tikzstyle{bag} = [circle, draw, minimum width=.7cm]
  \tikzstyle{edge} = [black, ultra thick]
  \tikzstyle{dummy} = [circle, dotted, draw, minimum width=1.3cm]
  \tikzstyle{fitting} = [draw, dashed, rounded corners, black!40, thick, inner sep=4pt,text=black,minimum width=2.5cm]
  \tikzstyle{superbag}=[draw, ellipse, inner sep=0pt]
  \tikzstyle{dot}=[black, circle, fill, minimum width=.1cm, inner sep=2pt]

  \newcommand{\trojiceTecek}[1]{
    \node[dot] at ($(#1.north)!.3!(#1.south)$) {};
    \node[dot] at ($ ($(#1.west)!0.66!(#1.south west)$)!.2!($(#1.east)!0.33!(#1.north east)$) $) {};
    \node[dot] at ($ ($(#1.east)!0.66!(#1.south east)$)!.2!($(#1.west)!0.33!(#1.north west)$) $) {};
  }

  \newcommand{\dvojiceTecek}[1]{
    \node[dot] at ($(#1.north)!.3!(#1.south)$) {};
    \node[dot] at ($(#1.north)!.7!(#1.south)$) {};
  }

  %%% upper super bag
  \begin{scope}[node distance=1.6cm]
    \node[bag] at (0,4.4) (UPX) {};  %% upper
    \node[bag, below of=UPX] (UPY) {}; %% lower
    \node[yshift=3pt] at ($(UPX)!0.5!(UPY)$) (UPXY) {$\vdots$};
    \node[superbag, fit=(UPX) (UPXY) (UPY)] (UP) {};
    \trojiceTecek{UPX};
    \dvojiceTecek{UPY};
  \end{scope}

  %%% lower super bag
  \begin{scope}[node distance=1.6cm]
    \node[bag] (DWX) {};  %% upper
    \node[bag, below of=DWX] (DWY) {}; %% lower
    \node[yshift=3pt] at ($(DWX)!0.5!(DWY)$) (DWXY) {$\vdots$};
    \node[superbag, fit=(DWX) (DWXY) (DWY)] (DOWN){};
    \trojiceTecek{DWY};
    \dvojiceTecek{DWX};
  \end{scope}

  \node[fitting, fit=(UP) (DOWN),label={90:Interface}] {};

  %%% dummy bag
  \node[dummy, left of=UPXY,yshift=-3pt] (dummyUP) {$s$};
  \node[dummy, left of=DWXY,yshift=-3pt] (dummyDW) {$s$};

  %%% checking nodes
  \node[bag] at ($(UP)!.5!(DOWN) + (2.3cm,0)$) (CHECK) {$s$};
  \node[bag, minimum width=0cm] at ($(CHECK.east) + (0.8cm,0)$) (CHECK1) {$1$};

  %%% threshold decrease
  \node[bag] at  ($(CHECK.east) + (2.5cm,0)$) (DDI) {$2s$};
  \node[bag, right of=DDI, text width=1.3cm,align=center] (DD) {$4s+2w_s(Z)+1$};
  \node[bag, right of=DD] (DDD) {$2s$};

  \draw[edge] (DDD) -- (DD) -- (DDI);
  \node[fitting, fit=(DDD) (DD) (DDI), label={90:Threshold decrease gadget $D(2s,2w_s(Z))$}] {};

  %%% edges
  \draw[edge] (dummyUP) -- (UP);
  \draw[edge] (dummyDW) -- (DOWN);
  \draw[edge] (CHECK) -- (CHECK1);
  \draw[edge] (UP) -- (CHECK);
  \draw[edge] (DOWN) -- (CHECK);
  \draw[edge] (UP) -- (DDI);
  \draw[edge] (DOWN) -- (DDI);
\end{tikzpicture}
	\caption{An~overview of Check gadget $C(Z,s)$ for \MAJTSS}
	\label{fig:MWCheckGadget}
\end{figure}

\begin{lemma}
	\label{lem:MWCheckGadgetCorrectness}
	Let $C = C(Z,s)$ be a check gadget in the graph $G'$ and let ${L_1 = L(s_1,r_1)}, \ldots,$ ${L_h = L(s_h,r_h)}$ be selection gadgets that are connected to $C$ in such a way that each $L_i$-pos is connected to $C$-pos and each $L_i$-neg is connected to $C$-neg.
	We additionally require that $\sum s_i = s$ and no other gadget is connected to~$C$.
	Furthermore, let $S$ be a hopeful set that represents numbers $\ell_1, \ldots, \ell_h$ in gadgets $L_1, \ldots L_h$, respectively.
	Then, the gadget $C$ is activated if and only if $\ell_1 + \cdots + \ell_h \in Z$.
\end{lemma}
\begin{proof}
	First notice that due to Lemma~\ref{lem:MWSelGadgetTSStructure} Part~\ref{itm:SelGadgetNeighborstFirst}, no vertex in $L_i$ can be activated before $C$-pos and $C$-neg are fully activated.
	Let us set $\ell = \ell_1 + \cdots + \ell_h$.
	\begin{claimproof}
		\label{clm:cliqueGroupActivation}
		The $i$-clique group $K$ in $C_Z$ is activated in the first round if and only if $i \leq \ell$.
		The $(s-i)$-clique group $\ol{K}$ in $\ol{C}_Z$ is activated in the first round if and only if $s - i \leq s - \ell$.
	\end{claimproof}
	The degree of vertices in the $i$-clique group $K$ is $4s + 2i$; there are $s$ neighbors in the adjacent selection gadgets, $s$ neighbors in the $C$-guard part, $2s$ neighbors in the adjacent threshold decrease gadget, and $2i$ neighbors within the clique itself.
	Therefore, the threshold of vertices in the $i$-clique group is $2s + i$.
	We know that $2s$ vertices in the neighborhood of $K$ are already activated due to the threshold decrease gadget $D_C$.
	Thus, at least $i$ other vertices need to be activated for the $i$-clique to be activated.
	Since $S$ is hopeful, these vertices can be only in the pos parts of the adjacent selection gadgets.
	From the assumption of the lemma, we know that there are $\ell$ such vertices, which gives us that the clique group activates if and only if $\ell \geq i$.

	To prove the second part, it is enough to observe that exactly $s - \ell$ vertices are activated in the neg parts of the selection gadgets $L_1 \dots, L_h$ because $S$ is hopeful.
  The rest follows by a similar argument.
	\begin{claimproof}
		\label{clm:complementaryActivation}
		If a clique group $K$ is not activated in the first round, then its complementary clique group $\ol{K}$ is activated in the first round.
	\end{claimproof}
	Suppose that $K$ is an $i$-clique group.
	This means that $\ol{K}$ is an $(s-i)$-clique group.
	If $K$ is not activated in the first round, then we have $i > \ell$ by Claim~\ref{clm:cliqueGroupActivation}.
	But then $s-i < s- \ell$ and again by Claim~\ref{clm:cliqueGroupActivation} the clique group $\ol{K}$ is activated in the first round.

  Observe that the complementary clique groups $K$ and $\ol{K}$ have the same number of vertices and at least one of them is activated after the first round by Claim~\ref{clm:complementaryActivation}.
	\begin{claimproof}
		\label{clm:bothActivate}
		The $i$-clique group $K$ in $C_Z$ and its complementary clique group $\ol{K}$ are activated in the first round if and only if $\ell = i$ and therefore $\ell \in Z$.
	\end{claimproof}
	From Claim~\ref{clm:cliqueGroupActivation} we see that both $K$ and $\ol{K}$ are activated in the first round if and only if $i \leq \ell$ and $s - i \leq s - \ell$, which together gives $i = \ell$.
	We put an $i$-clique group into $C_Z$ if and only if $i \in Z$, which finishes the proof of the claim.

	\begin{claimproof}
		If $\ell \notin Z$, the activation process in $C$ stops after the first round.
	\end{claimproof}
	If $\ell \notin Z$, by Claim~\ref{clm:bothActivate} we see that there is no pair of complementary clique groups $K$, $\ol{K}$ such that both of them are activated in the first round.
	But then $C$-guard cannot be activated, as less than half of its neighbors are activated.
  Note that the interface vertices of the gadget $C$ have to be activated in order to activate the rest of the vertices in adjacent selection gadgets (by Lemma~\ref{lem:MWSelGadgetTSStructure}).

	\begin{claimproof}
		If $\ell \in Z$, the whole gadget~$C$ is activated in three rounds.
	\end{claimproof}
	We can only consider vertices outside of threshold decrease gadgets; it is straightforward to check that if $S$ is hopeful, all threshold decrease gadgets are activated in two rounds.

	If $\ell \in Z$, then by Claim~\ref{clm:bothActivate} a pair of complementary clique groups $K$, $\ol{K}$ are both activated in the first round.
	Thus, a strict majority of vertices in $C$-pos and $C$-neg are activated in the first round.
	This causes the $C$-guard to be activated in the second round.
	In the third round, the single vertex connected to $C$-guard is activated.
	Moreover, all remaining clique groups are activated, as they have at least $3s$ activated neighbors in total ($2s$ from the threshold decrease gadget, $s$ from $C$-guard); see Claim~\ref{clm:cliqueGroupActivation}.
	This finishes the proof of Lemma~\ref{lem:MWCheckGadgetCorrectness}.
\end{proof}

\subsubsection{Multiple Gadget $M(s,q,r)$}
The last gadget used is a multiple gadget.
It is a variant of selection gadget; similarly to the selection gadget, every target set represents an integer between $0$ and $qs$.
However, the multiple gadget additionally ensures that the number is a multiple of $q$.
The last parameter $r$ determines the number of neighboring vertices, as usual.
The gadget $M$ consists of a selection gadget $L = L(qs, r+w_{qs}(Q_{q,s}))$ and a check gadget $C = C(Q_{q,s},qs)$, where $Q_{q,s} = \{0,q,2q,\ldots,sq\}$.
Observe that the multiple gadget contains a threshold decrease gadget $D(2qs, w_{qs}(Q_{q,s}))$ as a part of the check gadget $C(Q_{q,s},qs)$.
The type $L$-pos is connected to $C$-pos and $L$-neg is connected to $C$-neg.
The interface of $M$ is the interface of the gadget $L$, i.e., $L$-pos and $L$-neg.

\begin{lemma}
	\label{lem:MWMultipleGadgetMinimal}
	Let $M = M(s,q,r)$ be a multiple gadget in $G'$. If $S \subseteq V(G)$ satisfies $|S \cap V(M)| < 3qs$, then $S$ is not a target set.
\end{lemma}
\begin{proof}
	If $|S \cap V(M)| < 3qs$, then at least one of the following necessarily happens.
	\begin{itemize}
		\item
      The selection gadget in $M$ has less than $qs$ vertices from $S$.
      Then by Lemma~\ref{lem:MWSelGadgetMin} we have that $S$ is not a target set.
		\item
      The threshold decrease gadget in $M$ has less than $2qs$ vertices from $S$.
      By Lemma~\ref{lem:MWThrDecGadgetMin} again $S$ is not a target set.
	\end{itemize}
\end{proof}

\begin{lemma}
	\label{lem:MWMultipleGadgetTSStruct}
	Let $M = M(s,q,r)$ be a multiple gadget in $G'$ with the selection gadget $L$ and let $S \subseteq V(G')$ be a target set with $|S \cap V(M)| = 3qs$. Then, the following holds.
	\begin{enumerate}
		\item\label{itm:MWMultSelPart}The $qs$ vertices of $S$ are in the $L$-pos and $L$-neg types.
		\item\label{itm:MWMultActivationOrder} Suppose $v$ is a vertex in $M$-interface but not in $S$. The vertex $v$ cannot be activated before all neighboring vertices of $M$ adjacent to $v$ are activated.
		\item\label{itm:MWMultSelectsMultiple}The number of vertices of $|S \cap V(\mbox{$L$-pos})|$ is a multiple of $q$.
	\end{enumerate}
\end{lemma}
\begin{proof}
	For a proof of Part~\ref{itm:MWMultSelPart} first observe that $qs$ vertices of $S$ must be in the selection gadget in $M$.
	There cannot be more by the fact that we have $3qs$ vertices in total and by Lemma~\ref{lem:MWThrDecGadgetMin}; on the other hand, there cannot be less by Lemma~\ref{lem:MWSelGadgetMin}.
	Now these $qs$ vertices must be in $L$-pos and $L$-neg by Lemma~\ref{lem:MWSelGadgetTSStructure} Part~\ref{itm:SelGadgetTSIsInIface}.

	Part~\ref{itm:MWMultActivationOrder} follows immediately from Lemma~\ref{lem:MWSelGadgetTSStructure} Part~\ref{itm:SelGadgetNeighborstFirst}.

	Finally for Part~\ref{itm:MWMultSelectsMultiple}, if $|S \cap V(\mbox{$L$-pos})|$ is not a multiple of $q$, then the check gadget of $M$ cannot be activate by
	Lemma~\ref{lem:MWCheckGadgetCorrectness} and hence $S$ is not a target set.
\end{proof}

\subsubsection{Finishing the Reduction}
We finally have all building blocks for the reduction.
The graph $G'$ is constructed as follows (the numbers $r_c$, $r_{cd}$, $r_{c:cd}$ will be computed later).

\begin{itemize}
	\item For each color class $V_c$ we have a selection gadget $L_c = L(n,r_c)$.
	\item For each edge class $E_{cd}$ we have a multiple gadget $M_{cd} = M(m_{cd},2n,r_{cd})$, where $m_{cd} = |E_{cd}|$.
	\item For each ordered pair $(c,d)$ of vertices of $H$ such that $\{c,d\} \in E(H)$, we have a check gadget $I_{c:cd} = C(Z_{c:cd}, r_{c:cd})$. The $I_{c:cd}$-pos part of  is connected to $L_c$-pos and $M_{cd}$-pos. The type $I_{c:cd}$-neg is connected analogously.
\end{itemize}

\begin{figure}[bt]
	\begin{center}
		\usetikzlibrary{positioning,fit,calc}

\begin{tikzpicture}[node distance=1.7cm,scale=0.90]
  \tikzstyle{bag} = [circle, draw, minimum width=1cm]
  \tikzstyle{edge} = [black, ultra thick]
  \tikzstyle{otherEdge} = [black, ultra thick, dashed]
  \tikzstyle{dummy} = []
  \tikzstyle{fitting} = [draw, dashed, rounded corners, black!40, thick, inner sep=2pt]

  %%% Vertices
  \node[bag] (VertIntUP) {$L_c$-pos};
  \node[bag, below of=VertIntUP] (VertIntDOWN) {$L_c$-neg};
  \node[dummy] at ($(VertIntUP.west) - (2cm, 0)$) (VertDummy) {$V_c$};
  \node[fitting, fit=(VertIntUP) (VertIntDOWN) (VertDummy), label={90:$L_c = L(n,r_c)$}] {};

  %%% Incedence
  \node[bag] at ($(VertIntUP.east) + (1.5cm,0)$) (IncIntUP) {$I_{c:cd}$-pos};
  \node[bag, below of=IncIntUP] (IncIntDOWN) {$I_{c:cd}$-neg};
  \node[dummy] at ($(IncIntDOWN.east) + (2cm,0)$) (IncDummy) {};
  \node[fitting, fit=(IncIntUP) (IncIntDOWN) (IncDummy), label={90:$I_{c:cd} = C(Z_{c:cd},n+2nm)$}] {};

  \node[dummy, below of=IncDummy,xshift=-.5cm] (IncDummy2) {};
  \node[dummy, below of=IncDummy2] (IncDummy3) {};

  \node[dummy, below of=IncIntDOWN] (IncDummyVert1) {};
  \begin{scope}[node distance=.5cm]
    \node[dummy, below of=IncDummyVert1] (IncDummyVert2) {};
    \node[dummy, below of=IncDummyVert2] (IncDummyVert3) {};
    \node[dummy, below of=IncDummyVert3] (IncDummyVert4) {};
  \end{scope}

  %%% Edges
  \node[bag] at ($(IncIntUP.east) + (4cm,0)$) (EdgeIntUP) {$M_{cd}$-pos};
  \node[bag, below of=EdgeIntUP] (EdgeIntDOWN) {$M_{cd}$-neg};
  \node[dummy] at ($(EdgeIntUP.east) + (2.5cm, 0)$) (EdgeDummy) {$E_{cd}$};
  \node[fitting, fit=(EdgeIntUP) (EdgeIntDOWN) (EdgeDummy), label={90:$M_{cd} = M(m,2n,r_{cd})$}] {};

  %%% Graph Edges
  \draw[edge] (VertIntUP) -- (IncIntUP);
  \draw[edge] (VertIntDOWN) -- (IncIntDOWN);

  \draw[edge] (IncIntUP) -- (EdgeIntUP);
  \draw[edge] (IncIntDOWN) -- (EdgeIntDOWN);

  \draw[otherEdge] (IncDummy2) -- (EdgeIntUP);
  \draw[otherEdge] (IncDummy3) -- (EdgeIntDOWN);
  \draw[otherEdge] (IncDummyVert1) -- (VertIntUP);
  \draw[otherEdge] (IncDummyVert2) -- (VertIntUP);
  \draw[otherEdge] (IncDummyVert3) -- (VertIntDOWN);
  \draw[otherEdge] (IncDummyVert4) -- (VertIntDOWN);

\end{tikzpicture}
	\end{center}
	\caption{Overview of the reduction.}
	\label{fig:bigPicture}
\end{figure}

We numerate vertices in each class $V_c$ by numbers from $0$ to $n$ and edges in each class $E_{cd}$ by numbers from $0$ to $m_{cd}$.
Then if $V_c = \{v_0, \ldots, v_n\}$ and $E_{cd} = \{e_0, \ldots, e_{m_{cd}}\}$, we set
\[
  Z_{c:cd} = \setof{ i + 2nj }{ v_i \in e_j } \,.
\]

It remains to compute the numbers $r_c$, $r_{cd}$, and $r_{c:cd}$.
The number $r_{c:cd}$ is $n + 2nm_{cd}$, as $I_{c:cd}$-pos is adjacent to $L_c$-pos and $M_{cd}$-pos.

The gadget $M_{cd}$ is adjacent to $I_{c:cd}$ and $I_{d:cd}$.
To determine $r_{cd}$, we need to count the number of vertices of the pos types in both incident gadgets.
This can be expressed by the weight function giving
\[
  r_{cd} = w_{r_{c:cd}}(Z_{c:cd}) + w_{r_{d:cd}}(Z_{d:cd}).
\]
Finally, the gadget $L_c$ is adjacent to $I_{c:cd}$ for every $d \in N_H(c)$.
analogously, we get
\[
  r_c = \sum_{d \in N_H(c)} w_{r_{c:cd}}(Z_{c:cd}).
\]
The budget $b$ is set as
\[
  b = nk + 6n \cdot \bar{m} + 2n|E(H)| + 4n \cdot \bar{m} \,,
\]
where $\bar{m}$ is again $\sum_{\{c,d\} \in E(H)}m_{cd}$.
The first term $nk$ is for $k$ selection gadgets $L(n,r_c)$ corresponding to the color classes.
The second term accounts the multiple gadgets $M(m_{cd},2n,r_cd)$ corresponding to the edge sets $E_{cd}$; each of them requires at least $6nm_{cd}$ vertices of a target set by Lemma~\ref{lem:MWMultipleGadgetMinimal}.
The last two terms are for threshold decrease gadgets $D\bigl(2(2m_{cd}n+n),w_{2m_{cd}n+n}(Z_{c:cd})\bigr)$ inside each check gadget $I_{c:cd} = C(Z_{c:cd},2m_{cd}n+n)$.

We verify that $G'$ has size polynomial in $|V(G)|$.
Observe that the size of each gadget is polynomial in its parameters, and the parameters are polynomial with respect to $|V(G)|$.
Since we have $\BigO(k)$ gadgets, the whole graph $G'$ has size polynomial in $|V(G)|$, since $k \le |V(G)|$.
Moreover, the graph $G'$ can be clearly constructed in polynomial time.

We turn our attention to the restricted modular width of $G'$.
Consider a graph $\hat{G}$ obtained by replacing each pos and neg part of each check gadget in $G'$ by an independent set of the same size.
It is easy to verify that $\hat{G}$ is a graph of neighborhood diversity $\BigO(k)$; every gadget in $\hat{G}$ has neighborhood diversity bounded by a constant.
As each pos or neg part of the check gadget in $G'$ is a cluster graph, we can obtain $G'$ from the type graph $T_{\hat{G}}$ by an appropriate substitution operation.
This gives us an algebraic expression proving the restricted modular width of $G'$ is $\BigO(k)$.

\begin{lemma}
  The graph $G$ contains a colored copy of $H$ if and only if $G'$ has a target set of size $b$.
\end{lemma}
\begin{proof}
  First, suppose that $G$ contains a colored copy of $H$.
  Denote by $\ell_c$ the number of the selected vertex in $V_c$ in the numeration we fixed before.
  Similarly denote by $\ell_{cd}$ the number of edge that connects the selected vertices in $V_c$ and $V_d$.

  The target set is constructed as follows.
  \begin{enumerate}
  	\item Put $\ell_c$ vertices of $L_c$-pos and $n - \ell_c$ vertices of $L_c$-neg into $S$.
  	\item Put $2n\ell_{cd}$ vertices of $M_{cd}$-pos and $2m_{cd}n - 2n\ell_{cd}$ vertices of $M_{cd}$ into $S$.
  	\item Put all vertices of the interface part of every threshold decrease gadget into $S$.
  \end{enumerate}
  By the choice of $b$, we see that $|S| = b$.
  We claim that $S$ is a target set.

  By Lemma~\ref{lem:MWCheckGadgetCorrectness}, the check gadget $C(Q_{2n,m_{cd}},2nm_{cd})$ in $M_{cd}$ is activated in three rounds as $\ell_{cd}$ is a multiple of $2n$ and hence is in $Q_{2n,m_{cd}}$.
  Similarly, as the vertex $v \in V_c$ corresponding to $\ell_c$ is incident to the edge $e \in E_{cd}$ corresponding to $\ell_{cd}$, the gadget $I_{c:cd}$ is activated in three rounds again by Lemma~\ref{lem:MWCheckGadgetCorrectness}, since the number $\ell_c + 2n\ell_{cd}$ is in $Z_{c:cd}$.

  We now analyze the gadget $L_c = L(n,r_c)$.
  The part $L_c$-guard is activated in the first round.
  In the third round, all neighboring check gadgets are activated.
  Since those gadgets together with $L_c$-guard constitute exactly the half of the neighbors of every vertex in $L_c$-pos or $L_c$-neg, these parts are activated in the round four.
  It is now easy to see that the remaining two types ({$L_c$-doubling} and {$L_c$-end}) are activated in the rounds five and six.

  The argument for the gadget $M_{cd} = M(m_{cd},2n, r_{cd})$ is analogous -- in the third round, all check gadgets (including the one inside $M_{cd}$) are activated and again this is enough to activate the $M_{cd}$-pos and $M_{cd}$-neg parts in the round four.
  The rest of the gadget again is activated in the round six.
  This proves that $S$ is a target set.

  For the opposite direction, suppose that we have a target set $S$ in $G'$ with $|S| \leq b$.
  We know that without loss of generality we may assume that $S$ is hopeful.
  Let $\ell_c$ be the number represented by $S$ in $V_c$ and $2n \ell_{cd}$ the number represented by $S$ in $M_{cd}$.
  We know the number represented by $S$ in $M_{cd}$ is a multiple of $2n$ by Lemma~\ref{lem:MWMultipleGadgetTSStruct}.
  Now we pick $\ell_c$-th vertex from each color class $V_c$.
  We claim that those vertices form a solution to the original instance of CSI.

  Since $S$ is a target set, we obtain from Lemma~\ref{lem:MWCheckGadgetCorrectness} that
  $\ell_c + 2n\ell_{cd} \in Z_{c:cd}$.
  Note that a number $z \in Z_{c:cd}$ is of the form $z = i + 2nj$ and the number $i$ and $j$ are uniquely determined by $z$ alone; simply take $i = z \mathbin{\mathrm{mod}} 2n$ and $j = z \mathbin{\mathrm{div}} 2n$.
  This means that number $\ell_c + 2n\ell_{cd}$ is in $Z$ if and only if a vertex $v \in V_c$ corresponding to the number $\ell_c$ is incident to an edge $e \in E_{cd}$ corresponding to the number $\ell_{cd}$.
  Thus, we have proved that the set of vertex obtained from $S$ indeed form a solution to~$(G, H, \psi)$.
\end{proof}
This finishes the proof of Theorem~\ref{thm:MAJTSSisWwrtMW}.

\subsection{Twin Cover}
As usual, we enumerate vertices in color classes $V_c$ and edges in sets $E_{cd}$ and design vertex, edge, and incidence gadgets based on this enumeration.
This time all the gadgets are cliques and we encode the input instance into sizes of parts of these cliques with different thresholds.
It is worth noting that in each such clique the number of different vertex thresholds is at most eleven.
The resulting instance $(G', f, b)$ of \TSS is equivalent to the instance $(G,H)$ of CSI and $\tc(G') = \BigO(k)$.

\subsubsection{Construction}
Let $c,d\in V(H)$.
As we assume $|V_c| = |V_d|$, we denote this common value as $n - 1$ for technical reason.
Thus, we can denote the vertices in $V_c$ as $v^c_1,\dots,v^c_{n-1}$.
As we mentioned, we have no control on the number of edges in $E_{cd}$ -- it is upper bounded by $(n-1)^2$.

We arbitrarily orient edges of $H$.
For a vertex $c$ by $\indeg(c)$ we denote the \emph{in-degree} of $c$, that is, number of ingoing arcs towards $c$ in this orientation; analogously by $\outdeg(c)$ is the \emph{out-degree} of $c$.
Since $H$ is 3-regular, $\deg(c) = \indeg(c) + \outdeg(c) = 3$ for every vertex $c \in V(H)$.
We divide the vertices of $H$ into 3 groups according to their in-degree and out-degree.
For $s \in \{+,-\}$, let $V^s \subseteq V(H)$ be a set of vertices $c \in V(H)$ such that $\deg^s(c) = 3$ and $V^r = V(H) \setminus (V^+ \cup V^-)$.
%In what follows please refer to Figures~\ref{fig:TCTSSBigPicture} and~\ref{fig:TCTSSOverview}.

\subparagraph{Vertex Gadget}
Let $c \in V(H)$ and let $v = v^c_i$ be the $i$-th vertex in $V_c$.
We first introduce few constants that will later allow us to verify the proof of the correctness of our reduction.
Let $\gamma_s = 21$, $\gamma_\ell = 10$, and $\Gamma = \gamma_s + \gamma_\ell + 1$ ($\gamma_s$ will denote the size of a vertex set which we will call a selector, $\gamma_\ell$ will denote the size of a vertex set which we will a call class validator and denote $L$).
For the vertex $v$ we create a clique $C_v$.
The size of $C_v$ and thresholds of the vertices in $C_v$ depend on $\outdeg(c)$ and $\indeg(c)$.
The main idea is that in the activation process we first check the incidence of vertices and outgoing arcs and then we check the incidence of ingoing arcs.
Thus, for example if $c \in V^+$ (i.e., $\outdeg(c) = 0$), then there is no need to check incidence between selected $v \in V_c$ and outgoing edges.
The vertices in $C_v$ are divided into several groups.
The sizes of the groups and thresholds of vertices in the group are in the following table.

\begin{center}
\begin{tabular}{cc|c||c|c||c|c|}
 \cline{2-7}
 \multicolumn{1}{c}{} &
 \multicolumn{2}{||c||}{$c \in V^r$} &
 \multicolumn{2}{c||}{$c \in V^-$} &
 \multicolumn{2}{c|}{$c \in V^+$} \\
 \hline
 \multicolumn{1}{|c}{Name} & \multicolumn{1}{||c|}{Size} & Threshold & Size & Threshold & Size & Threshold\\
 \hline
 \multicolumn{1}{|c}{selector} & \multicolumn{1}{||c|}{$\gamma_s$} & $\deg - \gamma_s + 1$ & $\gamma_s$ & $\deg - \gamma_s + 1$ & $\gamma_s$ & $\deg - \gamma_s + 1$ \\
 \hline
 \multicolumn{1}{|c}{starter} & \multicolumn{1}{||c|}{$1$} & $\gamma_s$ & {$1$} & $\gamma_s$ & {$1$} & $\gamma_s$ \\
 \hline
 \multicolumn{1}{|c}{1.} & \multicolumn{1}{||c|}{$i$} & $\Gamma$ & $i$ & $\Gamma$  & $2n + i$ & $\Gamma$ \\
 \hline
 \multicolumn{1}{|c}{2.} & \multicolumn{1}{||c|}{$n - i$} & $i + \Gamma + \outdeg(c)$ & $n - i$ & $i + \Gamma + 3$ & &\\
 \hline
 \multicolumn{1}{|c}{3.} & \multicolumn{1}{||c|}{$n - i$} & $n + \Gamma + \outdeg(c) + 1$ & $n - i$ & $n + \Gamma + 4$ & &\\
 \hline
 \multicolumn{1}{|c}{4.} & \multicolumn{1}{||c|}{$i$} & $2n - i + \Gamma + 2\outdeg(c) + 1$ & $i$ & $2n - i + \Gamma + 7$ & &\\
 \hline
 \multicolumn{1}{|c}{5.} & \multicolumn{1}{||c|}{$i$} & $2n + \Gamma + 2\outdeg(c) + 2$ & $n^3 + 2n$ & $2n + \Gamma + 8$ & &\\
 \hline
 \multicolumn{1}{|c}{6.} & \multicolumn{1}{||c|}{$n-i$} & $2n + i + \Gamma + 2\outdeg(c) + \indeg(c) + 2$ & & &$n - i$ & $2n + i + \Gamma + 3$ \\
 \hline
 \multicolumn{1}{|c}{7.} & \multicolumn{1}{||c|}{$n-i$} & $3n + \Gamma + 2\outdeg(c) + \indeg(c) + 3$ & & &$n - i$ & $3n + \Gamma + 4$ \\
 \hline
 \multicolumn{1}{|c}{8.} & \multicolumn{1}{||c|}{$i$} & $4n - i + \Gamma + 2\outdeg(c) + 2\indeg(c) + 3$ & & &$i$ & $4n - i + \Gamma + 7$ \\
 \hline
 \multicolumn{1}{|c}{9.} & \multicolumn{1}{||c|}{$n^3$} & $4n + \Gamma + 10$ & & & $n^3$ & $4n + \Gamma + 8$\\
 \hline
\end{tabular}
\end{center}

By $\deg$ in the  selector vertices threshold is meant the degree of selector vertices not the degree of $c \in V(H)$.
If $c \not \in V^r$, then the vertex gadget has smaller number of groups than in the general case.
Note that $i,j < n$ thus every group is non-empty.
We denote all vertices in all cliques $C_u$ for all $u \in V_c$ by $U_c$.
Observe that all of the cliques in~$U_c$ are of the same size $n^3 + 4n + \gamma_s + 1$, independent on the value of $i$.

\subparagraph{Edge Gadget}
Let $e' = (c,d) \in E(H)$.
Again, we first introduce few constants.
Let $\delta_s = 17$, $\delta_\ell = 8$, and $\Delta = \delta_s + \delta_\ell + 1$.
For each edge $e = \{u,v\} \in E_{cd}$ we create a clique $C_{e}$.
Let $u = v^c_i$ and $v = v^d_j$.
There are again eleven groups of vertices in $C_{e}$.
The sizes of the groups and thresholds of vertices in the group are in the following table.

\begin{center}
\begin{tabular}{|c|c|c|}
 \hline
 Name & Size & Threshold \\
 \hline
 selector & $\delta_s$ & $\deg - \delta_s + 1$ \\
 \hline
 starter & $1$ & $\delta_s$ \\
 \hline
 1. & $n - i$ & $\Delta$ \\
 \hline
 2. & $i$ & $n - i + \Delta + 1$ \\
 \hline
 3. & $i$ & $n + \Delta + 2$ \\
 \hline
 4. & $n - i$ & $n + i + \Delta + 3$ \\
 \hline
 5. & $n - j$ & $2n + \Delta + 4$ \\
 \hline
 6. & $j$ & $3n - j + \Delta + 5$ \\
 \hline
 7. & $j$ & $3n + \Delta + 6$ \\
 \hline
 8. & $n - j$ & $3n + j + \Delta + 7$ \\
 \hline
 9. & $n^4$ & $4n + \Delta + 8$ \\
 \hline
\end{tabular}
\end{center}
All vertices of these cliques are denoted by $U_{e'}$.
Observe that all of the cliques in~$U_{e'}$ are of the same size~$n^4 + 4n+\delta_s+1$, independent on the value of~$i$ or~$j$.

We set the budget $b$ to $\gamma_s \cdot k + \delta_s \cdot \bigl|E(H)\bigr|$.
This will allow us to select one vertex (by adding all of its selector vertices) from each color class $V_c$ as well as one edge from each edge set $E_{cd}$.

\subparagraph{Twin Cover Vertices}
So far our graph is a cluster graph and thus admits a twin cover of size $0$.
Now, we are going to add few vertices (forming the twin cover of the constructed graph) in order to connect these pieces together and the reduction to work.
For each edge $e' = (c,d) \in E(H)$ we add four \emph{checker} vertices defined in the following table.

\begin{center}
\begin{tabular}{|c|c|c|}
\hline
Name & Symbol & Threshold \\
\hline
Outgoing lower checker & $\ell^{-}(c,e')$ & $n + \gamma_s + \delta_s + 2$ \\
\hline
Outgoing upper checker & $u^{-}(c,e')$ & $3n + \gamma_s + \delta_s + 2$ \\
\hline
Ingoing lower checker & $\ell^+(d,e')$ &  $5n + \gamma_s + \delta_s + 2$ \\
\hline
Ingoing upper checker & $u^+(d,e')$ & $7n + \gamma_s + \delta_s + 2$ \\
\hline
\end{tabular}
\end{center}

The outgoing checkers $\ell^-(c,e')$ and $u^-(c,e')$ are both connected to all vertices in $U_c$ and $U_{e'}$.
Similarly, the ingoing checkers $\ell^+(d,e')$ and $u^+(d,e')$ are both connected to all vertices in $U_d$ and $U_{e'}$.
Note that for each vertex $c \in V(H)$ there are $2\cdot\outdeg(c)$ outgoing checkers and $2\cdot\indeg(c)$ ingoing checkers connected to $U_c$.
The checker vertices will check if the selection of vertices in $V_c$ and $V_d$ is consistent with the selection of edge in $E_{cd}$.
For each type of checkers we add one checker validator vertex defined in the following table.
\begin{center}
\begin{tabular}{|c|c|c|}
\hline
Name & Symbol & Threshold \\
\hline
Outgoing lower checker validator & $\tilde{\ell}^{-}$ & $k\cdot(n + \gamma_s + 1) + \bigl|E(H)\bigr|\cdot(n + \delta_s + 2)$ \\
\hline
Outgoing upper checker  validator & $\tilde{u}^{-}$ & $k\cdot(2n + \gamma_s + 1) + \bigl|E(H)\bigr|\cdot(2n + \delta_s + 2)$ \\
\hline
Ingoing lower checker  validator & $\tilde{\ell}^+$ &  $k\cdot(3n + \gamma_s + 1) + \bigl|E(H)\bigr|\cdot(3n + \delta_s + 2)$ \\
\hline
Ingoing upper checker  validator & $\tilde{u}^+$ & $k\cdot(4n + \gamma_s + 1) + \bigl|E(H)\bigr|\cdot(4n + \delta_s + 2)$ \\
\hline
\end{tabular}
\end{center}

All four checker validators are connected to all twin cliques (i.e., the vertices in $U_c$ and $U_{e'}$ for each $c \in V(H)$ and $e' \in E(H)$).
Further, each checker validator is connected to all checkers of corresponding type, i.e., $\tilde{\ell}^-$ is connected to all outgoing lower checkers $\ell^-(c,e')$, $\tilde{u}^-$ is connected to all outgoing upper checkers $u^-(c,e')$ etc.
The purpose of the checker validators is that if some checker is not activated due to an invalid selection, then the corresponding checker validator is not activated as well and it will stop the activation process in the whole graph $G'$.

Next, for each $c \in V(H)$ we add~$\gamma_\ell$ \emph{class validator} vertices $L_c$ and~$\gamma_s$ \emph{sentry} vertices $R_c$.
The class validator vertices $L_c$ and the sentry vertices $R_c$ are connected to all vertices in $U_c$.
The threshold of class validator vertices $L_c$ is~$\gamma_s+1$ and the threshold of sentry vertices $R_c$ is set to the clique size in~$U_c$, i.e., $n^3 + 4n + \gamma_s + 1$.

Similarly, for each $e' \in E(H)$ we add~$\delta_\ell$ class validator vertices $L_{e'}$ and $\delta_s$ sentry vertices $R_{e'}$.
The class validator vertices $L_{e'}$ and the sentry vertices $R_{e'}$ are connected to all vertices in $U_{e'}$.
The threshold of class validator vertices $L_{e'}$ is~$\delta_s+1$ and the threshold of sentry vertices $R_{e'}$ is set to the clique size in~$U_{e'}$, i.e., $n^4 + 4n + \delta_s + 1$.

The purpose of class validator vertices is that vertices only from one clique $C_v$ in each $U_c$ are selected to the target set.
The intended meaning of sentry vertices is to be activated when for each $c \in V(H)$ one vertex gadget $C_v$ in $U_c$ is activated and for each $e' \in E(H)$ one edge gadget $C_{e}$ in $U_{e'}$ is activated.
After that the sentry vertices will help to activate the remaining (i.e., not selected) gadgets.

Let $\tau$ denote the number of non-selector vertices constructed so far.
We finish the construction by adding $b+\gamma_s$ \emph{special} vertices with threshold $\tau + b$ and we connect these to every vertex constructed so far.
These vertices play an important role in our reduction as they ensure that any target set consists of selector vertices only (see Lemma~\ref{lem:TCTargetSetContainsOnlyGuards}).
Recall that $b$ is linear in $k$.
Thus, the twin cover of the constructed graph has size
\[
  \tc(G') \leq (\gamma_s + \gamma_\ell)\cdot k + (4 + \delta_\ell + \delta_s)\cdot\bigl|E(H)\bigr| + b + \gamma_s + 4 = \BigO(k) \,.
\]

\subsubsection{Subgraph Gives Target Set}
Suppose $G$ contains a colored copy of $H$ given by a mapping $\phi\colon V(H) \to V(G)$.
We put the following selector vertices to the set $S$.
\begin{itemize}
  \item
    Let $c \in V(H)$ and $v = \phi(c) \in V(G)$.
    We put all~$\gamma_s$ selector vertices from $C_v$ into $S$.
  \item
    Let $(c,d) \in E(H)$ and $u = \phi(c), v = \phi(d)$.
    By property of $\phi$ we know that $e = \{u,v\} \in E(G)$.
    We put all~$\delta_s$ selector vertices from $C_e$ into $S$.
\end{itemize}
It is clear that $|S| = b$.
We claim that $S$ forms a target set for the graph $G'$ constructed in the previous section.
The high-level description of the activation process is that first the vertices in $C_v$ and $C_e$ cliques with checkers and checker validators are activated for the selected $v \in V(G)$ and $e \in E(G)$.
Then, the sentry vertices are activated which causes activation of all vertex and edge gadgets without the selector vertices.
At the end the special vertices and remaining selector vertices are activated.

Let $V^* = \bigl\{ \phi(c) | c \in V(H)\bigr\}$ and let $E^* = \bigl\{ \{\phi(c),\phi(d)\} | (c,d) \in E(H) \bigr\}$ be the sets vertices and edges of $G$ selected by $\phi$.
Let $\bar{u} \in V_c \setminus V^*$ and $\bar{e} \in E_{c'd'} \setminus E^*$ and $e' = (c',d') \in E(H)$.
Note that for the vertex $\bar{u}$ and the edge the edge $\bar{e}$ the minimal threshold in $C_{\bar{u}}$ is~$\gamma_s$ and in $C_{\bar{e}}$ it is~$\delta_s$.
Thus, some vertices in these cliques can be activated after at least~$\gamma_s$ (or~$\delta_s$) vertices in their neighborhood in twin cover are activated.
We will show that these neighborhood vertices of $C_{\bar{u}}$ (or $C_{\bar{e}}$) are activated in the order:
\begin{enumerate}
\item $\gamma_\ell$ (or $\delta_\ell$) class validator vertices $L_c$ (or $L_{e'}$).
\item 6 (or 4) checkers and 4 checker validators.
\item $\gamma_s$ (or $\gamma_s$) sentry vertices $R_c$ (or $R_{e'}$).
\item $b + \gamma_s$ special vertices.
\end{enumerate}
Since $21 = \gamma_s >  10 + \gamma_\ell = 20$, the first vertex in $C_{\bar{u}}$ is activated after the sentry vertices $R_c$ are activated.
By similar argument, the same holds for $C_{\bar{e}}$, i.e., the first vertex in $C_{\bar{e}}$ is activated after the sentry vertices $R_{e'}$.

Let $e' = (c, d) \in E(H), u = v^c_i \in V^*, v = v^d_j \in V^*$, and $e = \{u,v\} \in E^*$, i.e., $u$ and $v$ are the selected vertices from $V_c$ and $V_d$ and $e$ is the selected edge from $E_{cd}$ incident to $u$ and $v$.
We describe the activation process from the point of view of $u, v$, and $e$, i.e., the following rounds holds for every $u^* \in V^*, \hat{c} \in V(H), e^* \in E^*$, and $\hat{e} \in E(H)$, not only for $u,v,c,d,e$, and $e'$.
In what follows please refer to Figure~\ref{fig:TCTSSOverview}.
\begin{figure}[bt]
  \usetikzlibrary{calc,fit}
\begin{tikzpicture}[node distance=.7cm]
  \tikzstyle{part}=[draw,minimum width=.8cm,minimum height=.7cm]
  \tikzstyle{fitterSel}=[draw,line width=3pt,inner sep=1.5pt]
  \tikzstyle{fitter}=[draw,thick,dotted,inner sep=2pt]
  \tikzstyle{selected}=[]
  \tikzstyle{guardTCVertex}=[fill=yellow!40]
  \tikzstyle{sentryTCVertex}=[fill=red!40]
  \tikzstyle{lowercheckerTCVertex}=[fill=violet!40]
  \tikzstyle{uppercheckerTCVertex}=[fill=green!40]

% node w
\begin{scope}
  \node[part,label={180:selector}] (uhat_g) {32};
  \node[part,label={180:starter},below of=uhat_g] (uhat_s) {21};
  \node[part,label={180:1},below of=uhat_s] (uhat_1) {22};
  \node[part,label={180:2},below of=uhat_1] (uhat_2) {23};
  \node[part,label={180:3},below of=uhat_2] (uhat_3) {24};
  \node[part,label={180:4},below of=uhat_3] (uhat_4) {25};
  \node[part,label={180:5},below of=uhat_4] (uhat_5) {26};
  \node[part,label={180:6},below of=uhat_5] (uhat_6) {27};
  \node[part,label={180:7},below of=uhat_6] (uhat_7) {28};
  \node[part,label={180:8},below of=uhat_7] (uhat_8) {29};
  \node[part,label={180:9},below of=uhat_8] (uhat_9) {30};

  \node[fitter,fit=(uhat_g)(uhat_9),label={270:$C_{\hat{u}}$}] {};

  \node[part,selected] (u_g) at (1.5,0) {0};
  \node[part,below of=u_g] (u_s) {1};
  \node[part,below of=u_s] (u_1) {3};
  \node[part,below of=u_1] (u_2) {5};
  \node[part,below of=u_2] (u_3) {7};
  \node[part,below of=u_3] (u_4) {9};
  \node[part,below of=u_4] (u_5) {11};
  \node[part,below of=u_5] (u_6) {13};
  \node[part,below of=u_6] (u_7) {15};
  \node[part,below of=u_7] (u_8) {17};
  \node[part,below of=u_8] (u_9) {19};

  \node[fitterSel,fit=(u_g)(u_9),label={270:$C_u$}] {};

  \node at ($(u_g)!.5!(uhat_g) + (0,.7)$) {$U_c$};

  %class guard
  \node[part,guardTCVertex,label={0:$L_c$}] at ($(uhat_9)!.5!(u_9) - (0,1.2)$) (w_guard) {2};
  \draw (uhat_9) to[out=-90,in=90] (w_guard);
  \draw (u_9) to[out=-90,in=90] (w_guard);

  %class sentry
  \node[part,sentryTCVertex,label={270:$R_c$}] at ($(uhat_9)!.5!(u_9) - (0,2.5)$) (w_sentry) {20};
  \draw (uhat_9) to[out=-140,in=180] (w_sentry);
  \draw (u_9) to[out=-40,in=0] (w_sentry);
\end{scope}

%node x
\begin{scope}[xshift=8cm]
  \node[part] (vhat_g) at (1.5,0) {32};
  \node[part,below of=vhat_g] (vhat_s) {21};
  \node[part,below of=vhat_s] (vhat_1) {22};
  \node[part,below of=vhat_1] (vhat_2) {23};
  \node[part,below of=vhat_2] (vhat_3) {24};
  \node[part,below of=vhat_3] (vhat_4) {25};
  \node[part,below of=vhat_4] (vhat_5) {26};
  \node[part,below of=vhat_5] (vhat_6) {27};
  \node[part,below of=vhat_6] (vhat_7) {28};
  \node[part,below of=vhat_7] (vhat_8) {29};
  \node[part,below of=vhat_8] (vhat_9) {30};

  \node[fitter,fit=(vhat_g)(vhat_9),label={270:$C_{\hat{v}}$}] {};

  \node[part,selected] (v_g) {0};
  \node[part,below of=v_g] (v_s) {1};
  \node[part,below of=v_s] (v_1) {3};
  \node[part,below of=v_1] (v_2) {5};
  \node[part,below of=v_2] (v_3) {7};
  \node[part,below of=v_3] (v_4) {9};
  \node[part,below of=v_4] (v_5) {11};
  \node[part,below of=v_5] (v_6) {13};
  \node[part,below of=v_6] (v_7) {15};
  \node[part,below of=v_7] (v_8) {17};
  \node[part,below of=v_8] (v_9) {19};

  \node[fitterSel,fit=(v_g)(v_9),label={270:$C_v$}] {};

  \node at ($(v_g)!.5!(vhat_g) + (0,.7)$) {$U_d$};

  %global guard
  \node[part,guardTCVertex,label={0:$L_d$}] at ($(vhat_9)!.5!(v_9) - (0,1.2)$) (x_guard) {2};
  \draw (vhat_9) to[out=-90,in=90] (x_guard);
  \draw (v_9) to[out=-90,in=90] (x_guard);

  %class sentry
  \node[part,sentryTCVertex,label={270:$R_d$}] at ($(vhat_9)!.5!(v_9) - (0,2.5)$) (x_sentry) {20};
  \draw (v_9) to[out=-140,in=180] (x_sentry);
  \draw (vhat_9) to[out=-40,in=0] (x_sentry);
\end{scope}

%edge (w,x)
\begin{scope}[xshift=4cm]
  \node[part] (ehat_g) at (1.5,0) {32};
  \node[part,below of=ehat_g] (ehat_s) {21};
  \node[part,below of=ehat_s] (ehat_1) {22};
  \node[part,below of=ehat_1] (ehat_2) {23};
  \node[part,below of=ehat_2] (ehat_3) {24};
  \node[part,below of=ehat_3] (ehat_4) {25};
  \node[part,below of=ehat_4] (ehat_5) {26};
  \node[part,below of=ehat_5] (ehat_6) {27};
  \node[part,below of=ehat_6] (ehat_7) {28};
  \node[part,below of=ehat_7] (ehat_8) {29};
  \node[part,below of=ehat_8] (ehat_9) {30};

  \node[fitter,fit=(ehat_g)(ehat_9),label={270:$C_{\{\hat{u},\hat{v}\}}$}] {};

  \node[part,selected] (e_g) {0};
  \node[part,below of=e_g] (e_s) {1};
  \node[part,below of=e_s] (e_1) {3 };
  \node[part,below of=e_1] (e_2) {5};
  \node[part,below of=e_2] (e_3) {7};
  \node[part,below of=e_3] (e_4) {9};
  \node[part,below of=e_4] (e_5) {11};
  \node[part,below of=e_5] (e_6) {13};
  \node[part,below of=e_6] (e_7) {15};
  \node[part,below of=e_7] (e_8) {17};
  \node[part,below of=e_8] (e_9) {19};

  \node[fitterSel,fit=(e_g)(e_9),label={270:$C_{\{u,v\}}$}] {};

  \node at ($(e_g)!.5!(ehat_g) + (0,.7)$) {$U_{e'}$};

  %global guard
  \node[part,guardTCVertex,label={0:$L_{e'}$}] at ($(ehat_9)!.5!(e_9) - (0,1.4)$) (wx_guard) {2};
  \draw (ehat_9) to[out=-130,in=90] (wx_guard);
  \draw (e_9) to[out=-50,in=90] (wx_guard);

  %class sentry
  \node[part,sentryTCVertex,label={270:$R_{e'}$}] at ($(ehat_9)!.5!(e_9) - (0,2.5)$) (e_sentry) {20};
  \draw (e_9) to[out=-140,in=180] (e_sentry);
  \draw (ehat_9) to[out=-40,in=0] (e_sentry);
\end{scope}

% lower&upper w->out
\begin{scope}[xshift=2.75cm,yshift=1.8cm,node distance=.8cm]
  \node[part,lowercheckerTCVertex,label={[yshift=2pt]270:$\ell^-(c,e')$}] (w_lower) {4};
  \node[part,above of=w_lower,uppercheckerTCVertex,label={[yshift=-2pt]90:$u^-(c,e')$}] (w_upper) {8};
  %\node[above of=w_upper,text width=1.3cm] {outgoing checkers};
\end{scope}

% lower&upper x->in
\begin{scope}[xshift=6.75cm,yshift=1.8cm,node distance=.8cm]
  \node[part,lowercheckerTCVertex,label={[yshift=2pt]270:$\ell^+(d,e')$}] (x_lower) {12};
  \node[part,above of=x_lower,uppercheckerTCVertex,label={[yshift=-2pt]90:$u^+(d,e')$}] (x_upper) {16};
  %\node[above of=x_upper,text width=1.3cm] {ingoing checkers};
\end{scope}

% U_
\begin{scope}[on background layer]
  \coordinate (cliqueFitterNW) at ($ (uhat_g.north west) + (-.2,.2) - (1.2,0) + (0,.4)$);
  \draw[rounded corners,gray,fill=gray!20] (cliqueFitterNW) rectangle ($ (vhat_9.south east) + (.2,-.15)$);
\end{scope}

\node[part,label={$ \tilde{\ell}^-$}] (outLowerValidator) at ($ (uhat_g.north) - (1,0) + (0,1.5) $) {6};
\node[part,label={$ \tilde{u}^-$}] (outUpperValidator) at ($ (outLowerValidator) + (1.3,0) + (0,.7) $) {10};

\coordinate (cliqueFitterNE) at ($(cliqueFitterNW) + (11.5,0)$);
\node[part,label={$\tilde{\ell}^+$}] (inLowerValidator) at ($ (vhat_g.north) + (.7,1.5) $) {14};
\node[part,label={$\tilde{u}^+$}] (inUpperValidator) at ($ (vhat_g.north) + (0,2.2) - (.3,0) $) {18};

%global edges
%w->V_w
\draw (uhat_g) to[out=90,in=180] (w_lower);
\draw (uhat_g) to[out=90,in=180] (w_upper);
\draw (u_g) to[out=90,in=180] (w_lower);
\draw (u_g) to[out=90,in=180] (w_upper);
%w->E_{(w,x)}
\draw (ehat_g) to[out=90,in=0] (w_lower);
\draw (ehat_g) to[out=90,in=0] (w_upper);
\draw (e_g) to[out=90,in=0] (w_lower);
\draw (e_g) to[out=90,in=0] (w_upper);
%x->V_x
\draw (vhat_g) to[out=90,in=0] (x_lower);
\draw (vhat_g) to[out=90,in=0] (x_upper);
\draw (v_g) to[out=90,in=0] (x_lower);
\draw (v_g) to[out=90,in=0] (x_upper);
%x->E_{(w,x)}
\draw (ehat_g) to[out=90,in=180] (x_lower);
\draw (ehat_g) to[out=90,in=180] (x_upper);
\draw (e_g) to[out=90,in=180] (x_lower);
\draw (e_g) to[out=90,in=180] (x_upper);

%%%
\draw (outLowerValidator) to ($ (cliqueFitterNW) + (1,0) $);
\draw (outLowerValidator) to (w_lower);
\draw (outUpperValidator) to ($ (cliqueFitterNW) + (1,0) $);
\draw (outUpperValidator) to (w_upper);

\draw (inLowerValidator) to ($ (cliqueFitterNE) $);
\draw (inLowerValidator) to (x_lower);
\draw (inUpperValidator) to ($ (cliqueFitterNE) $);
\draw (inUpperValidator) to (x_upper);

%delimiting cliques
\draw ($(w_lower) - (0,.8)$) to ($(w_lower) - (0,11.6)$);
\draw ($(x_lower) - (0,.8)$) to ($(x_lower) - (0,11.6)$);

\end{tikzpicture}
  \caption{\label{fig:TCTSSOverview}%
  A schema of an activation process for a Yes-instance of~CSI.
  If there is an edge between two groups of vertices, this indicates that every vertex of one group is adjacent to every vertex in the other group.
  A rectangle represents a group of vertices and a number~$\ell$ stands for activation in the round~$S_\ell$.
  Vertices in the twin cover have the following colors: class validators yellow, lower checkers violet, upper checkers green, class sentry vertices light red, and checker validators white (special vertices are omitted).
  %The vertex marked OLCV is the outgoing lower checker validator while the one marker OUCV is the outgoing upper checker validator; similarly for the ingoing checker validators ILCV and IUCV.
  Groups in vertex and edge gadgets are in the same order as in the table.
  For simplicity we only depict two vertex gadgets from $\phi^{-1}(c)$ and $\phi^{-1}(d)$ with $\deg^-(c),\deg^-(d) \in \{1,2\}$; the selected vertex gadget has bold borders.
  %The edges stand for complete bipartite graphs between the two groups of vertices.
  Finally, the gray area ``contains''all the twin-cliques (the vertices in $U_c$ and $U_{e'}$ for each $c \in V(H)$ and $e' \in E(H)$).
  }
\end{figure}

\begin{description}
 \item[$S_1$: Starters.]
   In $C_u$ and $C_v$ the starter vertex is activated because there are~$\gamma_s$ active selector vertices in $C_u$ and in $C_v$.
   Similarly, the starter vertex in $C_e$ is activated.
 \item[$S_2$: Class validators.]
   The class validator vertices $L_c$ and $L_d$ are activated because there are~$\gamma_s+1$ active vertices in $C_u$ and $C_v$.
   Similarly, the class validator vertices $L_{e'}$ are activated.
 \item[$S_3$: First groups.]
   In $C_u$ the first groups are activated because there are~$\gamma_s+1$ active vertices in $C_u$ (selector and starter) and~$\gamma_\ell$ active class validators in the neighborhood of $C_u$.
   Similarly, the first group in $C_e$ and in $C_v$ is activated.
 \item[$S_4$: Outgoing lower checkers.]
   The lower checker $\ell(c, e')$ of threshold $n + \gamma_s + \delta_s + 2$ is activated.
   It has $i + \gamma_s + 1$ active neighbors in $C_u$ and $n - i + \delta_s + 1$ active neighbors in $C_e$.
 \item[$S_5$: Second groups.]
   There are $\outdeg(c)$ active outgoing lower checkers in the neighborhood of~$C_u$ and $\outdeg(d)$ active outgoing lower checkers in the neighborhood of~$C_v$ (and~$\gamma_\ell$ active class validators $L_c$ and $L_d$).
   Thus, the second group in $C_u$ is activated, since in total there are
   \[
     i + \gamma_s + 1 + \outdeg(c) + \gamma_\ell = i + \Gamma + \outdeg(c)
   \]
   active vertices in their neighborhood.
   Similarly, if $\outdeg(d) \neq 0$, the second group in $C_v$ is activated as well.
   There is one active outgoing lower checker in the neighborhood of~$C_e$, thus the second group of~$C_e$ is activated.
 \item[$S_6$: Outgoing lower checker validator.]
   For each $\hat{c} \in V(H)$ there is exactly one clique $C \in U_{\hat{c}}$ such that there are $n + \gamma_s + 1$ active vertices in $C$ (selector, starter and first two groups).
   Similarly, for each $\hat{e} \in E(H)$ there is exactly one clique $C' \in U_{\hat{e}}$ such that there are $n + \delta_s + 1$ active vertices $C'$.
   Further, for each $\hat{e} = (\hat{c}, \hat{d}) \in E(H)$ the outgoing lower checker $\ell^{-}(\hat{c}, \hat{e})$ is active.
   Thus, the outgoing lower checker validator $\tilde{\ell}^-$ is activated as it has
   \[
    k\cdot (n + \gamma_s + 1) + \bigl|E(H)\bigr|\cdot (n + \delta_s + 2)
   \]
   active neighbors.
 \item[$S_7$: Third groups.]
   There are $\outdeg(c)$ active lower checkers and the active outgoing lower checker validator $\tilde{\ell}^-$ in the neighborhood of~$C_u$.
   Thus, the third group of threshold $n + \Gamma + \outdeg(c) + 1$ in~$C_u$ is activated.
   Similarly, if $\outdeg(d) \neq 0$, the third group in~$C_v$ is activated.
   There is one active lower checker and $\tilde{\ell}^-$ in the neighborhood of $C_e$, thus the third group of $C_e$ is activated as well.
 \item[$S_8$: Outgoing upper checkers.]
   The outgoing upper checker $u^-(c, e')$ of threshold $3n + \gamma_s + \delta_s + 2$ is activated.
   It has $2n - i + \gamma_s + 1$ active neighbors in $C_u$ and $n + i + \delta_s + 1$ active neighbors in $C_e$.
 \item[$S_9$: Fourth groups.]
   There are $\outdeg(c)$ active outgoing lower checkers, $\outdeg(c)$ active outgoing upper checkers and active $\tilde{\ell}^-$ in the neighborhood of~$C_u$.
   Thus, the fourth group in~$C_u$ is activated.
   Similarly, if $\outdeg(d) \neq 0$, the fourth group in~$C_v$ is activated.
   There are two active outgoing checkers and active $\tilde{\ell}^-$ in the neighborhood of $C_e$, thus the fourth group of $C_e$ is activated as well.
 \item[$S_{10}$: Outgoing upper checker validator.]
   The vertex $\tilde{u}^-$ has $2n + \gamma_s + 1$ active neighbors in each $U_{\hat{c}}$ (there is one clique with active selector, starter and first four groups) and $2n + \gamma_s + 1$ active neighbors in each $U_{\hat{e}}$.
   Further, there are $|E(H)|$ active outgoing upper checkers in the neighborhood of $\tilde{u}^-$.
   Thus, the vertex $\tilde{u}^-$ is activated.
 \item[$S_{11}$--$S_{18}$: Groups 5--8, ingoing checkers and ingoing checker validators.]
   In the next 8 rounds ingoing checkers and corresponding checker validators with the groups 5--8 are activated in the similar manner as the group 1--4 with outgoing checkers and checker validators.
   Note that if $c \in V^-$, then in the round 11, the whole clique $C_u$ is activated and in the round 12 the sentry vertices $R_c$ are activated.
   Furthermore, other non-selector vertices in $U_c$ are activated in subsequent rounds 13--18 (for the description of similar process for $c \in V^r$ see rounds~21--30 below).
   Then, nothing is activated in such classes until the round~32.
 \item[$S_{19}$: Ninth groups.]
   There are all~6 active checkers and all 4 active checker validators in the neighborhood of $C_u$, thus the ninth group is activated (if $c \not \in V^-$) and same for $C_v$.
   There are all~4 active checkers all 4 active checker validators in the neighborhood of $C_e$, thus the ninth group of $C_e$ is activated.
 \item[$S_{20}$: Sentries.]
   Up until now we have activated the following vertices:
   \begin{itemize}
     \item The vertex gadget $C_v$ of size $n^3 + 4n + \gamma_s + 1$ and the vertex gadget $C_u$.
     \item The edge gadget $C_e$ of size $n^4 + 4n + \delta_s + 1$.
     \item All checkers, checker validators and class validators.
   \end{itemize}
   Every sentry vertex $r \in R_c$ has $n^3 + 4n + \gamma_s + 1$ active neighbors -- namely, the clique $C_v$.
   Thus, all sentry vertices in $R_c$ are activated.
   Similarly, all sentry vertices in $R_d$ and $R_{e'}$ are activated.
 \item[$S_{21}$--$S_{30}$: Non-selector vertices.]
   Every vertex gadget has~$10 + \gamma_s + \gamma_\ell$ active neighbors (6~checkers, 4~checker validators, $\gamma_\ell$ class validators and $\gamma_s$ sentry vertices).
   Thus, every vertex gadget in $U_c$ (if $c \not \in V^-$) without the selector vertices is activated group by group in~10~rounds (see Figure~\ref{fig:TCTSSOverview}).
   The sentry vertices substitute the selector vertices during this phase.
   The non-selector vertices in $U_d$ are activated in the similar way.
   If $d \in V^+$ then non-selector vertices in $U_d$ are activated in 6 rounds (and then no vertex is activated in $U_d$ till the round $S_{32}$).
   Every edge gadget in $U_{e'}$ has~$8 + \delta_s + \delta_\ell$ active neighbors and is activated in the same way as each vertex gadget.
 \item[$S_{31}$: Special vertices.]
   All $\tau$ non-selector and non-special vertices are active.
   Since $b$ selector vertices are active from the beginning, the special vertices are activated.
 \item[$S_{32}$: Guard.]
   All vertices except selector vertices not in $S$ are active.
   Note that, each selector vertex has threshold its degree minus the number of selector vertices in its neighborhood.
   Thus, all remaining selector vertices are activated and $S_{32} = V(G')$.
\end{description}

Thus, we prove the following theorem.

\begin{theorem}
\label{thm:SubgraphTSS}
 If $G$ contains a colored copy of $H$, then $G'$ with the threshold function $f$ contains a target set of size $b$.
 \qed
\end{theorem}

\subsubsection{Target Set Gives Subgraph}
In this section we want to prove the converse implication of Theorem~\ref{thm:SubgraphTSS}.
Let $S$ be a target set of $G'$ and $|S| \leq b$.
First, we want to prove that basically the only possibility is that $S$ corresponds to a (valid) selection of vertices and edges (i.e., $S$ contains all selector vertices of exactly one vertex or edge gadget in each class).
We prove this in the following lemmata.
\begin{lemma}\label{lem:TCTargetSetContainsOnlyGuards}
The set $S$ consists of $b$ selector vertices.
\end{lemma}
\begin{proof}
Suppose the set $S$ does not contain $b$ selector vertices.
Note that there are at least~$\gamma_s$ special vertices not in $S$.
Let $T$ be the set of all special vertices not in~$S$.
We show that $S$ cannot activate vertices in $T$.
Let $v$ be a selector vertex which is not in $S$.
By our construction, the vertex $v$ has at most~$\gamma_s$ other selector vertices as neighbors.
Since $f(v) \geq \deg(v) - \gamma_s + 1$ (note that $f(v) = \deg(v) - \gamma_s + 1$ for a selector vertex in a vertex gadget and $f(v) = \deg(v) - \delta_s + 1$ for a selector vertex in an edge gadget; furthermore, $\gamma_s \ge \delta_s$), the vertex $v$ is activated after at least one vertex in $T$ is activated.

Thus, any selector vertex which is not in $S$ can be activated only after at least one vertex in $T$ activated.
However, the vertices in $T$ have threshold $\tau + b$.
Therefore, even if all non-selector vertices are activated, at least $b$ selector vertices have to be in $S$.
\end{proof}

Before we prove that in each set $U_z$ (for $z \in V(H) \cup E(H)$) there is a correct number of selector vertices in~$S$ we first observe few facts about the last rounds of the activation process arising from a set of selector vertices~$S$.
In order to give more properties (in fact, more ``local'' ones) the set~$S$ must fulfill it helps us if we understand the last rounds of the activation process as it mainly deals with the globally connected vertices forming the twin cover.
This allows us to prove the sought properties more in a local way.
\begin{lemma}\label{lem:TCHardness:lastRoundsOfTheActivation}
  The following holds in the activation process arising from~$S$:
  \begin{enumerate}
    \item Selector vertices not in~$S$ are activated in the last round of the activation process.
    \item Special vertices are activated in the last but one round.
  \end{enumerate}
\end{lemma}
\begin{proof}
  The threshold of special vertices is~$\tau + b$, thus they can be activated when all $\tau$ non-selector vertices are activated.
  The selector vertices have their threshold set exactly to the number of their non-selector neighbors and they are connected to the special vertices.
  Thus, selectors not in $S$ have to be activated after the special vertices.
  We conclude the activation of these two groups of vertices takes place in the last two rounds of the activation process arising from~$S$.
\end{proof}

In particular, the above lemma implies that even if a single (group of) sentry vertices is not activated before the special vertices and the selector vertices not in $S$, the activation process arising from~$S$ terminates before all of the vertices are active and thus~$S$ is not a target set.
Let $z \in V(H) \cup E(H)$.
For a clique $C$ in $U_z$ we denote the non-empty group with the highest number as the \emph{last group} (for $z \in V^-$ the last group is the fifth group and for $z \not \in V^-$ the last group is the ninth group).
\begin{lemma}
\label{lem:SentryActivation}
 Let $z \in V(H) \cup E(H)$.
 In the activation process arising from~$S$, the sentry vertices in $R_z$ can be activated after the last group of at least one clique in $U_z$ is activated.
\end{lemma}
\begin{proof}
 Suppose $z \in V(H)$.
 Then, the threshold of vertices in $R_z$ is larger than $n^3$.
 The size of each clique $C_v$ in $U_z$ without the last group is $4n + \gamma_s + 1$ and there are $n - 1$ cliques in $U_z$.
 Thus, even if all cliques in $U_z$ without the last groups are activated, the sentry vertices in $R_z$ can not be activated, since $n^3 > (4n + \gamma_s + 1)(n-1)$ for $n \geq 6$ and the only neighborhood of $R_z$ are the special vertices and vertices in $U_z$.

 The proof for $z \in E(H)$ is analogous.
 There are at most $n^2$ cliques in $U_z$ and each cliques without the last group has $4n + \delta_s + 1$ vertices.
 Since the thresholds of sentry vertices in $R_z$ is at least $n^4$, the lemma follows.
\end{proof}

Now, we are ready to finish the first part of the proof -- namely, that the selection indeed corresponds to a selection of vertices and edges in the original instance of~CSI.

\begin{lemma}
\label{lem:GuardsOnlyForOneObject}
 For each $c \in V(H)$ there is exactly one vertex $v \in V_c$ such that $V(C_v) \cap S$ contains exactly $\gamma_s$ selectors in the vertex gadget $C_v$.
 For each $(c,d) \in E(H)$ there is exactly one edge $e \in E_{cd}$ such that $V(C_e) \cap S$ contains exactly $\delta_s$ selectors in the edge gadget $C_e$.
\end{lemma}
\begin{proof}
 We prove the lemma for edges $E(H)$ (the proof for vertices $c \in V(H)$ follows in a similar way but we have to distinguish cases if $c$ is in $V^-$ or not).
Suppose there is $e' = (c,d) \in E(H)$ such that for all $e \in E_{cd}$ holds  $|V(C_e) \cap S| < \delta_s$.
 We prove the sentry vertices $R_{e'}$ are never activated.
 Let us assume all~4~checker, all 4 checker validators and the class validator $L_{e'}$ are activated.
 Thus, in each $C_e$ all groups but selectors not in $S$ and the last one are activated in 9 rounds.
 However, the last group in $C_e$ can not be activated.
 The last group has the following active neighbors:
 \begin{itemize}
  \item $4n$ vertices in the groups 1--8 in $C_e$.
  \item 1 starter vertex in $C_e$.
  \item Strictly less than $\delta_s$ selectors in $C_e$.
  \item $\delta_\ell$ class validators $L_{e'}$.
  \item 4 checkers and 4 checker validators.
 \end{itemize}
 In total the last group has strictly less than $4n + 1 + \delta_s + \delta_\ell + 8 = 4n + \Delta + 8$, which is the threshold of the last group.
 Thus, the last group in any $C_e$ in $U_{e'}$ can be activated after some sentry vertex in $R_{e'}$, or special vertex or selector vertex in $C_e \setminus S$ is activated.
 However, this cannot happen due to Lemma~\ref{lem:TCHardness:lastRoundsOfTheActivation} and~\ref{lem:SentryActivation}.

 Thus, for each $z \in V(H) \cup E(H)$ there is at least one clique $C \in U_z$ such that $|C \cap S| \geq \delta_s$ (or $\gamma_s$ if $z \in V(H)$).
 Since the size of $b$ is $\gamma_s\cdot k + \delta_s\cdot |E(H)|$, these inequalities are actually tight and the lemma follows.
\end{proof}
By Lemma~\ref{lem:GuardsOnlyForOneObject}, the set $S$ contains:
\begin{enumerate}
 \item For each vertex $c \in V(H)$ exactly~$\gamma_s$ selector vertices in exactly one $C_v$ in $U_c$.
 \item For each edge $e' \in E(H)$ exactly~$\delta_s$ selector vertices in exactly one $C_{e}$ in $U_{e'}$.
\end{enumerate}
Thus, we can define mappings $\phi\colon V(H) \to V(G)$ and $\phi_E\colon E(H) \to E(G)$ such that
\begin{align*}
 &\phi(c) = v \text{ if } v \in V_c\text{ and }|C_v \cap S| = \gamma_s \text{ and } \\
 &\phi_E\bigl((c,d)\bigr) = e \text{ if } e \in E_{cd} \text{ and }|C_e \cap S| = \delta_s \,.
\end{align*}
It remains to prove the mapping $\phi$ is a solution to the instance of CSI.
The next lemma shows that $\phi$ maps edges of $H$ to the edges of $G$.

\begin{lemma}
  For each $(c,d) \in E(H)$ it holds that $\phi_E\bigl((c,d)\bigr) = \bigl\{\phi(c),\phi(d)\bigr\}$.
\end{lemma}
\begin{proof}
  We prove this by a more detailed examination of the activation process arising from the set~$S$.
  Let $e = \phi_E(e')$ for $e' = (c,d) \in E(H)$. % for which we have $u' \in V_c$ and $v' \in V_d$.
  Let~$u = \phi(c)$ and $v = \phi(d)$.
  We will prove that $u,v \in e$.

  We claim the outgoing checkers connected to $C_e$ are activated if and only if $u \in e$.
  Let $u = v^c_i$ for some~$i \in \mathbb{N}$ and let~$e \cap \{v^c_1, \ldots, v^c_{n-1} \} = v^c_j$.
  We will prove that $i = j$.
  We observe that the activation process begins by following the lines of the proof of Theorem~\ref{thm:SubgraphTSS}.
  In the first round the starter vertex in~$C_u$ is activated, then the class validator vertices are activated and then the first group of vertices in~$C_u$ is activated in the third round.
  Thus, there are $i + \gamma_s + 1$ active vertices in~$U_c$ by now.
  Similarly for $e$ -- there are $n-j + \delta_s + 1$ vertices activated in~$C_e$.
  The vertices in the neighborhood of~$C_e$ with the lowest threshold is the outgoing lower checker $\ell^-(c,e')$.

  Now suppose that $n-j+i < n$, i.e., the checker $\ell(c, e')$ can not activated in the round 4.
  Consequently, the second groups of $C_e$ and $C_u$ are not activated in the round 5.
  There can be some cliques $C$ in $U_{z}$ for some $z \in V(H) \cup E(H)$ such that the second group of $C$ is activated in the round 5 (if an appropriate number of outgoing lower checkers connected to $C$ were activated in the round 4).
  We claim in this moment the activation process stops.
  In the round 6 the outgoing lower checker validator $\tilde{\ell}^-$ should be activated.
  However, to activate $\tilde{\ell}^-$ the selector, starter and first two groups of one clique $C$ in $U_z$ for each $z \in V(H) \cup E(H)$ and all outgoing lower checkers have to be active.
  The second group of $C_e$ and $C_u$ and the checker $\ell(c,e')$ are not active, thus $\tilde{\ell}^-$ can not be activated.
  To continue with the activation process, to activate the third group of any gadget clique $C$ the vertex $\tilde{\ell}^-$ needs to be activate.
  Therefore, the activation process stops in the round 5.

  On the other hand, if $i + n-j \ge n$ (i.e., $j \le i$), the lower checker vertex is activated.
  The activation process proceed as it is described in the previous section.
  The outgoing upper checker $u^-(c, e')$ should be activated in the round 8.
  If it is not activated, then the activation process stops latest in the round 9 by the same argument as we presented for the outgoing lower checker.
  Otherwise, if $u^-(c,e')$ is activated, then $2n - i + n + j \geq 3n$ (i.e., $j \geq i$).
  Thus, we prove that if $S$ is a target set, then $u \in e$.

  The argument for~$v \in e$ is similar, only ingoing checker vertices $\ell(d,e')$ and $u(d,e')$ and ingoing checker validators $\tilde{\ell}^+$ and $\tilde{u}^+$ are used.
\end{proof}
This finishes the proof of Theorem~\ref{thm:TSSisWwrtTC}.

\section{Conclusions}
We have generalized ideas of previous works~\cite{BenZwiHLN11,NichterleinNUW13} for the \TSS problem.
The presented results give a new idea how to encode selecting vertices and edges in the {\sc Colored Subgraph Isomorphism} problem for showing {\sf W[1]}-hardness and ETH-lower bound.
In particular, only few problems are known to be \Wh{1} when parameterized by neighborhood diversity -- which is the case for the \TSS problem.
%More importantly, our reductions can be used to obtain further hardness results~\cite{}.
%
We are not aware of other positive results concerning the number of different thresholds instead of the threshold upper-bound.

We would like to point out that in our proofs of \W{1}-hardness the activation process terminates after constant number of rounds (independent of the parameter value and the size of the input graph).
This is true also for all reductions given by Chopin et al.~\cite{ChopinNNW14}.

\bibliography{bib/LP.bib,bib/comsoc.bib,bib/denseGraphs.bib,bib/monographs.bib}

\end{document}